\newcommand{\bb}{\mathrm{\bf b}}
\newcommand{\bff}{\mathrm{\bf f}}
\newcommand{\bA}{\mathrm{\bf A}}
\newcommand{\ba}{\mathrm{\bf a}}
\newcommand{\bc}{\mathrm{\bf c}}
\newcommand{\bd}{\mathrm{\bf d}}
\newcommand{\bu}{\mathrm{\bf u}}
\newcommand{\bv}{\mathrm{\bf v}}
\newcommand{\bB}{\mathrm{\bf B}}
\newcommand{\bC}{\mathrm{\bf C}}
\newcommand{\bD}{\mathrm{\bf D}}
\newcommand{\bE}{\mathrm{\bf E}}
\newcommand{\bF}{\mathrm{\bf F}}
\newcommand{\bW}{\mathrm{\bf W}}
\newcommand{\bJ}{\mathrm{\bf J}}
\newcommand{\bG}{\mathrm{\bf G}}
\newcommand{\bH}{\mathrm{\bf H}}
\newcommand{\bI}{\mathrm{\bf I}}
\newcommand{\bg}{\mathrm{\bf g}}
\newcommand{\bP}{\mathrm{\bf P}}
\newcommand{\bQ}{\mathrm{\bf Q}}
\newcommand{\bK}{\mathrm{\bf K}}
\newcommand{\bR}{\mathrm{\bf R}}
\newcommand{\bU}{\mathrm{\bf U}}
\newcommand{\bX}{\mathrm{\bf X}}
\newcommand{\bY}{\mathrm{\bf Y}}
\newcommand{\bxi}{\mbox{\boldmath $\xi$}}
\newcommand{\bXi}{\mbox{\boldmath $\Xi$}}
\newcommand{\bPi}{\mbox{\boldmath $\Pi$}}
\newcommand{\bLambda}{\mbox{\boldmath $\Lambda$}}
\newcommand{\blambda}{\mbox{\boldmath $\lambda$}}
\newcommand{\bgamma}{\mbox{\boldmath $\gamma$}}
\newcommand{\bGamma}{\mbox{\boldmath $\Gamma$}}
\newcommand{\bSigma}{\mbox{\boldmath $\Sigma$}}
\newcommand{\bOmega}{\mbox{\boldmath $\Omega$}}
\newcommand{\bPhi}{\mbox{\boldmath $\Phi$}}
\newcommand{\bepsilon}{\mbox{\boldmath $\epsilon$}}
\newcommand{\bDelta}{\mbox{\boldmath $\Delta$}}
\newcommand{\hF}{\widehat \bF}
\newcommand{\tF}{\widetilde \bF}
\newcommand{\tG}{\widetilde \bG}
\newcommand{\tU}{\widetilde \bU}
\newcommand{\cov}{\mathrm{cov}}
\newcommand{\Var}{\mathrm{Var}}
\newcommand{\tr}{\mathrm{tr}}
\newcommand{\var}{\mathrm{var}}
\newcommand{\beq}{\begin{eqnarray*}}
\newcommand{\eeq}{\end{eqnarray*}}
\numberwithin{equation}{section}
\theoremstyle{plain}
\newtheorem{thm}{Theorem}[section]
\newtheorem{lem}{Lemma}[section]
\newtheorem{prop}{Proposition}[section]
\newtheorem{assum}{Assumption}[section]
\theoremstyle{definition}
\def\@biblabel#1{\hspace*{-\labelsep}}
\begin{document}

 \title{
 \bf Heterogeneity Adjustment with
 Applications to Graphical Model Inference
 }
\author{Jianqing Fan\thanks{Address: Department of ORFE, Sherrerd Hall, Princeton University, Princeton, NJ 08544, USA, e-mail: \textit{jqfan@princeton.edu},
\textit{hanliu@princeton.edu}, \textit{weichenw@princeton.edu},  \textit{ziweiz@princeton.edu}. This project was supported by National Science Foundation grants DMS-1206464, DMS-1406266 and 2R01-GM072611-12.}$\;$, Han Liu, Weichen Wang, and Ziwei Zhu
\medskip\\{\normalsize Department of Operations Research and Financial Engineering,  Princeton University}}

\date{}

\maketitle

\sloppy

\onehalfspacing

\begin{abstract}
Heterogeneity is an unwanted variation when analyzing aggregated  datasets from multiple sources. Though different methods have been proposed for heterogeneity adjustment, no systematic theory exists to justify these methods. In this work, we propose a  generic  framework named ALPHA (short for \underline{A}daptive \underline{L}ow-rank \underline{P}rincipal \underline{H}eterogeneity \underline{A}djustment) to model, estimate, and adjust heterogeneity from the original data.  Once the heterogeneity is adjusted, we are able to remove the biases of batch effects and to enhance the inferential power by aggregating the  homogeneous residuals from multiple sources. Under a pervasive assumption that the latent heterogeneity factors simultaneously affect a large fraction of observed variables, we provide  a rigorous theory to justify the proposed framework.
Our framework also allows the incorporation of informative covariates and appeals to  the `Bless of Dimensionality'.  As an illustrative application of this generic framework, we consider a problem of estimating high-dimensional precision matrix for  graphical model inference based on multiple datasets.  We also provide thorough numerical studies on both synthetic datasets and a brain imaging dataset to demonstrate the efficacy of the developed theory and methods.
	
\end{abstract}

\textbf{Keywords:} Heterogeneity, Batch effect, Graphical model inference, Semiparametric factor model, Principal component analysis, Brain image network.


\onehalfspacing

\section{Introduction}

Aggregating and analyzing heterogeneous data is one of the most fundamental challenges in scientific data analysis. In particular, the  intrinsic heterogeneity across multiple data sources  violates the ideal  `independent and identically distributed' sampling assumption and may produce misleading results if it is ignored. For example, in genomics, data heterogeneity is ubiquitous and referred to as either `batch effect' or `lab effect'. Microarray gene expression data obtained from different labs at different processing dates may contain systematic variability.  More specifically, \cite{LSB10} analyzed a microarray data from a bladder cancer study and showed that the gene expressions vary significantly across different batches even  after data normalization. Furthermore, \cite{LSt07} pointed out that heterogeneity across multiple data sources may be caused by unobserved factors that have confounding effects on the variables of interest, generating spurious signals. In finance, it is also known that asset returns are driven by varying market regimes and economy status,  which can be regarded as a temporal batch effect.  Later in this paper, we will  use a brain imaging dataset to show similar heterogeneity effect. Therefore, to properly analyze data aggregated from multiple sources, we need to carefully model and adjust the heterogeneity effect.

Modeling and estimating heterogeneity effect is challenging for two reasons. (i) Typically, we can only access a limited number of samples from an individual group, given the high cost of biological experiment, technological constraint or fast economy regime switching. (ii) The dimensionality can be much larger than the total aggregated number of samples. The past decade has witnessed the development of many methods for adjusting batch effect in high throughput genomics data.
See, for example, \cite{SSH08, ABB00,LSt07, JLR07}.
Though progresses have been made, most of the aforementioned papers focus on the practical side and none of them has a systematic theoretical justification. In fact,  most of these methods are developed in a case-by-case fashion and are only applicable to certain problem domains. Thus, there is still a gap that exists between practice and theories.

To bridge this gap, we propose a generic theoretical framework to model, estimate, and adjust heterogeneity across multiple datasets. Formally, we assume the data come from $m$ different sources: the $i^{th}$ data source contributes $n_i$ samples, each having $p$ measurements such as gene expressions of an individual or stock returns of a day. To explicitly model heterogeneity, we assume that the batch-specific latent factors $\bff_t^{i}$ influence the observed data $X_{jt}^{i}$ in batch $i$ ($j$ indexes variables; $t$ indexes samples) as in the approximate factor model:
\begin{equation} \label{eq1.1}
    X_{jt}^{i} = {\blambda_j^{i}}' \bff_t^{i} + u_{jt}^{i}, \;\; 1\le j \le p, 1 \le t \le n_i, 1\le i \le m,
\end{equation}
where $\blambda_j^{i}$ is unknown factor loading for variable $j$ and $u_{jt}^i$ is true uncorrupted signals. The linear term ${\blambda_j^{i}}' \bff_t^{i}$ models the heterogeneity effect. We assume that $\bff_t^i$ is independent of $u_{jt}^i$ and $\bu_t^i = (u_{1t}, \dots, u_{pt})'$ shares the same common distribution with mean ${\bf 0}$ and covariance $\bSigma_{p\times p}$ across all data sources.
In the matrix-form model, \eqref{eq1.1} can be written as
\begin{equation} \label{eq1.2}
    \bX^i = \bLambda^i {\bF^i}' + \bU^i,
\end{equation}
where $\bX^i$ is a $p \times n_i$ data matrix in the $i^{th}$ batch, $\bLambda^i$ is a $p \times K^i$ factor loading matrix with $\blambda_j^i$ in the $j^{th}$ row, $\bF^i$ is an $n_i \times K^i$ factor matrix and $\bU^i$ is a signal matrix of dimension $p \times n_i$. Here, we allow the number of latent factors $K^i$ to depend on batch $i$.

To see how model \eqref{eq1.2} models the heterogeneity, we assume $\bff_t^i \sim N({\bf 0}, \bI)$ and $\bu_t^i \sim N({\bf 0}, \bSigma)$.  Then, the $t^{th}$ sample $\bX_t^i$, which is the $t^{th}$ column of $\bX^i$, follows
\begin{equation} \label{eq1.3}
    \bX_t^i \sim N({\bf 0}, \bLambda^i {\bLambda^i}' + \bSigma).
\end{equation}
Therefore, the heterogeneity effect is modeled as a low rank component $ \bLambda^i {\bLambda^i}'$ of the population covariance matrix of $\bX_t^i$. Later, we will show that, under a pervasive assumption, the heterogeneity component can be estimated by directly applying principal component analysis (PCA) or Projected-PCA, which is more accurate when there are sufficiently informative covariates $\bW^i$ \citep{FLW14}. Let $ \widehat{\bLambda^i} {\widehat{\bF^i}}' $ be the estimated heterogeneity component.  We denote $\widehat{\bU^i} = \bX_t^i -\widehat{\bLambda^i} {\widehat{\bF^i}}' $ to be the heterogeneity adjusted signal, which can be treated as homogeneous across different datasets and thus can be combined together for downstream statistical analysis. This whole framework of heterogeneity adjustment is termed ALPHA (short for \underline{A}daptive \underline{L}ow-rank \underline{P}rincipal \underline{H}eterogeneity \underline{A}djustment) and is schematically shown in Figure \ref{alpha}.

\begin{figure}[htp]
	\centering
	\includegraphics[scale=.21]{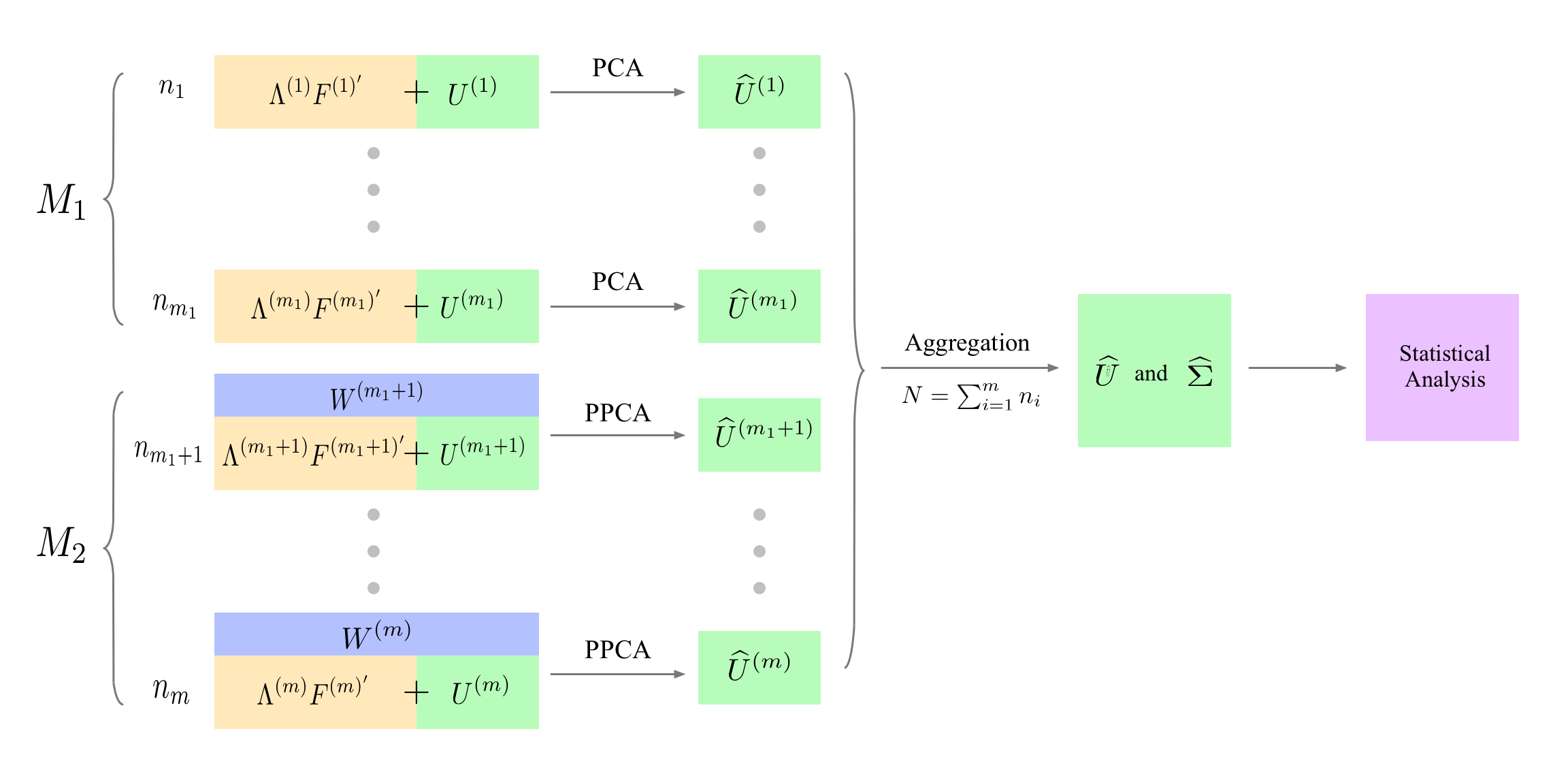}
	\caption{\small Schematic illustration of ALPHA: Depending whether we can find some sufficiently informative covariates $\bW$, we implement principal component analysis (PCA) or Projected-PCA (PPCA) methods (labeled respectively $M_1$ and $M_2$) to remove the heterogeneity effects $\bLambda \bF'$ for each batch of data.  This decision was made adaptively by a test statistic. After removing the unwanted variations, the homogeneous data $\{\bU^{(i)}\}_{i=1}^m$ are aggregrated
for further analysis.}
	\label{alpha}
\end{figure}

The proposed ALPHA framework is fully generic and applicable to almost all kinds of multivariate analysis of the combined, heterogeneity adjusted datasets.  As an illustrative example,  in this paper, we focus on the problem of Gaussian graphical model inference based on multiple datasets.  It is a powerful tool to explore complex dependence structure among variables  $\bX = (X_1,\dots,X_p)' \sim N({\bf 0}, \bSigma)$. The sparsity pattern of the precision matrix $\bOmega = \bSigma^{-1}$ encodes the information of an undirected graph $G = (V, E)$ where $V$ consists of $p$ vertices corresponding to $p$ variables in $\bX$ and $E$ describes their dependence relationship. To be specific, $V_i$ and $V_j$ are linked by an edge if and only if $\Omega_{ij} \ne 0$, meaning that $X_i$ and $X_j$ are dependent conditioning on the rest variables. For heterogeneous data across $m$ data sources, we need to first adjust for heterogeneity using the ALPHA framework. The idea of covariate-adjusted precision matrix estimation has been studied by \cite{CLL12}, but the factor model they used assumes observed factors and no heterogeneity issue, i.e., $m=1$.

A significant amount of literature has focused on the estimation of the precision matrix $\bOmega$ for graphical models for homogeneous data. 
\cite{YLi07}, \cite{BEd08}, \cite{FHT08} developed the Graphical Lasso method using the $L_1$ penalty and \cite{LFa09} and \cite{SPZ12} used a non-convex penalty. Furthermore, \cite{RWR11} and \cite{LWo13} studied the theoretical properties under different assumptions.
Estimating $\bOmega$ can be equivalently reformulated as a set of node-wise sparse linear regression that utilizes Lasso or Danzig selector for each node \citep{MBu06,Yua10,CLL11}.
To relax the assumption of Gaussian data,  \cite{LLW09} and \cite{LiuHanZha12} extend the graphical model to the case of semiparametric Gaussian copula and transelliptical family. Under the ALPHA framework, the adjusted data $\widehat {\bU^i}$ can be combined to construct an estimator for the  inverse matrix $\bOmega$ by the above   methods.

The rest of the paper is organized as follows. Section \ref{sec2} lays out a basic problem setup and necessary assumptions. We model the heterogeneity by a semiparametric factor model. Section \ref{sec3} introduces the ALPHA methodology for heterogeneity adjustment. Two main methods of PCA and Projected-PCA will be introduced for adjusting the factor effects under different settings. A guiding rule of thumb is also proposed to determine which method is more appropriate in the real data analysis. The heterogeneity-adjusted data will be combined to provide valid graph estimation in Section \ref{sec4}. The CLIME method of \cite{CLL11} is used for estimating precision matrix, although other related methods are also applicable. Some synthetic simulations and a real dataset are analyzed to demonstrate the proposed framework in Section \ref{sec5}. Section \ref{sec6} contains some further discussions. All the proofs are relegated to the Appendix.

\section{Problem Setup} \label{sec2}

To more efficiently use the external covariate information in removing heterogeneity effect,  we first present a semiparametric factor model.  Then, based on whether the collected external covariates have explaining power on factor loadings, we discuss two different regimes where PCA or Projected-PCA (PPCA) should be used. We will state the conditions under which these methods can be formally justified.

\subsection{Semiparametric factor model}

We assume that for subgroup $i$, we have $d$ external covariates $\bW_j^i = (W_{j1}^i, \dots, W_{jd}^i)'$ for variable $j$. In stock returns, these can be attributes of a firm; in brain imaging, these can be the physical locations of voxels.  We assume that these covariates have some explanatory power on the loading parameters $\blambda^{i}_{j}$ in (\ref{eq1.1}) so that it can be further modeled as $\blambda_j^i = \bg^i(\bW_j^i) + \bgamma_j^i$, where $\bg^i(\cdot)$ is the external covariate effects on $\blambda^{i}_{j}$ and $\bgamma^i_j$ is the part that can not be explained by the covariates.  Thus,
model (\ref{eq1.1}) can be written as
\begin{equation} \label{eq2.1}
X^{i}_{jt} = {\blambda^{i}_{j}}' \bff_{t}^i+u^{i}_{jt} =(\bg^i(\bW_j^i) + \bgamma_j^i)' \bff_t^{i} + u_{jt}^{i}.
\end{equation}
Model (\ref{eq2.1}) does not put much restriction.  If $\bW_j^i$ is not informative (i.e., ${\blambda^{i}_{j}}$ does not depend on $\bW_j^i$), then
$\bg^i(\cdot) = 0$, the model reduces to a regular factor model.  In a matrix form,  model \eqref{eq2.1} can be written as
\begin{equation} \label{eq2.2}
\bX^i = \bLambda^i {\bF^i}' + \bU^i\;\; \text{where} \;\; \bLambda^i = \bG^i(\bW^i) + \bGamma^i, \;\; 1\le i \le m.
\end{equation}
In \eqref{eq2.2},  $\bG^i(\bW^i)$ and  $\bGamma^i$ are $p \times K^i$ component  matrices of the factor loading $\bLambda^i$. More specifically, $g_k^i(\bW_j^i)$ and $\gamma_{jk}$ are the $(j,k)^{th}$ element of $\bG^i(\bW^i)$ and $\bGamma^i$ respectively. Expression (\ref{eq2.2}) suggests that the observed data can be decomposed into a low-rank heterogeneity term $\bLambda^i {\bF^i}'$  and a homogeneous signal  term $ \bU^i$.

Letting $\bu_t^i$ be the $t^{th}$ column of $\bU^i$, we assume all $\bu_t^i$'s  share the same distribution for any $t \le n_i$ and for all subgroups $i\le m$ with $\mathbb E[\bu_{t}^i] = {\bf 0}, \Var(\bu_{t}^i) = \bSigma$. Our goal is to recover $\bU^i$ from the observation $\bX^i$ and combine all the estimated $\bU^i$'s together to enhance the inferential power of $\bSigma$ or $\bOmega=\bSigma^{-1}$.

There has been a large literature on factor models in econometrics \citep{Bai03, BN13, FanLiaMin13, SW02}, machine learning \citep{CaiMaWu13, NegWai11, CanLiMaWri11} and  random matrix theories \citep{JohLu09, Pau07, SSZM13, FanWan15}. We refer the interested readers to those relevant papers and the references therein. However, none of these models incorporate the external covariate information. The semiparametric factor model \eqref{eq2.1} was first proposed  by \cite{CL07} and further investigated by \cite{CMO12, FLW14}. Using sufficiently informative external covariates,
we are able to
more accurately estimate the factors and loadings, and hence yield better heterogeneous adjustment.

\subsection{Modeling assumptions and general methodology}

In this subsection, we explicitly list all the required modeling assumptions. We start with an introduction of the data generating processes.

\begin{assum}[Data Generating Process] \label{ass2.2}
(i) $n_i^{-1} {\bF^i}'\bF^i = \bI$. \\
(ii) $\{\bu_t^i\}_{t \le n_i, i \le m}$ are independent within and between subgroups. $\bu_t^i$'s are identically sub-Gaussian distributed with mean zero and variance $\bSigma$ across all subgroups and are independent of $\{\bW_j^i, \bff_t^i\}$. $\{\bff_t^i\}_{t \le n_i}$ is a stationary process, but with arbitrary temporal dependency.
\\
(iii) There exists a constant $C_0 > 0$ such that
$\|\bSigma\|_2 \le C_0$. \\
(iv) The tail of the factors is sub-Gaussian, i.e.,
 $\exists C_1, C_2 > 0$ such that for $k \le K^i, t \le n_i$,
$$P(|f_{tk}^i|>t)\leq C_1\exp(-C_2 t^2).$$
\end{assum}

The above set of assumptions are commonly used in the literature, see \cite{BN13, FLW14}. We omit detailed discussions here.


Based on whether the external covariates are informative, we specify two regimes, each of which requires some additional technical conditions.

\subsubsection{Regime 1: External covariates are not informative}
For the case that $\bG^i(\bW^i) ={\bf 0}$, the external covariates do not have explanatory power on the factor loadings $\bLambda^i$ and model (\ref{eq2.2}) reduces to the traditional factor model, extensively studied in econometrics \citep{Bai03, SW02, Ona12}.
PCA will be employed in Section \ref{sec3.1} to estimate the heterogeneous effect.
It requires the following assumptions.

\begin{assum} \label{ass2.1.1}
(i) (Pervasiveness) There are two positive constants $c_{\min}$ and $c_{\max}$ so that
$$
c_{\min}<\lambda_{\min}(p^{-1}{\bLambda^i}'\bLambda^i)<\lambda_{\max}(p^{-1}{\bLambda^i}'\bLambda^i)<c_{\max}, \quad a.s. \quad \forall i.\\
$$
(ii) $\max_{k\leq K^i, j\leq p} |\lambda_{jk}^i| = O_P(\sqrt{\log p})$.
\end{assum}

The first condition is common and essential in the factor model literature (e.g., \cite{SW02}). It requires the factors  to be strong enough such that the covariance matrix $\bLambda^i \cov(\bff_t^i) \bLambda^i +\bSigma$ has spiked eigenvalues. This is trivially true if $\{\blambda_j^i\}_{j=1}^p$'s can be regarded as random samples from a population with nondegenerate sample covariance matrix \citep{FanLiaMin13}.  The second condition is technical, and a relaxation of bounded requirement in the literature \citep{FanLiaMin13, BN13}.

\subsubsection{Regime 2: External covariates are informative}
When covariates are informative, we will employ the PPCA \citep{FLW14} to better estimate the heterogeneous effect.  It requires the following assumptions.

\begin{assum} \label{ass2.1.2}
(i) (Pervasiveness) There are two positive constants $c_{\min}$ and $c_{\max}$ so that
$$
c_{\min}<\lambda_{\min}(p^{-1}\bG^i(\bW^i)'\bG^i(\bW^i))<\lambda_{\max}(p^{-1}\bG^i(\bW^i)'\bG^i(\bW^i))<c_{\max}, \quad a.s. \quad \forall i.\\
$$
(ii) $\max_{k\leq K^i, j\leq p}Eg_k(\bW_j^i)^2<\infty$.
\end{assum}

This assumption is parallel to Assumption \ref{ass2.1.1} (i).  Pervasiveness is trivially satisfied if $\{\bW_j^i\}_{j\leq p}$ are independent and $\bG^i$ is sufficiently smooth.


\begin{assum} \label{ass2.1.3}
(i) $E\gamma_{jk}^i=0$, $\max_{k\leq K^i, j\leq p} |\gamma_{jk}^i| = O_P(\sqrt{\log p})$. \\
(ii) Write $\bgamma_j^i=(\gamma_{j1}^i,...,\gamma_{jK}^i)'$. We assume $\{\bgamma_{j}^i\}_{j\leq p}$ are independent of $\{\bW_j^i\}_{j\leq p}$. \\
(iii) Define
$\nu_p=\max_{i \leq m} \max_{k\leq K^i} p^{-1} \sum_{j\le p}\var(\gamma_{jk}^i) < \infty$. We assume $$\max_{k \le K^i, j\leq p}\sum_{j'\leq p}|E\gamma_{j'k}^i\gamma_{jk}^i |=O(\nu_p).$$
\end{assum}

Condition (i) is parallel to Assumption \ref{ass2.1.1} (ii) whereas Condition (ii) is natural since $\bGamma^i$ can not be explained by $\bW^i$. Condition (iii) imposes cross-sectional weak dependence of $\bgamma_j^i$, which is much weaker than assuming independent and identically distributed $\{\bgamma_j^i\}_{j\leq p}$. This condition is mild as main serial dependency has been taken care of by $g_k(\cdot)$'s.


\section{The ALPHA Framework} \label{sec3}

We introduce the ALPHA framework for heterogeneity adjustment. Methodologically,  for each sub-dataset we aim to estimate the heterogeneity component and subtract it from the raw data.  Theoretically, we aim to obtain the explicit rates of convergence for both the corrected homogeneous signal and its sample covariance matrix. Those rates will be useful when aggregating the homogeneous residuals from multiple sources.

This section covers details for heterogeneity adjustments under both regimes that $\bG^i(\cdot)=0$ and $\bG^i(\cdot)\neq 0$:  they correspond to estimating $\bU^i$ by either PCA or Projected-PCA.
From now on, we drop the superscript $i$ whenever there is no confusion as we focus on the $i^{th}$ data source.  We will use the notation $\widehat{\bF}$ if $\bF$ is estimated by PCA and $\widetilde{\bF}$ if estimated by PPCA.  This convention applies to other related quantities such as $\widehat{\bU}$ and $\widetilde{\bU}$, the heterogeneity-adjusted estimator.  In addition, we use notations such as $\check{\bF}$ and $\check{\bU}$ to denote the final estimators, which are $\widehat{\bF}$ and $\widehat{\bU}$ if PCA is used, and
$\widetilde{\bF}$ and $\widetilde{\bU}$ if PPCA is used.


Estimators for latent factors under regimes 1 and 2 satisfy $n^{-1} \check{\bF}'\check{\bF} = \bI$, which corresponds to normalization in Assumption~\ref{ass2.2} (i). By the principle of least squares, the residual estimator of $\bU$ then admits the form
\begin{equation} \label{eq3.1}
\check\bU = \bX\Big(\bI - \frac{1}{n} \check{\bF}\check{\bF}'\Big).
\end{equation}
It possesses the following properties.

\begin{thm} \label{th3.1}
For any $K$ by $K$ matrix $\bH$ such that $\|\bH\| = O_P(1)$, if $\log p = O(n)$,
$$
\check \bU - \bU = -\frac{1}{n} \bU\bF\bF' + \bPi \,,
$$
where $\|\bPi\|_{\max} = O_P(\sqrt{\log n}/n \cdot (\|\bF'(\check{\bF} - \bF\bH)\|_{\max}\|\bLambda\|_{\max} + \|\bU(\check{\bF} - \bF\bH)\|_{\max}) + \|\check{\bF} - \bF\bH\|_{\max}\|\bLambda\|_{\max} + \sqrt{\log n} \cdot \|\bH\bH' - \bI\|_{\max} \|\bLambda\|_{\max} )$; and furthermore
$$
\check\bU \check\bU'  - \bU \bU' = -\frac{1}{n} \bU\bF\bF'\bU' + \bDelta \,,
$$
where $\|\bDelta\|_{\max} = O_P(\|\bU(\check{\bF} - \bF\bH)\|_{\max} \|\bLambda\|_{\max} + \|\bU(\check{\bF} - \bF\bH)\|_{\max}^2 + \|\bF'(\check{\bF} - \bF\bH)\|_{\max}\|\bLambda\|_{\max}^2 + n\|\bH\bH' - \bI\|_{\max}\|\bLambda\|_{\max}^2)$.
\end{thm}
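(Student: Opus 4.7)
The plan is to reduce both assertions to a purely algebraic identity and then match residual terms to the given templates using only elementary entrywise inequalities and sub-Gaussian maximal bounds. Starting from \eqref{eq3.1} and using $\bX = \bLambda\bF' + \bU$,
\[
\check\bU - \bU \;=\; \bLambda\bF'\bP \,-\, \tfrac{1}{n}\bU\check\bF\check\bF', \qquad \bP := \bI - \tfrac{1}{n}\check\bF\check\bF'.
\]
Writing $\bD := \check\bF - \bF\bH$, substituting $\check\bF = \bF\bH + \bD$ twice and invoking $n^{-1}\bF'\bF = \bI$ produces exactly the target leading term $-n^{-1}\bU\bF\bF'$ plus six residual summands:
\[
\bLambda(\bI-\bH\bH')\bF',\; -\bLambda\bH\bD',\; -\tfrac{1}{n}\bLambda\bF'\bD\check\bF',\; \tfrac{1}{n}\bU\bF(\bI-\bH\bH')\bF',\; -\tfrac{1}{n}\bU\bF\bH\bD',\; -\tfrac{1}{n}\bU\bD\check\bF'.
\]

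Each residual is then bounded by repeatedly applying $|(\bA\bB)_{ij}| \le K\|\bA\|_{\max}\|\bB\|_{\max}$ (valid because $K$ is fixed) together with $\|\bF\|_{\max} = O_P(\sqrt{\log n})$, $\|\bU\bF\|_{\max} = O_P(\sqrt{n\log p})$ (sub-Gaussian maximal inequalities under Assumption~\ref{ass2.2}), $\|\bH\|_{\max} = O_P(1)$, and $\|\check\bF\|_{\max} \le \|\bH\|\|\bF\|_{\max} + \|\bD\|_{\max} = O_P(\sqrt{\log n})$. The matching is: the second summand yields $\|\bD\|_{\max}\|\bLambda\|_{\max}$; the third and fifth combine into $(\sqrt{\log n}/n)\|\bF'\bD\|_{\max}\|\bLambda\|_{\max}$; the sixth gives $(\sqrt{\log n}/n)\|\bU\bD\|_{\max}$; the first directly delivers $\sqrt{\log n}\|\bLambda\|_{\max}\|\bH\bH'-\bI\|_{\max}$; and the fourth, of order $\sqrt{(\log n)(\log p)/n}\,\|\bH\bH'-\bI\|_{\max}$, is absorbed by the same template via $\sqrt{\log p/n} = O(1) \lesssim \|\bLambda\|_{\max}$. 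Here the lower bound $\|\bLambda\|_{\max} \gtrsim 1$ is forced by pervasiveness, since $\tr(p^{-1}\bLambda'\bLambda) \ge K c_{\min}$ under Assumption~\ref{ass2.1.1}(i) (or its Regime~2 analogue from Assumption~\ref{ass2.1.2}(i)) rules out uniformly small loadings.

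For the second assertion, $\bP$ is idempotent under $n^{-1}\check\bF'\check\bF = \bI$, so
\[
\check\bU\check\bU' \;=\; \bX\bP\bX' \;=\; \bLambda\bF'\bP\bF\bLambda' + \bLambda\bF'\bP\bU' + \bU\bP\bF\bLambda' + \bU\bP\bU'.
\]
The target leading term $-n^{-1}\bU\bF\bF'\bU'$ comes from $\bU\bP\bU' - \bU\bU' = -n^{-1}\bU\check\bF\check\bF'\bU'$ after one more expansion of $\check\bF$; the resulting cross terms match $\|\bU\bD\|_{\max}\|\bLambda\|_{\max}$, $\|\bU\bD\|_{\max}^2$, and $n\|\bLambda\|_{\max}^2\|\bH\bH'-\bI\|_{\max}$ (the last via $\|\bU\bF\|_{\max}^2 = O_P(n\log p) \lesssim n^2\|\bLambda\|_{\max}^2$). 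The block $\bLambda\bF'\bP\bF\bLambda'$ is handled through the direct identity
\[
\bF'\bP\bF \;=\; n(\bI-\bH\bH') - \bH\bD'\bF - \bF'\bD\bH' - n^{-1}\bF'\bD\bD'\bF,
\]
whose four pieces, sandwiched between $\bLambda$ and $\bLambda'$, line up with $n\|\bLambda\|_{\max}^2\|\bH\bH'-\bI\|_{\max}$ and $\|\bLambda\|_{\max}^2\|\bF'\bD\|_{\max}$ (the last using $\|\bF'\bD\|_{\max} = O(n)$, itself forced by $\|\bD\|_F = O_P(\sqrt{n})$ under the two normalizations). The block $\bLambda\bF'\bP\bU'$ and its transpose are expanded via the identity $\bLambda\bF'\bP = \bLambda(\bI-\bH\bH')\bF' - \bLambda\bH\bD' - n^{-1}\bLambda\bF'\bD\check\bF'$ already derived; each resulting cross term matches one of the four templates in the display for $\|\bDelta\|_{\max}$ by the same max-norm reasoning.

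The main obstacle is not any single inequality but the careful bookkeeping: roughly a dozen cross terms must be enumerated and matched to the correct one of the four templates, and every scale absorption involving $n^{-1}\|\bU\bF\|_{\max}$ against powers of $\|\bLambda\|_{\max}$ relies on the combination $\log p = O(n)$ together with the pervasiveness-implied lower bound $\|\bLambda\|_{\max} \gtrsim 1$. No concentration inequalities beyond sub-Gaussian maximal bounds are needed.
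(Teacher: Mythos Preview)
Your proposal is correct and follows essentially the same approach as the paper. Both arguments expand $\check\bF = \bF\bH + \bD$ inside the projection formula \eqref{eq3.1}, enumerate the resulting cross terms, and bound each in max norm using $\|\bF\|_{\max} = O_P(\sqrt{\log n})$ and $\|\bU\bF\|_{\max} = O_P(\sqrt{n\log p})$; the paper organizes the bookkeeping around the symmetric object $\check\bF\check\bF' - \bF\bF'$ (writing $\bPi = I + II$ and $\bDelta = III + IV$) rather than your four-block decomposition $\bX\bP\bX' = \bLambda\bF'\bP\bF\bLambda' + \cdots$, but the term-by-term content is the same. One minor difference worth noting: where you invoke the pervasiveness-implied lower bound $\|\bLambda\|_{\max} \gtrsim 1$ together with $\log p = O(n)$ to absorb $n^{-1}\|\bU\bF\|_{\max}$-type factors into $\|\bLambda\|_{\max}$, the paper instead records and verifies the side relation $n^{-1}\|\bU\bF\|_{\max}\|\bU(\check\bF - \bF\bH)\|_{\max} = O_P(\|\bLambda\|_{\max}^2)$ separately under each regime (Theorems~\ref{th3.2} and~\ref{th3.3}); both close the same gap.
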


The above theorem states that the error of estimating $\bU$ by $\check{\bU}$ (or estimating $\bU \bU'$ by $\check{\bU} \check{\bU}'$) is decomposed into two parts. The first part is inevitable even when the factor matrix $\bF$ in (\ref{eq3.1}) is known in advance. The second part is caused by the uncertainty from estimating  $\bF$. Since the true $\bF$ is identifiable up to an orthonormal transformation $\bH$, we need to carefully choose $\bH$  to bound the error $\bPi$ (or $\bDelta$).
We will provide explicit rates of convergence for those terms in the following two subsections.

\subsection{Estimating factors by PCA} \label{sec3.1}


In regime 1, we directly use PCA to adjust data heterogeneity. PCA estimates $\bF$ by $\widehat{\bF}$ where the $k^{th}$ column of $\widehat{\bF}/\sqrt{n}$ is the eigenvector of $(pn)^{-1} {\bX}' \bX$ corresponding to the $k^{th}$ largest eigenvalue.
By the definition of $\widehat{\bF}$, we have $(np)^{-1} \bX'\bX \widehat\bF = \widehat\bF \bK$, where $\bK$ is a $K$ by $K$ diagonal matrix with top $K$ eigenvalues of $(np)^{-1} \bX'\bX$ in descending order as diagonal elements.
Define a $K$ by $K$ matrix $\bH$ as in \cite{FanLiaMin13}:
$$
\bH = \frac{1}{np} \bLambda' \bLambda \bF'\widehat\bF \bK^{-1}\,.
$$
It has been shown that $\|\bK\|$, $\|\bK^{-1}\|$ and $\|\bH\|$, $\|\bH^{-1}\|$ are all $O_P(1)$. The following theorem provides all the rates of convergences that are needed for downstream analysis.

\begin{thm} \label{th3.2} Under Assumptions \ref{ass2.2} and \ref{ass2.1.1}, we have $\|\bLambda\|_{\max} = O_P(\sqrt{\log p})$ and \\
(i) $\|\hF - \bF\bH\|_F = O_P(\sqrt{n/p} + 1/\sqrt{n})$ and $\|\hF - \bF\bH\|_{\max} = O_P(\sqrt{\log n/p} + \sqrt{\log n}/n)$; \\
(ii) $\|\bF'(\hF - \bF\bH)\|_{\max} = O_P(1+\sqrt{n/p})$; \\
(iii)  $\|\bU(\hF - \bF\bH)\|_{\max} = O_P((1+n/p)\sqrt{\log p} + n\|\bSigma\|_1/p)$; \\
(iv) $\|\bH\bH' - \bI\|_{\max} = O_P(1/n + 1/p)$.
\end{thm}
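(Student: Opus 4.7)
The proof rests on deriving a closed-form expansion for $\widehat{\bF} - \bF\bH$ from the eigenequation, then controlling each piece with sub-Gaussian concentration. Note that $\|\bLambda\|_{\max} = O_P(\sqrt{\log p})$ is immediate from Assumption~\ref{ass2.1.1}(ii). The key expansion comes from rearranging $(np)^{-1}\bX'\bX\widehat{\bF} = \widehat{\bF}\bK$ with $\bX = \bLambda\bF' + \bU$ and the definition of $\bH$:
\begin{equation*}
\widehat{\bF} - \bF\bH \;=\; \frac{1}{np}\bigl[\bF\bLambda'\bU\widehat{\bF} + \bU'\bLambda\bF'\widehat{\bF} + \bU'\bU\widehat{\bF}\bigr]\bK^{-1}.
\end{equation*}
Since $\|\bK^{-1}\|_2 = O_P(1)$ is already known, the theorem reduces to bounding the three bracketed terms in various norms.

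For part (i), the Frobenius bound combines Assumption~\ref{ass2.1.1}(i) (which gives $\|\bLambda\|_2^2 = O_P(p)$), sub-Gaussian concentration yielding $\|\bLambda'\bU\|_F = O_P(\sqrt{pn})$, and the Bai--Silverstein-type estimate $\|\bU\|_2^2 = O_P(n+p)$; together these produce $O_P(\sqrt{n/p} + 1/\sqrt{n})$. The max-norm bound is proved row-by-row: for each fixed $t \leq n$, the $t$-th row of each bracketed term is a weighted sum of sub-Gaussian variables with Gaussian tail concentration, and a union bound over the $n$ rows introduces the $\sqrt{\log n}$ factor.

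For parts (ii) and (iii), the plan is to premultiply the identity by $\bF'$ (respectively $\bU$) and exploit cancellations. In (ii), $\bF'\bF = n\bI$ absorbs the first term into $p^{-1}\bLambda'\bU\widehat{\bF}\bK^{-1}$, whose entries are $p^{-1}$-weighted sums of sub-Gaussian products with standard deviation $\asymp \sqrt{n/p}$; the remaining two terms, after using the bound on $\|\bU\|_2^2$ together with the Frobenius rate from (i), contribute only a constant-order fluctuation, yielding $O_P(1 + \sqrt{n/p})$. In (iii), premultiplication by $\bU$ produces a noise-noise-noise term $\bU\bU'\bU\widehat{\bF}$ whose mean involves $\bSigma$; decomposing $\bU\bU'$ into $n\bSigma + (\bU\bU' - n\bSigma)$ isolates a bias whose max-entry is $O(n\|\bSigma\|_1/p)$, while the fluctuation contributes the $(1+n/p)\sqrt{\log p}$ term via sub-Gaussian concentration together with a union bound over $p$.

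For part (iv), starting from $n^{-1}\widehat{\bF}'\widehat{\bF} = \bI$ and writing $\widehat{\bF} = \bF\bH + (\widehat{\bF} - \bF\bH)$, direct expansion gives
\begin{equation*}
\bI - \bH'\bH \;=\; \tfrac{1}{n}\bH'\bF'(\widehat{\bF} - \bF\bH) + \tfrac{1}{n}(\widehat{\bF} - \bF\bH)'\bF\bH + \tfrac{1}{n}(\widehat{\bF} - \bF\bH)'(\widehat{\bF} - \bF\bH);
\end{equation*}
the cross terms are controlled by part (ii) and the quadratic term by part (i), giving $\|\bH'\bH - \bI\|_{\max} = O_P(1/n + 1/p)$. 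The identity $\bH\bH' - \bI = \bH(\bH'\bH - \bI)\bH^{-1}$, together with $\|\bH\|_2, \|\bH^{-1}\|_2 = O_P(1)$ and fixed $K$, transfers the rate to $\|\bH\bH' - \bI\|_{\max}$. The main obstacle will be the sharp max-norm control in part (iii): the interaction $\bU\bU'\bU\widehat{\bF}$ must be carefully decomposed into its $\bSigma$-dependent bias and fluctuation, and the dependence between $\widehat{\bF}$ and $\bU$ requires delicate handling to avoid losing logarithmic factors or picking up spurious operator-norm dependencies.
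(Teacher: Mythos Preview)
Your expansion and overall strategy coincide with the paper's. The decomposition $\widehat{\bF}-\bF\bH=(np)^{-1}(\bE_1+\bE_2+\bE_3)\bK^{-1}$ with $\bE_1=\bF\bLambda'\bU\widehat{\bF}$, $\bE_2=\bU'\bLambda\bF'\widehat{\bF}$, $\bE_3=\bU'\bU\widehat{\bF}$ is exactly what the paper uses, and your treatments of (i), (ii) and (iv) match the paper's in all essentials. One minor difference: for the Frobenius bound on $\bE_3$ you invoke the operator-norm estimate $\|\bU\|_2^2=O_P(n+p)$, whereas the paper uses only the second-moment bound $\|\bU'\bU\|_F^2=O_P(np^2+pn^2)$ via Markov's inequality, followed by an iterative refinement (first a crude bound on $\|\widehat{\bF}-\bF\bH\|_F$, then substituting back into $\bE_3$). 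Both routes reach the same rate; yours is a clean shortcut if the operator-norm bound is available.

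There is, however, a misattribution in part (iii). You locate the $n\|\bSigma\|_1/p$ term in the triple-noise piece $\bU\bE_3=(np)^{-1}\bU\bU'\bU\widehat{\bF}$, proposing to center $\bU\bU'$ at $n\bSigma$ there. In the paper the $\|\bSigma\|_1$ contribution actually arises from $\bU\bE_2=(np)^{-1}\bU\bU'\bLambda\bF'\widehat{\bF}$: since $\|\bF'\widehat{\bF}/n\|=O_P(1)$, the governing quantity is $\|\bU\bU'\bLambda\|_{\max}$, and centering $\bU\bU'\bLambda$ at its mean $n\bSigma\bLambda$ (Lemma~\ref{tla.5}(i)) yields a bias of order $n\|\bSigma\|_1$ plus a fluctuation $O_P(\sqrt{np\log p})$. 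By contrast, applying your centering inside $\bU\bE_3$ gives a bias $p^{-1}\|\bSigma\bU\widehat{\bF}\|_{\max}\le p^{-1}\|\bSigma\|_1\|\bU\widehat{\bF}\|_{\max}=O_P(p^{-1}\|\bSigma\|_1\sqrt{n\log p})$, which is \emph{smaller} than $n\|\bSigma\|_1/p$ and so cannot be its source; meanwhile you say nothing about $\bU\bE_1$ or $\bU\bE_2$, and the latter is precisely where the leading $\|\bSigma\|_1$ term sits. The paper does not center $\bU\bE_3$ at all but bounds it crudely via $(n^{1/2}p)^{-1}\|\bU\|_{\max}\|\bU'\bU\widehat{\bF}\|_F=O_P((1+n/p)\sqrt{\log p})$. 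So your centering idea is the right tool --- you have simply aimed it at the wrong term: apply it to $\bU\bU'\bLambda$ in $\bU\bE_2$, and handle $\bU\bE_3$ with a coarser Frobenius bound.
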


Combining the above results with Theorem \ref{th3.1}, we have
$$
\widehat{\bU}  - \bU  = -\frac{1}{n} {\bU}{\bF}{\bF}' + \bPi\,,
$$
where $\|\bPi\|_{\max} = O_P(\sqrt{\log n \log p}(1/\sqrt{p} + 1/n) + \sqrt{\log n} \|\bSigma\|_1/p)$ and additionally
$$
\widehat{\bU} \widehat{\bU}'  - \bU {\bU}' = -\frac{1}{n} {\bU}{\bF}{\bF}'{\bU}' + \bDelta\,,
$$
where $\|\bDelta\|_{\max} = O_P((1+n/p)\log p + n^2 \|\bSigma\|_1^2/p^2)$. 

\subsection{Estimating factors by Projected-PCA}

In regime 2, we would like to incorporate the external covariates using the Projected-PCA method proposed by \cite{FLW14}. We now explain this method.

To reduce the curse of dimensionality of $g_k(\bW_j)$, we assume it takes an additive form:
\begin{equation} \label{eq2.3}
g_k(\bW_j) = \sum_{l = 1}^d g_{kl}(W_{jl}).
\end{equation}
To model the unknown function $g_{kl}(\cdot)$,  we adopt a sieve based idea which approximates  $g_{kl}(\cdot)$ by a linear combination of basis functions (e.g., B-spline, Fourier series, polynomial series, wavelets).  Let  $\{\phi_1(x),\phi_2(x),\cdots\}$ be a set of basis functions. 
Then for each $l\leq d$,
\begin{equation}\label{eq2.4}
g_{kl}(W_{jl})=\sum_{\nu=1}^J b_{\nu, kl}\phi_\nu (W_{jl})+ R_{kl}(W_{jl}),\quad k\leq K, j\leq p, l\leq d.
\end{equation}
Here $\{b_{\nu,kl}\}_{\nu=1}^J$ are the sieve coefficients of the $l^{th}$ additive component of $g_k(\bW_j)$, corresponding to the $k^{th}$ factor loading;  $R_{kl}$ is the remainder function representing the approximation error; $J$ denotes the number of sieve bases which may  grow slowly as $p$ diverges. The basic assumption for sieve approximation is that   $\sup_{x}|R_{kl}(x)|\rightarrow0$ as $J\rightarrow\infty$. To facilitate notation, we take the same basis functions in (\ref{eq2.4}) for all $k$ and $l$ though they can be different.

Define, for each $k \leq K$ and for each $j\leq p$,
\begin{eqnarray*}
& {\bf b}_k'=( b_{1, k1} , \cdots,  b_{J, k1} , \cdots,  b_{1,kd} , \cdots,  b_{J,kd}) \in \mathbb{R}^{Jd}, \\
& {\phi}(\bW_j)'=( \phi_1(W_{j1}), \cdots,  \phi_J(W_{j1}), \cdots,  \phi_1(W_{jd}) , \cdots,  \phi_J(W_{jd})) \in \mathbb{R}^{Jd}.
\end{eqnarray*}
Then, we can write
$$
g_k(\bW_j)= \phi(\bW_j)'\bb_k+\sum_{l=1}^d R_{kl}(W_{jl}).
$$
Let $\bB=(\bb_1,\cdots,\bb_{K})$ be a $(Jd)\times K$ matrix of sieve coefficients, $\Phi(\bW)=(\phi(\bW_1), \cdots, \phi(\bW_p))'$ be a $p\times (Jd)$ matrix of basis functions, and $\bR(\bW)$ be a $p\times K$ matrix with the $(j,k)^{th}$ element $\sum_{l=1}^d R_{kl}(W_{jl})$. Then the matrix form (\ref{eq2.2}) can be written as
\begin{equation} \label{eq2.5}
\bX = \Phi(\bW)\bB {\bF}' +\bR(\bW) {\bF}' + \bGamma {\bF}' +\bU,
\end{equation}
recalling that we drop the data source index $i$.
Thus the residual term contains three parts: the sieve approximation error $\bR(\bW)\bF'$, unexplained loading $\bGamma\bF'$ and true signal $\bU$.

The idea of Projected-PCA is simple: since the factor loadings are a function of the covariates in \eqref{eq2.5} and $\bU$ and $\bGamma$ are independent of $\bW$, if we project (smooth) the observed data onto the space of $\bW$, the effect of $\bU$ and $\bGamma$ will be significantly reduced and the problem becomes nearly a noiseless one, recalling that the approximation error $\bR(\bW)$ is small.

Define $\bP$ as the projection operator onto the space spanned by the basis functions of $\bW$:
\begin{equation}\label{eq2.7}
\bP=\Phi(\bW)(\Phi(\bW)'\Phi(\bW))^{-1}\Phi(\bW)'.
\end{equation}
Then, by (\ref{eq2.5}), $\bP \bX \approx \bP \Phi(\bW) \bB \bF' \approx \bG(\bW) \bF'$. Thus, $\bF$ can be estimated from the `noiseless data' $\bP \bX$, using the traditional PCA.  Let the columns of $\widetilde\bF/\sqrt{n}$ be the eigenvectors corresponding to the top $K$ eigenvalues of the $n \times n$ matrix $\bX'\bP\bX$, which is the sample covariance matrix of the projected data $\bP \bX$.  Then, $\widetilde{\bF}$ is the PPCA estimator of $\bF$.  It differs from the conventional PCA in that we use smoothed or projected data $\bP \bX$.

By the definition of $\widetilde{\bF}$, we have $(np)^{-1} \bX'\bP\bX \widetilde\bF = \widetilde\bF \bK$ where $\bK$ is a $K \times K$ diagonal matrix with the first $K$ largest eigenvalues of $(np)^{-1} \bX'\bP\bX$ in descending order as its diagonal elements.
Define the $K$ by $K$ matrix $\bH$ as in \cite{FLW14}:
$$
\bH=\frac{1}{np}\bB'\Phi(\bW)'\Phi(\bW)\bB\bF'\widetilde\bF\bK^{-1} \,.
$$
It has been shown that $\|\bK\|$, $\|\bK^{-1}\|$ and $\|\bH\|$, $\|\bH^{-1}\|$ are all $O_P(1)$.  Here we remind that though $\bH$ and $\bK$ are different from those in regime 1, they play essentially the same roles (thus with same notations).

As in \cite{FLW14}, we need the following conditions for the basis functions and accuracy of the sieve approximation.

\begin{assum}[Basis functions]\label{ass2.3}
(i) There are $d_{\min}$ and $d_{\max}>0$ so that almost surely,
 $$
d_{\min}<\lambda_{\min}(p^{-1}\Phi(\bW)'\Phi(\bW))<\lambda_{\max}(p^{-1}\Phi(\bW)'\Phi(\bW))<d_{\max}.
$$
(ii) $\max_{\nu \leq J, j\leq p, l\leq d}E\phi_\nu(W_{jl})^2<\infty.$
\end{assum}

\begin{assum}[Accuracy of sieve approximation]\label{ass2.4} For each $l\leq d, k\leq K$,\\
  (i) The sieve coefficients $\{b_{\nu,jl}\}_{\nu=1}^J$ satisfy: $\exists \, \kappa \geq 4$, as $J\rightarrow\infty$,
      $$
  \sup_{x\in\mathcal{X}_l}|g_{kl}(x)-\sum_{\nu=1}^Jb_{\nu,kl}\phi_\nu(x)|^2=O(J^{-\kappa}),
      $$
    where $\mathcal{X}_l$ is the support of the $l^{th}$ element of $\bW_j$, and $J$ is the sieve dimension. \\
    (ii)  $\max_{\nu, k, l} |b_{\nu, kl}| <\infty$.
\end{assum}

Condition (i) in Assumption \ref{ass2.4}  is  satisfied by  most commonly used basis. For example,   when $\{\phi_\nu\}$ is polynomial basis or B-splines,  it is implied by the condition that smooth curve $g_{kl}(\cdot)$ belongs to a H\"{o}lder class $\mathcal{G}$, defined by $\mathcal{G}=\{g: |g^{(r)}(s)-g^{(r)}(t)|\leq L|s-t|^{\alpha} \}$ for some $L > 0$, with $\kappa=2(r+\alpha) \geq 4$ \citep{Lor86, Che07}. Another example is step function $g_{kl}(\cdot)$ with finite many distinct values, which can be expressed exactly as the linear combination of disjoint indicator functions so that $\kappa$ can be arbitrarily large. 

With the above conditions, the following theorem provides all the rates we need, recalling the definition of $\nu_p$ in Assumption \ref{ass2.1.3} (iii).

\begin{thm} \label{th3.3} Choose $J=(p\min\{n,p, \nu_p^{-1}\})^{1/\kappa}$ and assume $J^2\phi_{\max}^2 \log (nJ)=O(p)$ where $\phi_{\max} = \max_{\nu \le J} \sup_{x \in \mathcal X} \phi_{\nu}(x)$. Under Assumptions \ref{ass2.2}, \ref{ass2.1.2}, \ref{ass2.1.3}, \ref{ass2.3} and \ref{ass2.4}, 
we have $\|\bLambda\|_{\max} = O_P(J \phi_{\max} + \sqrt{\log p})$ and \\
(i) $\|\tF - \bF\bH\|_F = O_P(\sqrt{n/p})$ and $\|\tF - \bF\bH\|_{\max} = O_P(\sqrt{\log n/p})$; \\
(ii) $\|\bF'(\tF - \bF\bH)\|_{\max} = O_P(\sqrt{n/p}+ n/p+n\sqrt{\nu_p/p})$; \\
(iii)  $\|\bU(\tF - \bF\bH)\|_{\max} = O_P(\sqrt{n \log p/p} + nJ\phi_{\max}\|\bSigma\|_1/p)$; \\
(iv) $\|\bH\bH' - \bI\|_{\max} = O_P(1/p+1/\sqrt{pn} + \sqrt{\nu_p/p})$.
\end{thm}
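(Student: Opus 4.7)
The plan is to adapt the PPCA analysis of Fan, Liao and Wang (2014) and refine it to yield entry-wise (max-norm), not just Frobenius-norm, rates. Everything proceeds from a single master expansion of $\widetilde{\bF}-\bF\bH$ obtained from the eigen-equation $(np)^{-1}\bX'\bP\bX\widetilde{\bF}=\widetilde{\bF}\bK$. Using the representation $\bX=\Phi(\bW)\bB\bF'+\bR(\bW)\bF'+\bGamma\bF'+\bU$ and the fundamental identity $\bP\Phi(\bW)=\Phi(\bW)$, the dominant quadratic term $\bF\bB'\Phi(\bW)'\Phi(\bW)\bB\bF'$ reproduces $\bH$ exactly, so
\[
\widetilde{\bF}-\bF\bH \;=\; \frac{1}{np}\bigl[\,\bF\bB'\Phi(\bW)'\bP(\bR\bF'+\bGamma\bF'+\bU)+(\bR\bF'+\bGamma\bF'+\bU)'\bP\bX\,\bigr]\widetilde{\bF}\bK^{-1},
\]
a sum of six cross terms each of which can be bounded separately.

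The bound $\|\bLambda\|_{\max}=O_P(J\phi_{\max}+\sqrt{\log p})$ is the warm-up: write $\bLambda=\bG(\bW)+\bGamma$, apply the sieve expansion from \eqref{eq2.4} together with Assumptions \ref{ass2.3}(ii) and \ref{ass2.4}(ii) to get $\|\bG(\bW)\|_{\max}=O_P(J\phi_{\max})$, and use Assumption \ref{ass2.1.3}(i) for the $\bGamma$ part. Part (i) then follows by plugging the bounds on the six cross terms into the master expansion: $\|\bP\bGamma\|_F^2$ is controlled by Assumption \ref{ass2.1.3}(iii) (giving the $\nu_p$ dependence through $\tr(\bGamma'\bP\bGamma)\lesssim Jd\,\nu_p$), $\|\bP\bU\|_F^2$ by the sub-Gaussian tails of $\bu_t$ and $\mathrm{rank}(\bP)\le Jd$, and $\|\bR(\bW)\|_F$ by Assumption \ref{ass2.4}(i) with $J=(p\min\{n,p,\nu_p^{-1}\})^{1/\kappa}$. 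The max-norm bound in (i) replaces Frobenius sums with entry-wise suprema and uses sub-Gaussian concentration plus a union bound of size $\asymp nJ$, which is exactly where the hypothesis $J^2\phi_{\max}^2\log(nJ)=O(p)$ is consumed.

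Parts (ii) and (iii) are obtained by premultiplying the master expansion by $\bF'$ or by $\bU$. For (ii), one factor of $\bF$ collapses via $n^{-1}\bF'\bF=\bI$, and the three additive pieces $\sqrt{n/p}$, $n/p$ and $n\sqrt{\nu_p/p}$ come, respectively, from the $\bU$-$\bR$ cross term, the $\bU$-$\bU$ quadratic term, and the $\bGamma$-$\bGamma$ quadratic term after $\bP$-projection. For (iii) the analysis is subtler: $\bU$ is independent of $\{\bF,\bW,\bGamma\}$ by Assumption \ref{ass2.2}(ii), so terms that are linear in $\bU$ enjoy Gaussian-type concentration at rate $\sqrt{n\log p}$ using $\|\bSigma\|_2\le C_0$, while the genuinely quadratic piece $\bU\bP\bU'\bLambda\bF$ is the one that produces the second summand $nJ\phi_{\max}\|\bSigma\|_1/p$ — here the factor $J\phi_{\max}$ enters through $\|\bLambda\|_{\max}$ obtained above, and $\|\bSigma\|_1$ enters from bounding $\|\bU'\bu_t\|_{\max}$ via a Hanson--Wright-type inequality combined with a union bound over $t$.

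Finally, (iv) follows by exploiting $n^{-1}\widetilde{\bF}'\widetilde{\bF}=\bI$: substituting $\widetilde{\bF}=\bF\bH+(\widetilde{\bF}-\bF\bH)$ and using $n^{-1}\bF'\bF=\bI$ gives
\[
\bH'\bH-\bI \;=\; -\frac{1}{n}\bigl[\bH'\bF'(\widetilde{\bF}-\bF\bH)+(\widetilde{\bF}-\bF\bH)'\bF\bH+(\widetilde{\bF}-\bF\bH)'(\widetilde{\bF}-\bF\bH)\bigr],
\]
and the right-hand side can be bounded entrywise by combining (ii) with $\|\bH\|=O_P(1)$; the symmetric statement $\bH\bH'-\bI$ is then recovered because $\|\bH\|,\|\bH^{-1}\|=O_P(1)$, so $\bH$ has bounded singular values and $\bH\bH'-\bI=\bH(\bH'\bH-\bI)\bH^{-1}$ up to entrywise absorption. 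The main obstacle throughout is the quadratic noise term $\bU'\bP\bU$: getting a max-norm rate of $O_P(\sqrt{\log n/p})$ rather than the naive Frobenius-divided-by-$\sqrt{n}$ rate requires a careful decoupling argument on the projection $\bP$ (which itself is random through $\bW$), and this is precisely where the sieve-dimension constraint in the theorem's hypothesis is invoked.
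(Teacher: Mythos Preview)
Your skeleton matches the paper's: expand the eigen-equation to write $\widetilde{\bF}-\bF\bH$ as a sum of cross terms (the paper splits into fifteen pieces $\bA_1,\ldots,\bA_{15}$; your compact grouping is equivalent), bound each in max-norm for (i), premultiply by $\bF'$ or $\bU$ for (ii)--(iii), and deduce (iv) from the quadratic identity for $\bH'\bH-\bI$. Parts (i) and (iv) are fine. Your attributions in (ii) are slightly off---the $\sqrt{n/p}$ and $n/p$ contributions come from the $\bU$--$\Phi\bB$ terms (the paper's $\bA_1,\bA_2$), and $n\sqrt{\nu_p/p}$ from the $\bGamma$--$\Phi\bB$ terms ($\bA_9,\bA_{10}$), not from the $\bU$--$\bR$, $\bU$--$\bU$ and $\bGamma$--$\bGamma$ pieces you name---but this does not change the final rate.

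The genuine gap is in (iii). First, ``$\bU\bP\bU'\bLambda\bF$'' is dimensionally inconsistent ($\bU$ is $p\times n$, $\bP$ is $p\times p$); the dominant piece is actually $\bU\bA_2=(np)^{-1}\bU\bU'\Phi(\bW)\bB\bF'\widetilde{\bF}$, with no $\bP$ between the two $\bU$ factors. Second, and more seriously, $\|\bSigma\|_1$ does \emph{not} enter through a Hanson--Wright bound on $\|\bU'\bu_t\|_{\max}$: Hanson--Wright controls deviations from the mean using only $\|\bSigma\|_2$ and $\tr(\bSigma^2)$, and would never produce an $\|\bSigma\|_1$ factor. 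The $\|\bSigma\|_1$ instead enters through the \emph{nonzero mean itself}: $\mathbb{E}[\bU\bU'\Phi\bB]_{jk}=n\sum_{j'}\sigma_{jj'}(\Phi\bB)_{j'k}$, and bounding this entrywise by $n\|\bSigma\|_1\|\Phi\bB\|_{\max}=O(nJ\phi_{\max}\|\bSigma\|_1)$ is precisely what yields the second summand in (iii) after division by $p$ (see the paper's Lemma~\ref{tla.5}). Your proposed mechanism cannot generate the stated rate, so as written the argument for (iii) does not close.
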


Combining the above theorem with Theorem \ref{th3.1}, we obtain
$$
\widetilde{\bU}  - \bU  = -\frac{1}{n} {\bU}{\bF}{\bF}' + \bPi\,,
$$
where $\|\bPi\|_{\max} = O_P(\sqrt{\log n/p} (J\phi_{\max} + \sqrt{ \log p}) + J\phi_{\max} \|\bSigma\|_1\sqrt{\log n} /p)$ and
$$
\widetilde{\bU} \widetilde{\bU}'  - \bU {\bU}' = -\frac{1}{n} {\bU}{\bF}{\bF}'{\bU}' + \bDelta\,,
$$
where $\|\bDelta\|_{\max} = O_P(n\sqrt{\nu_p/p}(J^2\phi_{\max}^2 + \log p)+nJ\phi_{\max}\|\bSigma\|_1 (J\phi_{\max} + \sqrt{\log p})/p + n^2J^2\phi_{\max}^2\|\bSigma\|_1^2/p^2)$ if there exists C s.t. $\nu_p > C/n$. We choose to keep $\|\bSigma\|_1$ terms here although it makes a long presentation of the rate.

\subsection{Specification test}


In this section, we give an adaptive rule to decide whether the covariates $\bW$ are informative enough to use PPCA or just PCA.  We test the hypothesis that $H_0: \bG(\bW) = 0$ using the test statistic \citep{FLW14}
$$
S = \frac{1}{p} \tr(\bXi \widehat{\bLambda}' \bP \widehat{\bLambda}) \;\; \text{where} \;\; \bXi = \Big(\frac1p \widehat{\bLambda}' \widehat{\bLambda} \Big)^{-1} \,.
$$
Here, we use PCA estimator $\widehat{\bLambda}$ as PPCA is not applicable under $H_0$. If $\bLambda$ has nothing to do with $\bW$, then $\bP \widehat \bLambda \approx 0$ and $S$ should be quite small after projection. Conversely, if  $\bG(\bW)\neq 0$, $S$ will be large, hence we reject the null. We showed the following theorem, whose proof is omitted.

\begin{thm}\label{th5.1}
Under all the assumptions discussed above, if $\{\bW_j, \bgamma_j\}_{j\le p}$ are independent and identically distributed, as $p,n_i,J \to \infty$, we have under $H_0$ for the $i^{th}$ subgroup,
$$
\frac{p S - J d K}{\sqrt{2JdK}} \overset{d} \to  N(0,1)\,.
$$
\end{thm}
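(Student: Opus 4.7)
The plan is to reduce $pS$ to the idealized quadratic form $\tr(\bV^{-1}\bGamma'\bP\bGamma)$, where $\bV$ is the (population) covariance of $\bgamma_j$, decompose it into a diagonal piece that concentrates at $KJd$ and an off-diagonal bilinear piece $T_O$, and then invoke a martingale central limit theorem on $T_O/\sqrt{2KJd}$. Under $H_0$ we have $\bLambda=\bGamma$, so by Theorem~\ref{th3.2} the PCA estimator satisfies $\widehat{\bLambda}=\bGamma\bH+\bDelta_1$ with $\|\bH\|,\|\bH^{-1}\|=O_P(1)$ and a controllable $\|\bDelta_1\|_{\max}$, and correspondingly $\bXi=\bH^{-1}\bV^{-1}(\bH')^{-1}+\bDelta_2$. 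Substituting into $pS=\tr(\bXi\widehat{\bLambda}'\bP\widehat{\bLambda})$ and exploiting cyclic invariance of the trace makes the rotation $\bH$ cancel exactly, yielding
\begin{equation*}
pS = \tr\bigl(\bV^{-1}\bGamma'\bP\bGamma\bigr) + r_n.
\end{equation*}
The Mahalanobis-style normalization provided by $\bXi$ is precisely what enables this cancellation despite $\widehat{\bLambda}$ being identifiable only up to rotation.

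Next, split the leading term as
\begin{equation*}
T := \tr\bigl(\bV^{-1}\bGamma'\bP\bGamma\bigr) = \underbrace{\sum_{j} P_{jj}\,\bgamma_j'\bV^{-1}\bgamma_j}_{T_D} \;+\; \underbrace{2\sum_{j<j'} P_{jj'}\,\bgamma_{j'}'\bV^{-1}\bgamma_j}_{T_O}.
\end{equation*}
Each summand in $T_D$ has mean $K$, so $E[T_D]=K\tr(\bP)=KJd$; moreover $\var(T_D)\lesssim K^2\sum_j P_{jj}^2 = o(Jd)$ by the leverage bound $\max_j P_{jj}=o(1)$ implied by Assumption~\ref{ass2.3} together with the growth condition $J^2\phi_{\max}^2\log(nJ)=O(p)$. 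Hence $T_D-KJd=o_P(\sqrt{Jd})$, and $T_D$ contributes only the centering.

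For $T_O$, define the filtration $\mathcal{F}_{j'}=\sigma(\bW,\bgamma_1,\ldots,\bgamma_{j'})$ and increments $\xi_{j'}=2\bgamma_{j'}'\bV^{-1}\sum_{j<j'}P_{jj'}\bgamma_j$, so $T_O=\sum_{j'=2}^p \xi_{j'}$. By Assumption~\ref{ass2.1.3}(ii) the $\bgamma_{j'}$ are iid and independent of $\bW$, so $E[\xi_{j'}\mid\mathcal{F}_{j'-1}]=0$ and $\{\xi_{j'}\}$ is a martingale difference array. A direct computation gives
\begin{equation*}
E\Bigl[\sum_{j'=2}^p E[\xi_{j'}^2\mid\mathcal{F}_{j'-1}]\Bigr] = 4K\sum_{j<j'}P_{jj'}^2 = 2K\bigl(\tr(\bP)-\textstyle\sum_j P_{jj}^2\bigr) = 2KJd(1+o(1)),
\end{equation*}
and the conditional variance sum concentrates around its expectation by standard sub-Gaussian concentration for quadratic forms in iid vectors; the conditional Lindeberg condition reduces to a Bernstein-type bound on $\max_{j'}\|\sum_{j<j'}P_{jj'}\bgamma_j\|$, which is $O_P(\sqrt{Jd\log p})$ and more than enough. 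The Hall--Heyde martingale CLT then yields $T_O/\sqrt{2KJd}\overset{d}{\to}N(0,1)$, and combining with the analysis of $T_D$ completes the proof, \emph{provided} $r_n=o_P(\sqrt{Jd})$.

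The hard part, and the main obstacle, is precisely controlling $r_n$. The bounds on $\bDelta_1$ from Theorem~\ref{th3.2} are in entrywise max and Frobenius norms, whereas $r_n$ is built from traces like $\tr(\bXi\bDelta_1'\bP\bGamma\bH)$ and $\tr(\bXi\bDelta_1'\bP\bDelta_1)$ in which an error matrix interacts with the rank-$Jd$ projector $\bP$. A naive Cauchy--Schwarz bound such as $|\tr(\bXi\bDelta_1'\bP\bGamma\bH)|\lesssim \sqrt{Jd}\,\|\bDelta_1\|_F\,\|\bGamma\|_F/\sqrt{p}$ is typically too loose, and one must exploit the independence of $\bGamma$ (and of the PCA noise entering $\bDelta_1$) from $\bW$ (hence from $\bP$) to harvest an additional $\sqrt{Jd}$ factor from stochastic cancellation. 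This type of refined decoupling argument is analogous to the key technical lemmas in \cite{FLW14} underlying the PPCA analysis, and is presumably the reason the authors omitted the detailed proof here.
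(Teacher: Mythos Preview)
The paper explicitly omits the proof of this theorem (writing just before it: ``We showed the following theorem, whose proof is omitted''), so there is no in-paper argument to compare against; the result is modeled on the specification test in \cite{FLW14}. Your approach---the rotation-cancellation that reduces $pS$ to $\tr(\bV^{-1}\bGamma'\bP\bGamma)$, the diagonal/off-diagonal split with $E[T_D]=KJd$ and $\var(T_O)=2KJd(1+o(1))$, and a martingale CLT on $T_O$---is exactly the standard route used for projection-based specification statistics of this type and is what the omitted proof would presumably follow.

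Your candid identification of the remainder $r_n$ as the genuine obstacle is accurate and is where a complete proof would spend most of its effort. One small addition worth making explicit: besides the cross terms in $\bDelta_1$, the error $\bDelta_2$ in $\bXi$ contributes $\tr(\bDelta_2\,\bH'\bGamma'\bP\bGamma\bH)$; since under $H_0$ one has $\|\bGamma'\bP\bGamma\|_*=\tr(\bGamma'\bP\bGamma)=O_P(Jd)$ and $\|\bDelta_2\|=O_P(p^{-1/2})$ plus PCA error, this piece is $O_P(Jd/\sqrt{p})=o_P(\sqrt{Jd})$ and is harmless, but should be recorded. The harder cross term $\tr(\bXi\,\bDelta_1'\bP\bGamma\bH)$ does, as you say, require exploiting that $\bU$ (hence the PCA error feeding $\bDelta_1$) and $\bGamma$ are independent of $\bW$, so that $\bP$ acts as a fresh random projection on them; this is the refinement that buys the extra factor beyond naive Cauchy--Schwarz.
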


Based on this result, we can decide whether or not to reject the null hypothesis, namely to use PPCA or PCA.  When the test is applied to all $m$ data sources, it becomes a multiple testing problem. The thresholding can be chosen by using various false discovery rate control methods such as Benjamini-Hochberg method \citep{BenHoc95}. If a hypothesis is rejected, we identify the subgroup as regime 2 and use Projected-PCA to obtain $\widetilde{\bU}$; otherwise, we identify the subgroup as regime 1 and apply regular PCA to get $\widehat{\bU}$.

\subsection{Estimating number of factors}

We now address the problem of estimating the number of factors for two different regimes.  Extensive literature has made contributions to this problem in regime 1, i.e. the regular factor model \citep{BN02, HL, AH13, LamYao12}.  \cite{AH13} and \cite{LamYao12} proposed to use ratio of adjacent eigenvalues of ${\bX}'\bX$ to infer the number of factors. 
They showed the estimator $\widehat K = \arg \max_{k \le K_{\max}} \lambda_{k}({\bX}'\bX)/\lambda_{k+1}({\bX}'\bX)$ correctly identifies $K$ with probability tending to $1$, where $K_{\max}$ can be a fixed prior upper bound for the number of factors.

For the geniune semiparametric factor model, in the recent work by \cite{FLW14},
they propose $\widetilde K = \arg \max_{k \le K_{\max}} \lambda_{k}({\bX}' \bP \bX)/\lambda_{k+1}({\bX}' \bP \bX)$. Here $K_{\max}$ is of the same order as $Jd$, say $K_{\max} = Jd/2$. It was shown that $\mathbb P(\widetilde K = K) \to 1$ under assumptions we omit here. When we have genuine and pervasive covariates, $\widetilde K$ typically outperforms $\widehat K$. More details can be found in \cite{FLW14}.


\subsection{Summary of ALPHA} \label{sec3.5}
We now summarize the final procedure and convergence rates. We first divide $m$ subgroups based on whether the collected covariates have influence on the loadings. Let
\begin{align*}
\mathcal M_1 & = \{i \le m \; |\; \bG = 0 \}\,, \qquad
\mathcal M_2  = \{i \le m \; |\; \bG \ne 0 \}\,.
\end{align*}

ALPHA consists of the following three steps.
\begin{itemize}[leftmargin=1.5cm]
\item[Step 1:] ({\bf Preprocessing}) For data source $i$, estimate $K^i$ by $\widehat K^i$; test $H_0: \bG^i(\bX^i) = 0$ by $S^i$ and use FDR control to construct two groups.  For rejected groups, refine $\widehat K^i$ by $\widetilde K^i$.

\item[Step 2:] ({\bf Adjustment}) Apply Projected-PCA to estimate $\bLambda^i {\bF^i}'$ if the test is rejected, otherwise use PCA to remove the heterogeneity, resulting in adjusted data $\check{\bU}$, which is either $\widehat\bU^i$ or $\widetilde\bU^i$.

\item[Step 3:] ({\bf Aggregation}) Combine adjusted data $\{\check \bU^i\}_{i=1}^m$ to conduct further statistical analysis. For example, estimate sample covariance $\bSigma$  by $\widehat \bSigma = (N - \sum_i {\widehat {K_i}})^{-1} \sum_{i=1}^m {\check{\bU^i}} {\check {\bU^i}}'$
    where 
    $N = \sum_i n_i$ is the aggregated sample size; or estimate sparse precision matrix $\bOmega$  by existing graphical model methods say CLIME \citep{CLL11}.
\end{itemize}

We summarize the ALPHA procedure in Algorithm \ref{algo1} given in the appendix.

We also summarize the convergence of $\widehat\bU^i$ and $\widetilde\bU^i$ here. To ease presentation, we consider a typical regime in practice: $n_i < Cp$, $\sum_{i \le m} K^i < CN$ for some  constant $C$. Also we focus on the situation of sufficiently smooth curves $\kappa = \infty$ so that $J$ diverges very slowly (say with rate $O(\sqrt{\log p})$) and constant $\phi_{\max}, \nu_p$.
Based on discussions of the previous subsections, for estimation of $\bU$, we have
$$
\check{\bU}^i - \bU^i = -{\bU^i}{\bF^i}{\bF^i}'/n_i +
\begin{cases}
       O_P\Big(\sqrt{\log n_i \log p/p} + \sqrt{\log n_i \log p}/n_i \Big) & \text{if} \;\; i \in \mathcal M_1\,, \\
       O_P\Big(\sqrt{\log n_i \log p/p} \Big) & \text{if} \;\; i \in \mathcal M_2\,. \\
   \end{cases}
$$
Therefore,  Projected-PCA  dominates PCA as long as the effective covariates are provided. However, ${\bU^i}{\bF^i}{\bF^i}'/n_i$ dominates all the remaining terms so that $\|\check{\bU}^i - \bU^i\|_{\max} = O_P(\|{\bU^i}{\bF^i}{\bF^i}'/n_i\|_{\max}) = O_P(\sqrt{\log n_i \log p/n_i})$.

In addition, for estimation of $\bU\bU'$, we have
\begin{equation} \label{eq3.6}
\check{\bU}^i \check{\bU}^i{}' - \bU^i {\bU^i}' = -{\bU^i}{\bF^i}{\bF^i}'{\bU^i}'/n_i +
\begin{cases}
       O_P\Big(\log p + \delta \Big) & \text{if} \;\; i \in \mathcal M_1\,, \\
       O_P\Big(n_i\log p \sqrt{\nu_p/p} +\delta \Big) & \text{if} \;\; i \in \mathcal M_2\,, \\
   \end{cases}
\end{equation}
where $\delta = n_i^2 \|\bSigma\|_1^2  \log p/p^2$, depending on $\|\bSigma\|_1$. 
If we consider a general $\bSigma$ so that $\|\bSigma\|_1$ can be as large as $O(\sqrt{p})$, then the rate for $i \in \mathcal M_1$ is simplified to $O_P((1 + n_i^2/p)\log p)$ while the rate for $i \in \mathcal M_2$ is $O_P((n_i/\sqrt{p} + n_i^2/p)\log p)$. This illustrates the advantage of Projected-PCA since its convergence is faster.
If we only consider very sparse covariance matrix so that $\|\bSigma\|_1$ is bounded, we can simply drop the term $\delta$ in both regimes.
Then, regime 1 achieves better rate if $p = O(n_i^2 \nu_p)$, but regime 2 dominates otherwise. 


\section{Conditional Graphical Model} \label{sec4}

We have summarized the order of bias caused by adjusting heterogeneity for each data source in Section \ref{sec3.5}. Now we combine the adjusted data together for further statistical analysis. As an example, we study graph estimation under a Gaussian graphical model.

Assume $\bu_t^i \sim N({\bf 0},\bSigma)$ and consider the class of the precision matrices:
\begin{equation} \label{eq2.6}
\mathcal F(s, R) = \Big\{ \bOmega: \bOmega \succ {\bf 0}, \; \|\bOmega\|_1 \le R, \; \max_{1\le i \le p} \sum_{j = 1}^p \mathbbm{1} (\Omega_{ij} \ne 0) \le s \Big\}.
\end{equation}
To simplify the analysis, we assume $R$ is fixed, but all the analysis can be easily extended to include growing $R$.

To estimate $\bOmega = \bSigma^{-1}$ via CLIME, we first need a covariance estimator as the input. 
We also assume here the number of factors is known, i.e., the exception probability of recovering $K^i$ has been ignored for ease of discussion. Such an estimator is natually given by
\begin{equation} \label{CovEst}
\widehat\bSigma = \frac{1}{N - \sum_{i \le m} K^i} \sum_{i = 1}^m \check{\bU^i} {\check{\bU^i}}'\,.
\end{equation}
Since the number of data sources is large, we focus on the typical case of diverging $N$ and $p$.

\subsection{Covariance estimation}\label{sec4.1}

Denote by $\bSigma_N$ the oracle sample covariance matrix i.e. $\bSigma_N = N^{-1} \sum_{i = 1}^m \bU^i {\bU^i}'$. We consider the difference of our proposed $\widehat\bSigma$ with $\bSigma_N$ in this subsection. The oracle estimator obviously attains the rate $\|\bSigma_N - \bSigma\|_{\max} = O_P(\sqrt{\log p/N})$.

Let ${\bxi}_{k}^i = \bU^i \bar{\bff_k^i}/\sqrt{n_i}$ where $\bar{\bff_k^i}$ is the $k^{th}$ column of $\bF^i$. It is Gaussian distributed with mean zero and variance $\bSigma$. Note that ${\bxi}_k^i$ are iid with respect to $k$ and $i$, using the assumption ${\bF^i}'\bF^i/n_i = \bI$. By the standard concentration bound,
$$
\Big\|\sum_{i \le m} \Big( \frac{1}{n_i}{\bU^i}{\bF^i}{\bF^i}'{\bU^i}' - K^i \bSigma\Big)\Big\|_{\max} = \Big\|\sum_{i \le m} \sum_{k \le K^i} \Big({\bxi}_k^i {{\bxi}_k^i}' - \bSigma \Big) \Big\|_{\max} = O_P\Big(\sqrt{K^{tot} \log p}\Big)\,,
$$
where $K^{tot} = \sum_{i\le m} K^i$. Therefore, by \eqref{eq3.6}, we have
\begin{equation} \label{eq4.1}
\begin{aligned}
\|\widehat\bSigma - \bSigma_N\|_{\max} = & \; \Big\| \frac{N}{N-\sum_{i\le m} K^i} \frac{1}{N} \sum_{i \le m} \Big(\check{\bU^i} \check{\bU^i}{}' - \bU^i {\bU^i}' + K^i \bSigma\Big) \\
& \; + \frac{\sum_{i\in \mathcal M} K^i}{N - \sum_{i\in \mathcal M} K^i} \Big(\frac1N \sum_{i \le m} \bU^i {\bU^i}' - \bSigma\Big) \Big\|_{\max} \\
=&:  O_P(a_{m,N,p})\,,
\end{aligned}
\end{equation}
where $a_{m,N,p}=
\frac{|\mathcal M_1| \log p}{N}  +  \frac{N_2 \log p}{N} \sqrt{\frac{\nu_p}{p}}  + \frac{\sqrt{K^{tot} \log p}}{N}   + \frac{K^{tot}}{N} \sqrt{\frac{\log p}{N}}$ and $N_2 = \sum_{i \in \mathcal M_2}  n_i$.

We now examine the difference of the ALPHA estimator from the oracle estimator for two specific cases.  In the first case, we apply PCA to all data sources, i.e., all $ i \in \mathcal M_1$ and $K^i$ is bounded. We then have $a_{m, N, p} = m \log p/N$. This rate is dominated by the oracle error rate $\sqrt{\log p/N}$ if and only if $m = O(\sqrt{N/\log p})$. This means traditional PCA performs optimally for adjusting heterogeneity as long as the number of subgroups grows slower than the order of $\sqrt{N/\log p}$.

If we apply PPCA to all data sources, i.e., $i \in \mathcal M_2$ and $K^i$ is bounded,  then $a_{m, N, p} = \sqrt{\nu_p/p} \log p+ \sqrt{m \log p}/N$. This rate is of smaller order than  rate $\sqrt{\log p/N}$ if $p/\log p > CN$ for some constant $C > 0$. The advantage of using PPCA is that  when $n_i$ is bound so that  $m \asymp N$, we can still achieve optimal rate of convergence so long as we have a large enough dimensionality at least of the order $N$.


\subsection{Precision matrix  estimation}

In order to obtain an estimator for the sparse precision matrix from $\widehat\bSigma$, we apply the CLIME estimator proposed by \cite{CLL11}. For a given $\widehat\bSigma$, CLIME solves the following optimization problem:
\begin{equation} \label{eq2.10}
\widehat\bOmega = \arg \min_{\bOmega} \|\bOmega\|_{1,1} \quad \text{subject to} \quad \|\widehat\bSigma \bOmega -\bI\|_{\max} \le \lambda,
\end{equation}
where $\|\bOmega\|_{1,1} = \sum_{i,j \le p} |\sigma_{ij}|$ and $\lambda$ is a tuning parameter. Note that (\ref{eq2.10}) can be solved column-wisely by linear programming.
However, CLIME does not necessarily generate a symmetric matrix. We can simply symmetrize it by taking the one with minimal magnitude of $\hat\sigma_{ij}$ and $\hat\sigma_{ji}$. The resulting matrix after symmetrization, still denoted as $\widehat\bOmega$ with a little bit abuse of notation, also attains good rate of convergence. In particular, we consider the sparse precision matrix class $\mathcal F(s, C_0)$ in (\ref{eq2.6}). The following lemma provides guarantee for recovering any sparse matrix $\bOmega \in \mathcal F(s, C_0)$.

\begin{thm}\label{th2.3}
Suppose $\bOmega \in \mathcal F(s, C_0)$ and $\widehat \bSigma$ given by (\ref{CovEst}) attains the rate $\|\widehat\bSigma - \bSigma_N\|_{\max} = O_P(a_{m, N, p})$ in (\ref{eq4.1}) with $\bSigma_N$ denoting oracle sample covariance matrix. Letting $\tau_{m,N,p} = \sqrt{\log p/N} + a_{m, N, p}$ and $\lambda \asymp \tau_{m,N,p}$, we have
$$
\|\widehat\bOmega - \bOmega\|_{\max} = O_p(\tau_{m,N,p}).
$$
Furthermore,
$$
\|\widehat\bOmega - \bOmega\|_1= O_p(s \tau_{m,N,p}) \quad \mbox{and} \quad \|\widehat\bOmega - \bOmega\|_2 = O_p(s \tau_{m,N,p}).
$$
\end{thm}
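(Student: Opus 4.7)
The plan is to adapt the CLIME analysis of Cai, Liu, and Luo (2011) to this multi-batch setting, where the sample covariance input is the heterogeneity-adjusted $\widehat{\bSigma}$ in~\eqref{CovEst} rather than an oracle sample covariance. The overall strategy has four steps: (i) control $\|\widehat{\bSigma}-\bSigma\|_{\max}$; (ii) verify that $\bOmega$ is feasible for~\eqref{eq2.10} with high probability; (iii) derive a max-norm bound on $\widehat{\bOmega}-\bOmega$ via the standard CLIME identity; and (iv) leverage $s$-row sparsity to upgrade to the $\ell_1$ and spectral bounds.

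For step~(i), the triangle inequality gives
\[
\|\widehat{\bSigma}-\bSigma\|_{\max} \le \|\widehat{\bSigma}-\bSigma_N\|_{\max}+\|\bSigma_N-\bSigma\|_{\max}.
\]
The first summand is $O_P(a_{m,N,p})$ by hypothesis. Since $\bSigma_N = N^{-1}\sum_{i,t}\bu_t^i(\bu_t^i)'$ is the sample covariance of $N$ i.i.d.\ Gaussian vectors with covariance $\bSigma$, a standard Gaussian concentration bound gives the second summand as $O_P(\sqrt{\log p/N})$, so $\|\widehat{\bSigma}-\bSigma\|_{\max}=O_P(\tau_{m,N,p})$. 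For step~(ii), using $\bSigma\bOmega=\bI$,
\[
\|\widehat{\bSigma}\bOmega-\bI\|_{\max} = \|(\widehat{\bSigma}-\bSigma)\bOmega\|_{\max} \le \|\widehat{\bSigma}-\bSigma\|_{\max}\|\bOmega\|_1 \le C_0\cdot O_P(\tau_{m,N,p}),
\]
so with $\lambda\asymp \tau_{m,N,p}$ (with a sufficiently large constant), $\bOmega$ lies in the feasible set with probability tending to one. Optimality of the column-wise CLIME program then gives $\|\widehat{\bOmega}_{1,\cdot j}\|_1\le\|\bOmega_{\cdot j}\|_1$ for every $j$, where $\widehat{\bOmega}_1$ denotes the pre-symmetrization solution.

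Step~(iii) is the heart of the argument. The key identity is
\[
\widehat{\bOmega}_1-\bOmega = \bOmega(\bSigma\widehat{\bOmega}_1-\bI) = \bOmega(\widehat{\bSigma}\widehat{\bOmega}_1-\bI)-\bOmega(\widehat{\bSigma}-\bSigma)\widehat{\bOmega}_1.
\]
Applying $\|AB\|_{\max}\le \|A\|_\infty \|B\|_{\max}$ together with the symmetry $\|\bOmega\|_\infty=\|\bOmega\|_1$ yields
\[
\|\widehat{\bOmega}_1-\bOmega\|_{\max} \le \|\bOmega\|_1\bigl(\lambda+\|\widehat{\bSigma}-\bSigma\|_{\max}\|\widehat{\bOmega}_1\|_1\bigr) = O_P(\tau_{m,N,p}).
\]
The symmetrization rule (pick the entry of smaller magnitude) satisfies $|\widehat{\Omega}_{ij}-\Omega_{ij}|\le\max(|\widehat{\Omega}_{1,ij}-\Omega_{ij}|,|\widehat{\Omega}_{1,ji}-\Omega_{ji}|)$ since $\bOmega$ is symmetric, so the max-norm bound transfers directly to $\widehat{\bOmega}$.

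For step~(iv), fix a column $j$ and let $S_j=\{i:\Omega_{ij}\ne 0\}$ with $|S_j|\le s$. Decompose $\|\widehat{\bOmega}_{1,\cdot j}-\bOmega_{\cdot j}\|_1 = \|(\widehat{\bOmega}_1-\bOmega)_{S_j,j}\|_1 + \|\widehat{\bOmega}_{1,S_j^c,j}\|_1$; the first piece is at most $s\|\widehat{\bOmega}_1-\bOmega\|_{\max}$, while column-wise CLIME feasibility yields $\|\widehat{\bOmega}_{1,S_j^c,j}\|_1 \le \|\bOmega_{S_j,j}\|_1 - \|\widehat{\bOmega}_{1,S_j,j}\|_1 \le s\|\widehat{\bOmega}_1-\bOmega\|_{\max}$ by the reverse triangle inequality. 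Maximizing over $j$ and accounting for symmetrization (which at worst mixes a column and a row error, doubling the constant) gives $\|\widehat{\bOmega}-\bOmega\|_1=O_P(s\tau_{m,N,p})$. The spectral bound is then immediate from symmetry via $\|A\|_2\le \sqrt{\|A\|_1\|A\|_\infty}=\|A\|_1$. I expect the main obstacle to be the careful bookkeeping in step~(i)—verifying that the heterogeneity-adjustment error $a_{m,N,p}$ cleanly adds to the intrinsic Gaussian sampling rate $\sqrt{\log p/N}$ without any cross terms that would spoil the bound—since once $\tau_{m,N,p}$ is in hand the remaining steps follow the Cai--Liu--Luo template without essential modification.
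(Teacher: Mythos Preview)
Your proposal is correct and follows essentially the same route as the paper's proof: both reduce the problem to controlling $\|\widehat\bSigma-\bSigma\|_{\max}$ via the triangle inequality through $\bSigma_N$, then invoke the CLIME analysis of Cai--Liu--Luo (2011) with $\bOmega$ feasible and $\|\widehat\bOmega\|_1\le\|\bOmega\|_1$. Your step~(iii) decomposition $\widehat\bOmega_1-\bOmega=\bOmega(\widehat\bSigma\widehat\bOmega_1-\bI)-\bOmega(\widehat\bSigma-\bSigma)\widehat\bOmega_1$ is algebraically equivalent to the paper's $\widehat\bOmega-\bOmega=-\bOmega(\bI-\widehat\bSigma\widehat\bOmega)+(\bI-\widehat\bSigma\bOmega)'\widehat\bOmega$, and your step~(iv) is in fact more detailed than the paper, which simply defers the $\|\cdot\|_1$ and $\|\cdot\|_2$ bounds to \cite{CLL11}; your concern about cross terms in step~(i) is unfounded, as the triangle inequality handles that cleanly.
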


The proof of the theorem can be found in the appendix. The theorem shows that CLIME has strong theoretical guarantee of convergence under different matrix norms. The rate of convergence has two parts, one corresponding to the minimax optimal rate \citep{Yua10} while the other is due to the error caused by estimating the unknown factors under various situations.  The discussions at the end of Section \ref{sec4.1} suggests that the latter error is often negligible.

\section{Numerical Studies} \label{sec5}

In this section, we first validate the theoretical results derived above through Monte Carlo simulations. Our purpose is to show that after heterogeneity adjustment, our proposed aggregated covariance estimator $\widehat\bSigma$  approximates well the oracle sample covariance $\bSigma_N$, thereby leading to accurate estimation of the true covariance matrix $\bSigma$ and precision matrix $\bOmega$. We also compare the performance of Projected-PCA and regular PCA on heterogeneity adjustments under different asymptotic settings.

In addition, we analyze a real brain image data using the proposed procedure. The dataset to be analyzed is the ADHD-200 data \citep{BMZ10}. It consists of rs-fMRI images of 688 subjects, of whom 491 are healthy and 197 are diagnosed with ADHD. We dropped 16 subjects (13 healthy, 3 diseased) in our analysis since their data contain missing values. Following \cite{PCN11}, we divided the whole brain into 264 regions of interest (ROI, $p=264$), which are regarded as nodes in our graphical model. Each brain is scanned for multiple times with sample sizes ranging from 76 to 261 ($76 \le n_i \le 261$). In each scan, we acquire the blood-oxygen-level dependent (BOLD) signal within each ROI. Here the heterogeneity among subjects arises from the difference in age, gender, handedness and IQ.

\subsection{Preliminary analysis}\label{sec6.1}

To analyze the data, the first question is what external covariates $\bW_j^i$ are for each of the 264 regions. 
Ideally, we hope these covariates have pervasive power on explaining the batch effect, while bearing no association with the graph structure of $\bu_t$. 
For the current data, we can construct such covariates from physical locations of the regions, since the level of batch effect is non-uniform over different locations of the brain when scanned in fMRI machines, and furthermore it has been widely acknowledged in biological study that spatial adjacency does not necessarily imply brain functional connectivity. 

Here we simply split the 264 regions into 10 clusters ($J=10$) by the hierarchy clustering (Ward's minimum variance method) of their physical locations and use the categorical cluster indices as the covariates of the nodes. Note that the healthy and ADHD group share the same physical locations. 
The clustering result is shown in Figure \ref{fig:0} and the spatial locations of the 264 regions are shown in Figure \ref{fig:4} in 10 different colors. Black (middle), green (left) and blue (right) represent roughly the region of frontal lobe; gray (middle), pink (left) and magenta (right) occupy the region of parietal lobe; red (left) and orange (right) are in the area of occipital lobe; finally yellow (left) and navy (right) provide information about temporal lobe.

Other possible values for $J$ could also be considered, but we do not want $J$ too large to overfit the smooth loading functions. Note that here since the covariate $W$ is one-dimensional ($d=1$) and discrete, the sieve basis functions are just indicator functions $\mathbbm{1}(w-0.5 \le W < w+0.5)$ for $w=1, \dots, 10$. We use the same covariates for all subjects.

\begin{figure}[ht]
	\centering
	\includegraphics[scale=.5]{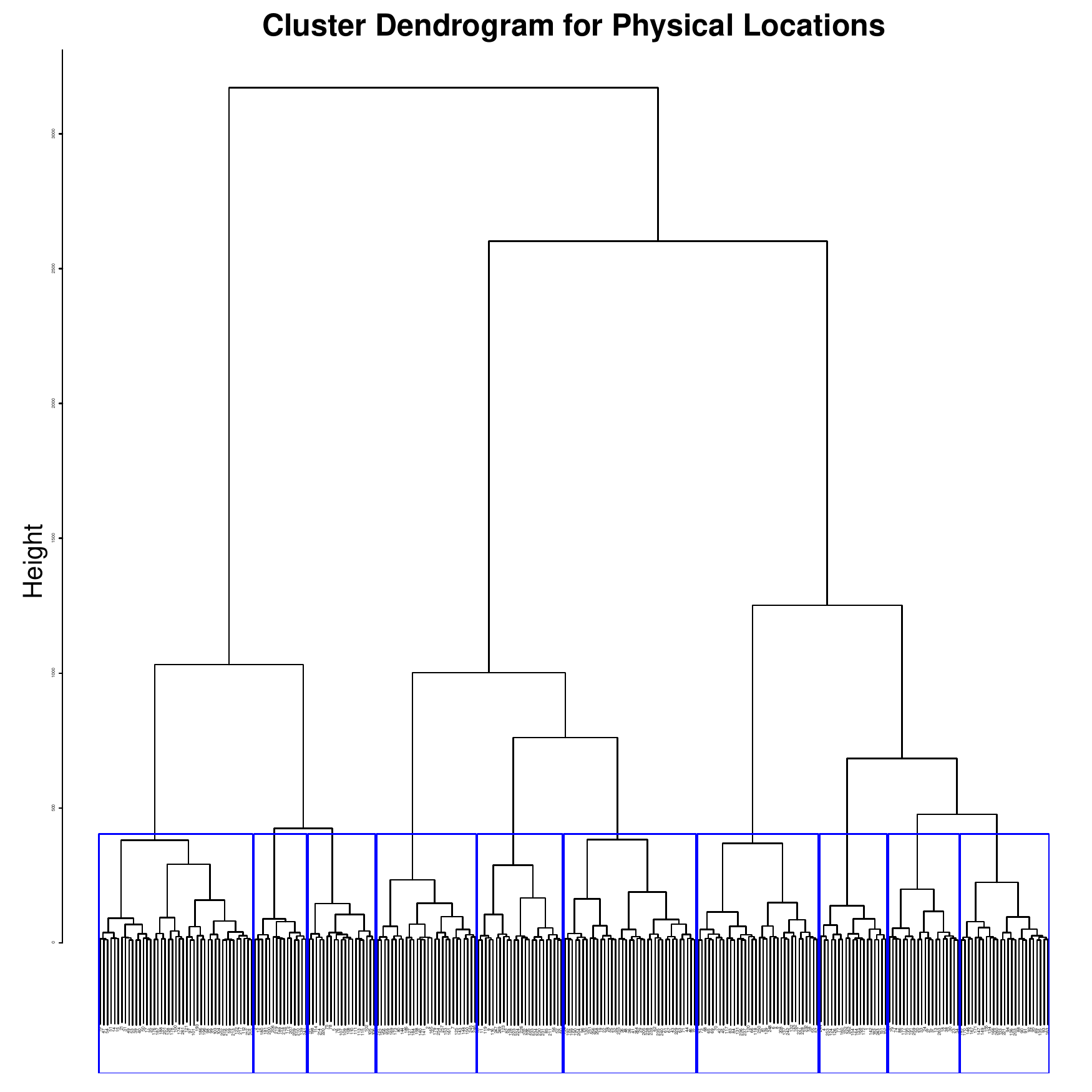}	
	\caption{Cluster Dendrogram for $J = 10$ for physical locations.} \label{fig:0}
\end{figure}

The next question is whether the selected covariates can explain the loadings well. We implemented the specification tests described in Section \ref{sec5} and find out the p-values for each subject. Most of the p-values are rather small ($82.4\%$ subjects in the healthy group and $79.0\%$ subjects in the patient group have p-values smaller than $10^{-3}$). We chose to control the FDR by Benjamini-Hochberg method \citep{BenHoc95} below the level of $1\%$. We discovered 425 healthy samples ($91.4\%$) and 129 diseased samples ($90.2\%$) rejecting the null, meaning that the selected covariates have significant explanatory powers on factor loadings of most subjects. We identified them as samples in the class $\mathcal M_2$ and used Projected-PCA to estimate the heterogeneity effect. For those whose null hypotheses were not rejected, we classified them as individuals in the class $\mathcal M_1$ and regular PCA was applied.

Based on which class each subject falls into, we employ the corresponding method to estimate the number of factors. We used $K_{\max} = 5$. The estimated number of factors for the two groups are summarized in Table \ref{tab:numFactors}.

\begin{table}[ht]
  \centering
    \caption{The distributions of the estimated number of factors for healthy and ADHD groups}  \label{tab:numFactors}
  \begin{tabular}{cccccc}
    \hline
    $\widehat K^i$ & 1  & 2 & 3 & 4 & 5\\
    \hline
    Healthy & 227 & 126 & 59 & 31 & 22\\
    ADHD & 67 & 34 & 23 &  12 & 7  \\
    \hline
  \end{tabular}
  \end{table}

\subsection{Synthetic datasets}\label{sec6.2}

In this simulation study, for stability, we use the first 15 subjects in the healthy group to calibrate the simulation models. The testing results reveal that the external covariates $\bW$ are informative for each of these 15 subjects. We specify four asymptotic settings for our simulation studies:
\begin{enumerate}
	\item $m=500$, $n_i=10$ for $i=1,..,m$, $p=100, 200, ..., 600$ and $\bG(\bW)\neq 0$;

	\item $m=100, 200, ..., 1000$, $n_i=10$ for $i=1,...,m$, $p=264$ and $\bG(\bW)\neq 0$;
	
	\item $m=100$, $n_i=10, 20, ..., 100$ for $i=1,...,m$, $p=264$ and $\bG(\bW)\neq 0$;
	
	\item $m=20, 40, ..., 200$, $n_i=20, 40, ..., 200$ for $i=1, ..., m$, $p=264$ and $\bG(\bW)=0$.
\end{enumerate}

Here the last setting represents regime 1 with $\bG(\bW) = 0$ where we should expect PCA to work well when the number of subjects is of order of square root of the total sample size, that is $m \asymp \sqrt{N}$.
The first three settings represent regime 2 with informative covariates $\bG(\bW)\neq 0$; they present asymptotics with growing $p$, $m$ and $n_i$ respectively.

\subsubsection{Model calibration and data generation}

We calibrate (estimate) the covariance matix $\bSigma$ of $\bu_t$, which is a $264$ by $264$ matrix, by our proposed method to the data in the healthy group. Plugging it as input in CLIME solver delivers a sparse precision matrix $\bOmega$, which will be taken as truth in the simulation. Note that due to the regularization in CLIME, $\bOmega^{-1}$ is not the same as $\bSigma$.  To obtain the covariance matrix used in setting 1, we also calibrate, using the same method, a sub-model that involves only the first 100 regions.  We then copy this $100\times100$ matrix multiple times to form a $p\times p$ block diagonal matrix and used it for simulations in setting 1.
We describe how we calibrate these `true models' and generate data from the models as follows.
\begin{enumerate}

	\item {\bf (External covariates)} For each $j \le p$, generate the external covariate $W$ i.i.d. from the multinomial distribution with $\mathbb P (W_j=s)=w_s, s\le 10$ where $\{w_s\}_{s=1}^{10}$ are calibrated with the hierarchy clustering results of the real data (Figure~\ref{fig:0}).
	
	\item {\bf (Calibration)} For the first 15 healthy subjects, obtain estimators for $\bF$, $\bB$ and $\bGamma$ by PPCA, resulting in $\widetilde{\bF}$, $\widetilde\bB = n^{-1} (\Phi(\bW)'\Phi(\bW))^{-1} \Phi(\bW)'\bX\widetilde{\bF}$ and $\widetilde\bGamma = n^{-1} (\bI - \bP) \bX \widetilde{\bF}$ according to \cite{FLW14}.  Use the rows of the estimated factors to fit a stationary VAR model $\bff_t=\bA \bff_{t-1}+\bepsilon_t$, where $\bepsilon_t \sim N(0, \bSigma_{\epsilon})$, and obtain the estimators $\widetilde\bA$ and $\widetilde\bSigma_{\epsilon}$.
		
   \item {\bf (Simulation)} For each subject $i \leq m$, pick one of the 15 calibrated models and their associated parameters from above at random and do the following.

	\begin{enumerate}
		\item Generate $\gamma_{jk}^i$ i.i.d. from $N(0, \tilde\sigma_{\gamma}^2)$ where $\tilde\sigma_{\gamma}^2$ is  the variance of all entries of $\widetilde\bGamma$. For the first three settings, compute the `true' loading matrix $\bLambda^i=\Phi(\bW)\widetilde\bB+\bGamma^i$. For the last setting, set $\bLambda^i=\bGamma^i$ since $\bG(\bW) = 0$.

		\item Generate factors $\bff ^i_t$ from the VAR model $\bff ^i_t=\widetilde\bA \bff ^i_{t-1}+\bepsilon_t$ with $\bepsilon_t \sim N(0, \widetilde\bSigma_{\epsilon})$, where the parameters $\widetilde\bA$ and $\widetilde\bSigma_{\epsilon}$ are taken from the fitted values in step 2.
		
		\item Finally, generate the observed data $\bX^i=\bLambda^i\bF^{i'}+\bU^i$, where each column of $\bU^i$ is randomly sampled from $N({\bf 0}, \bOmega^{-1})$,
where $\bOmega$ has been calibrated by the CLIME solver as described at beginning of the section.
		
	\end{enumerate}
	
\end{enumerate}

\subsubsection{Estimation of $\bSigma$}

\begin{figure}[ht]
	\centering
	\begin{tabular}{cc}
		$\qquad\|\widehat\bSigma-\bSigma_N\|_{\max} \qquad\qquad \|\widehat\bSigma-\bSigma\|_{\max}$ & $\qquad\|\widehat\bSigma-\bSigma_N\|_{\max} \qquad\qquad \|\widehat\bSigma-\bSigma\|_{\max}$ \\
		\includegraphics[scale=.45]{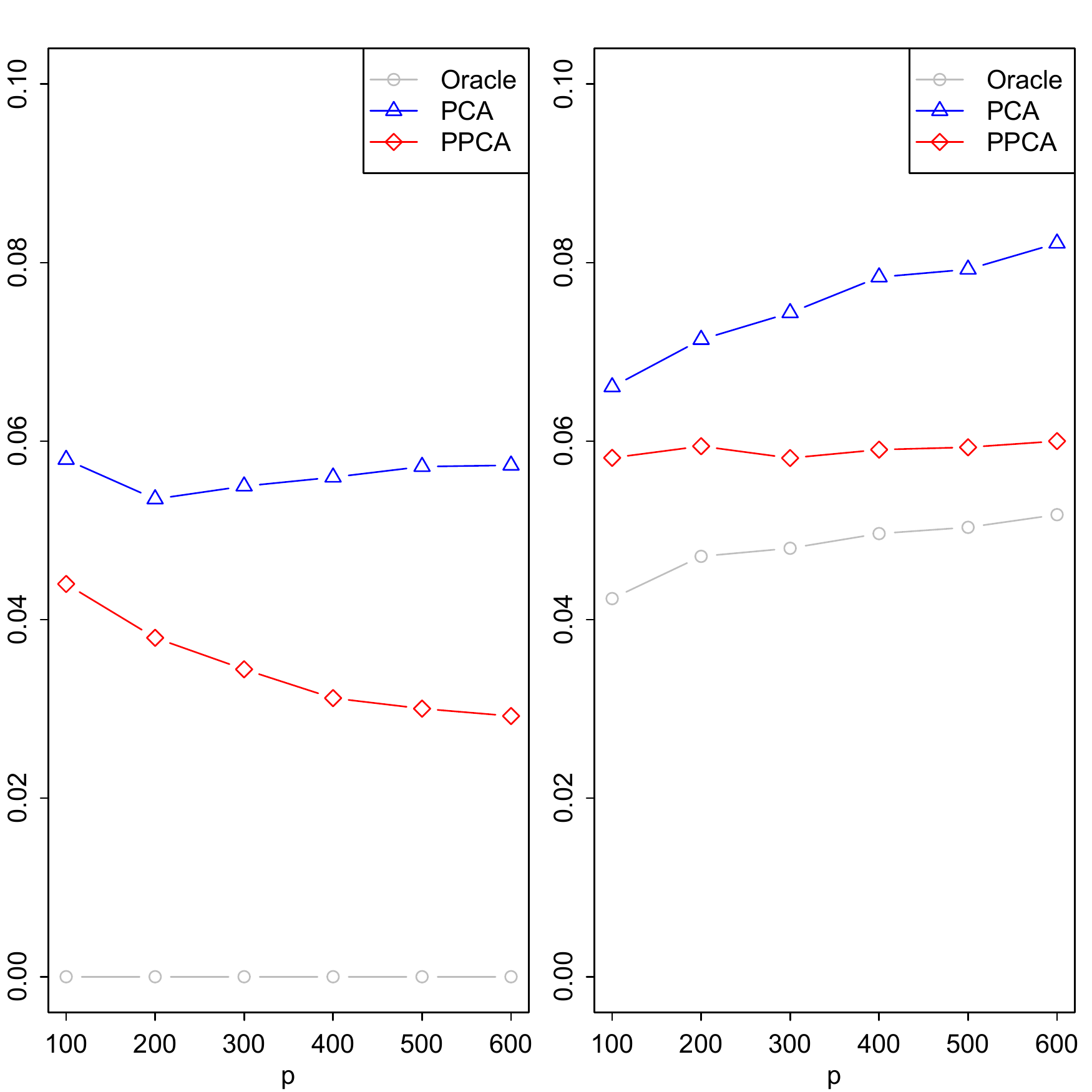} & \includegraphics[scale=.45]{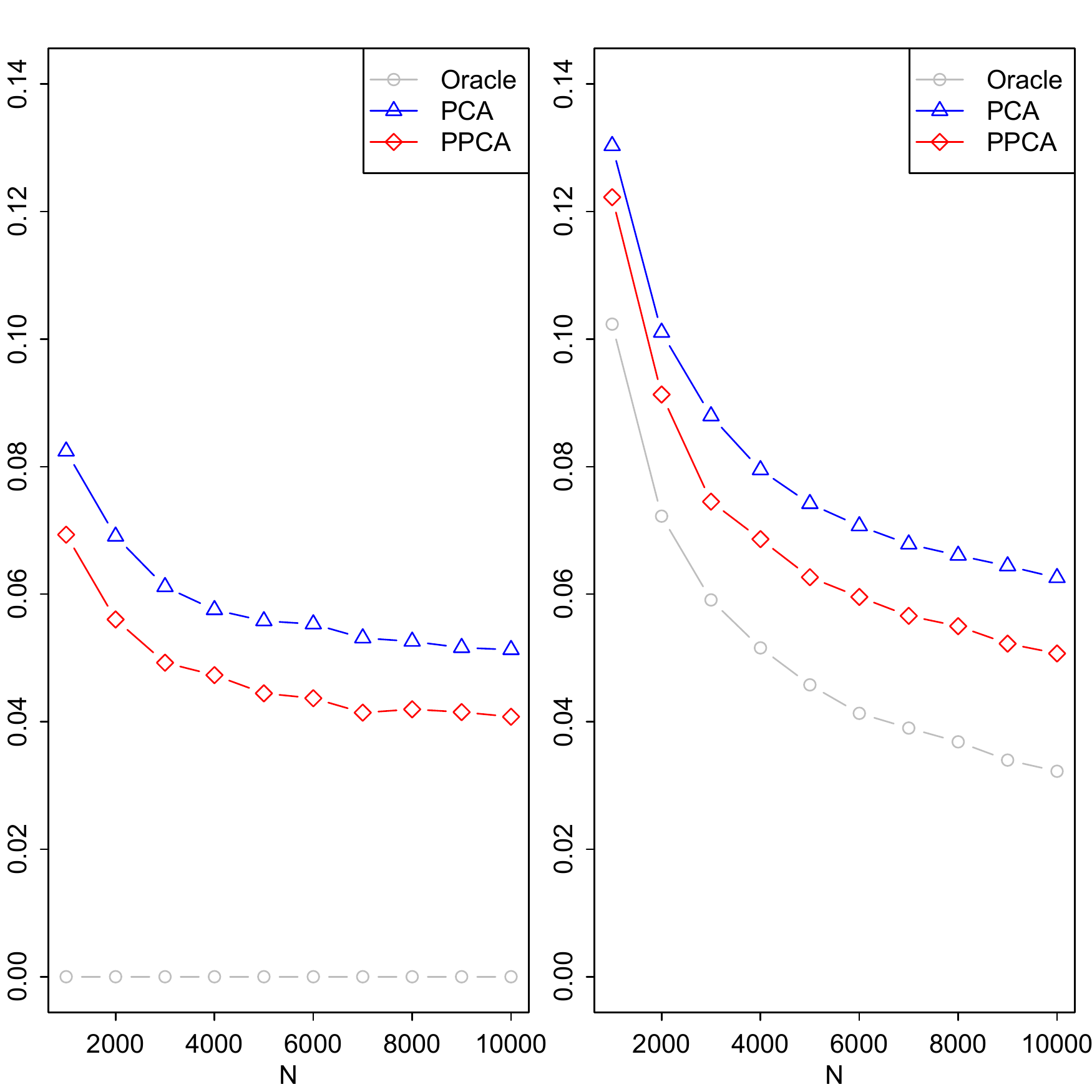}\\
		Case 1 & Case 2 \vspace{.2in} \\
		$\qquad\|\widehat\bSigma-\bSigma_N\|_{\max} \qquad\qquad \|\widehat\bSigma-\bSigma\|_{\max}$ & $\qquad\|\widehat\bSigma-\bSigma_N\|_{\max} \qquad\qquad \|\widehat\bSigma-\bSigma\|_{\max}$ \\
		\includegraphics[scale=.45]{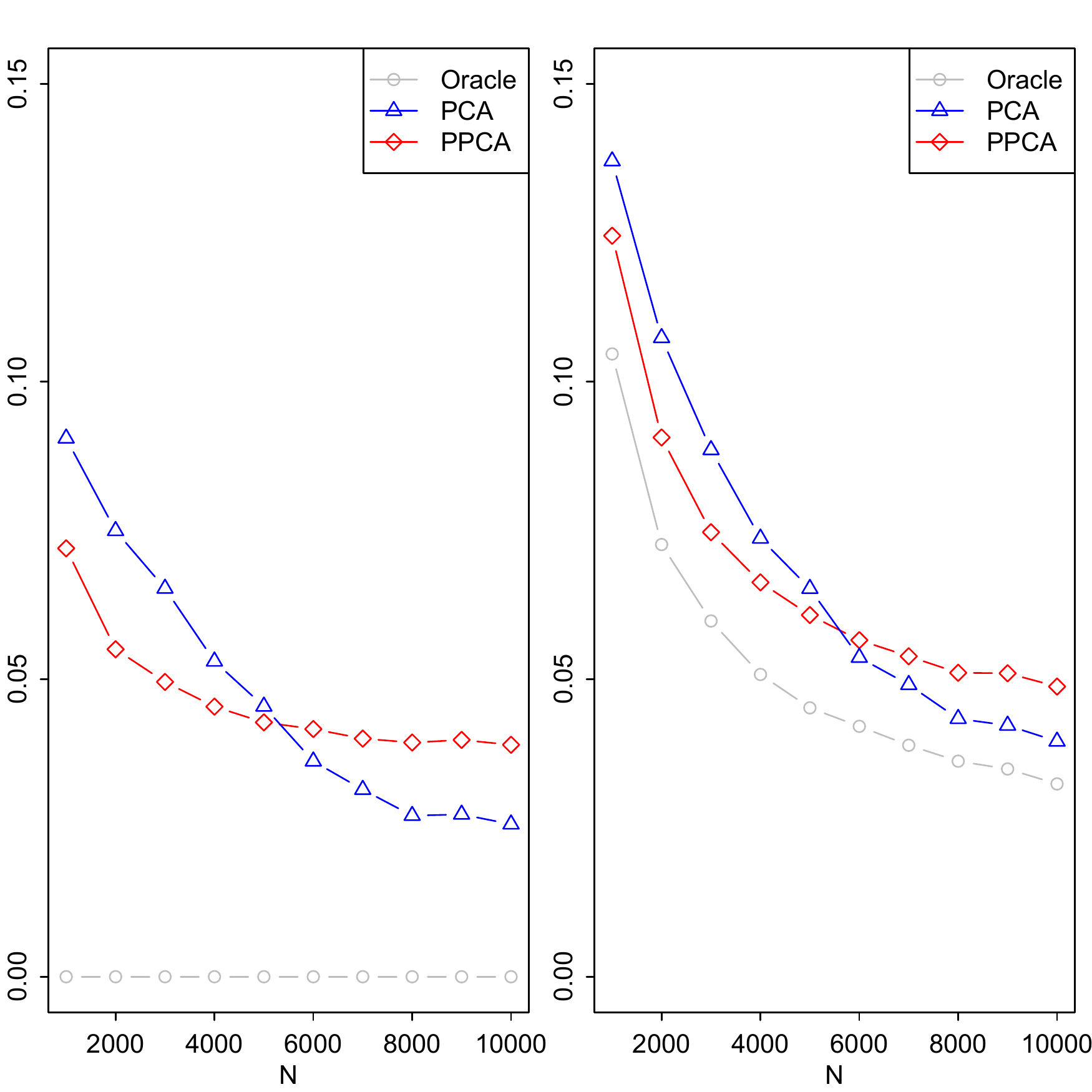} & \includegraphics[scale=.45]{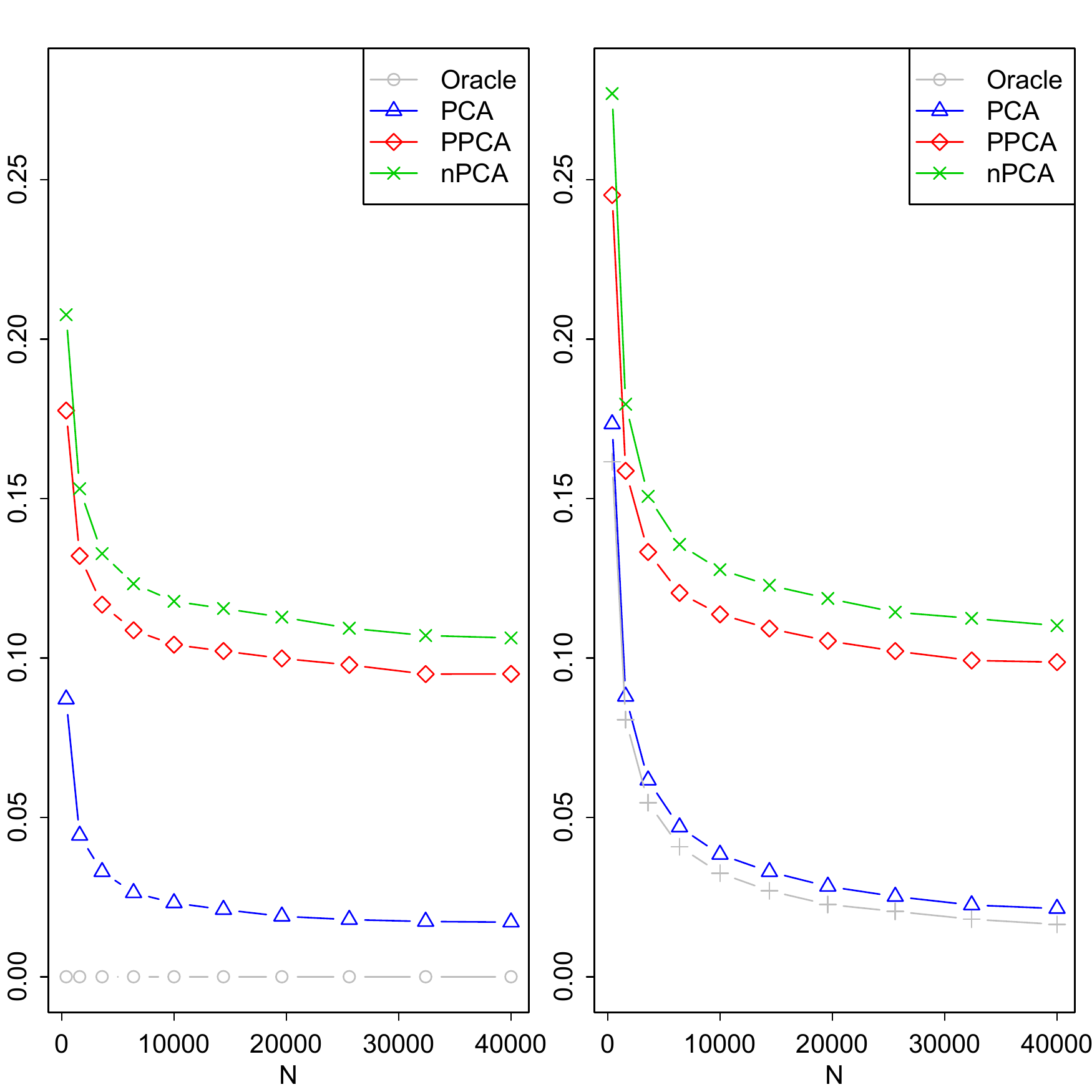}\\
		Case 3 & Case 4
	\end{tabular}
	\caption{Estimation of $\bSigma$ by PCA, PPCA and the oracle sample covariance matrix for 4 different settings. Case 1:  $m$ and $n_i$ are fixed while the dimension $p$ increases; case 2: $n_i$ and $p$ are fixed while $m$ increases; case 3: $m$ and $p$ are fixed while $n_i$ increases; case 4:  $p$ is fixed, and both $m$ and $n_i$ increase and conditions for PPCA are violated. } 	
	\label{fig:1}
\end{figure}

In this subsection, we investigate the errors of estimating covariance of $\bu_t$ in max-norm after applying Projected-PCA or regular PCA for heterogeneity adjustment. We also compare them with the estimation errors if we naively pool all the data together without any heterogeneity adjustment, but the estimation errors for the first 3 cases are too large to fit in the graph. Denote the oracle sample covariance of $\bu_t$ by $\bSigma_N$ as before. The estimation errors under all four settings are presented in Figure \ref{fig:1}, which are based on 100 simulations.

In Case 1, $m$ and $n_i$ are fixed while the dimension $p$ increases. For this setting, $n_i$ is small and this highlights more the advantages of Projected-PCA over regular PCA. From the left panel, we observe that increase of dimensionality improves the performance of Projected-PCA. This is consistent with the rate we derived in theories.
In Case 2, $n_i$ and $p$ are fixed while $m$ increases. Both Projected-PCA and regular PCA benefit from increasing number of subjects. However, since $n_i$ is small, again Projected-PCA outperforms regular PCA. In Case 3, $m$ and $p$ are fixed while $n_i$ increases. Again both methods achieve better estimation as $n_i$ increases, but more importantly, regular PCA outperforms Projected-PCA when $n_i$ is large enough. This is again consistent with our theories.  As illustrated by Section $4.1$, when $m$ is fixed, PCA attains the convergence rate $\|\widehat\bSigma-\bSigma\|_{\max}=O_P(\sqrt{\log p/N})$, while Projected-PCA only achieves $\|\widehat\bSigma-\bSigma\|_{\max}=O_P(\log p/\sqrt{p})$, which is worse than PCA when $p/\log p = o(N)$. In Case 4, $p$ is fixed, and both $m$ and $n_i$ increase. Note that the covariates have no explanation power at all, i.e., Condition \ref{ass2.1.2} about pervasiveness does not hold so that PPCA is not applicable. As expected, adjusting by PCA behaves much better than by Projected-PCA, which can sometimes be as bad as `nPCA', corresponding to no heterogeneity adjustment.  This is not unexpected as we utilized a noisy external covariates.

\subsubsection{Estimation of $\bOmega$}

\begin{figure}[ht]
	\centering
	\begin{tabular}{cc}
		$\qquad\|\widehat\bOmega-\bOmega\|_{\max} \qquad\qquad \|\widehat\bOmega-\bOmega\|_{1}$ & $\qquad\|\widehat\bOmega-\bOmega\|_{\max} \qquad\qquad \|\widehat\bOmega-\bOmega\|_{1}$ \\
		\includegraphics[scale=.45]{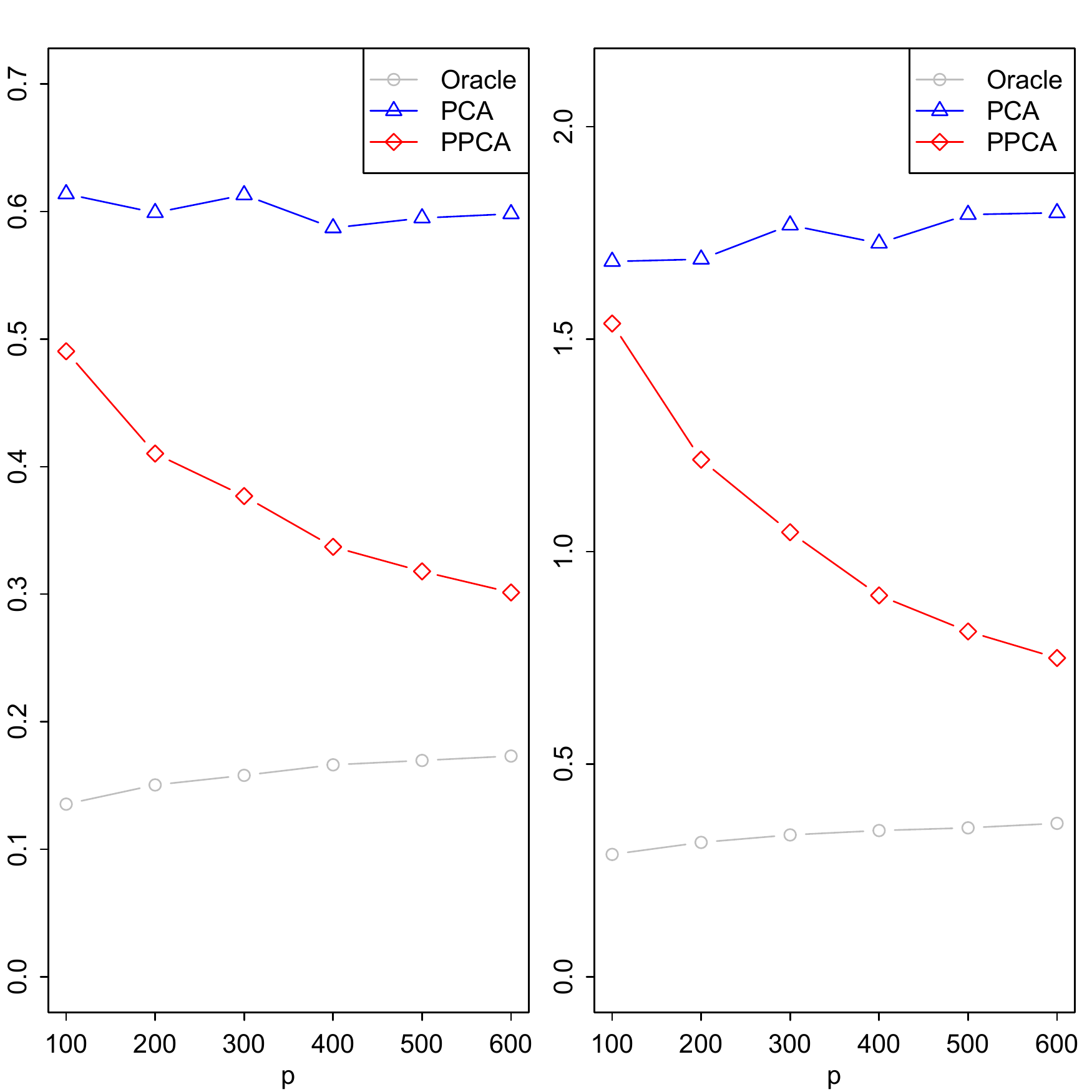} & \includegraphics[scale=.45]{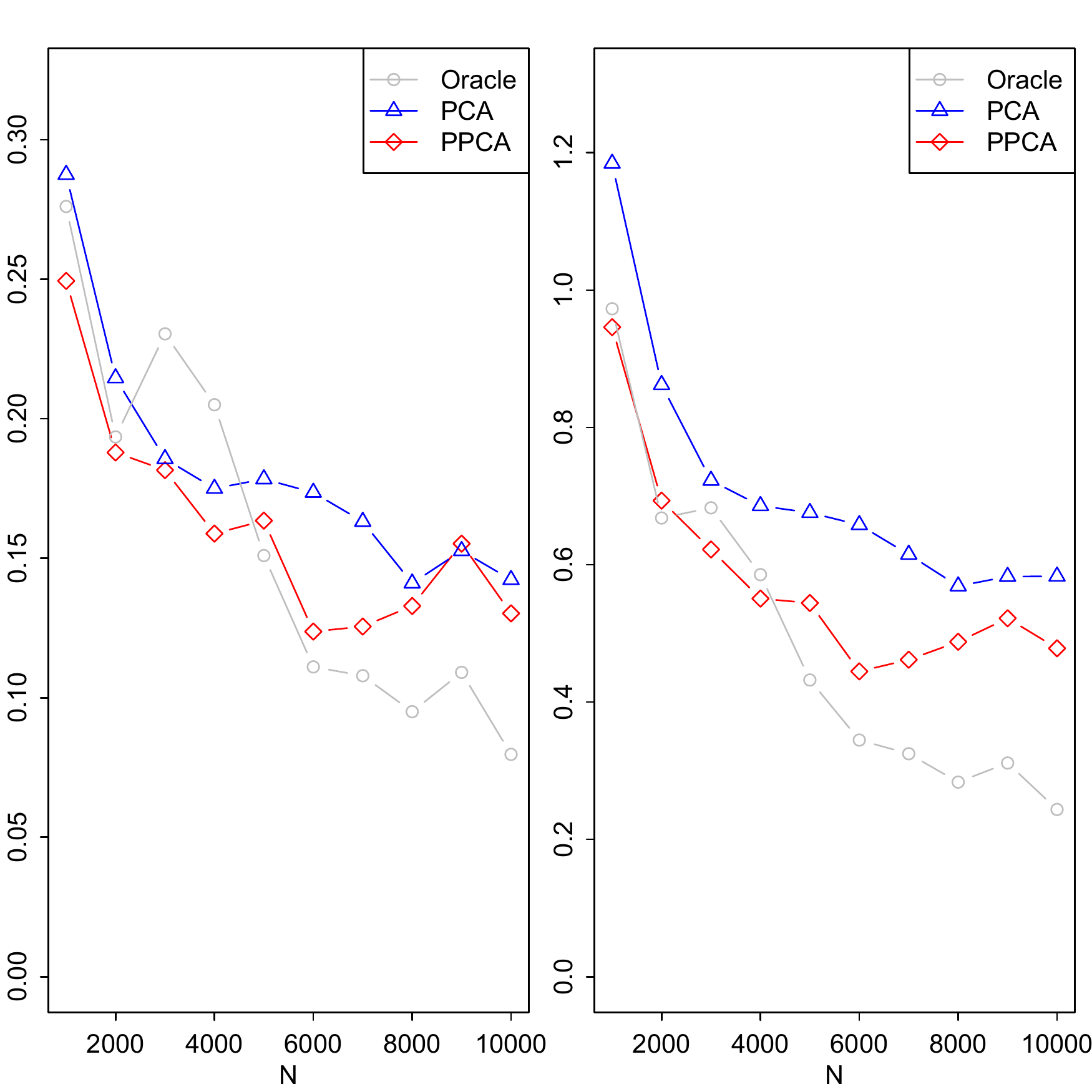}\\
		Case 1 & Case 2 \vspace{.2in} \\
		$\qquad\|\widehat\bOmega-\bOmega\|_{\max} \qquad\qquad \|\widehat\bOmega-\bOmega\|_{1}$ & $\qquad\|\widehat\bOmega-\bOmega\|_{\max} \qquad\qquad \|\widehat\bOmega-\bOmega\|_{1}$ \\
		\includegraphics[scale=.45]{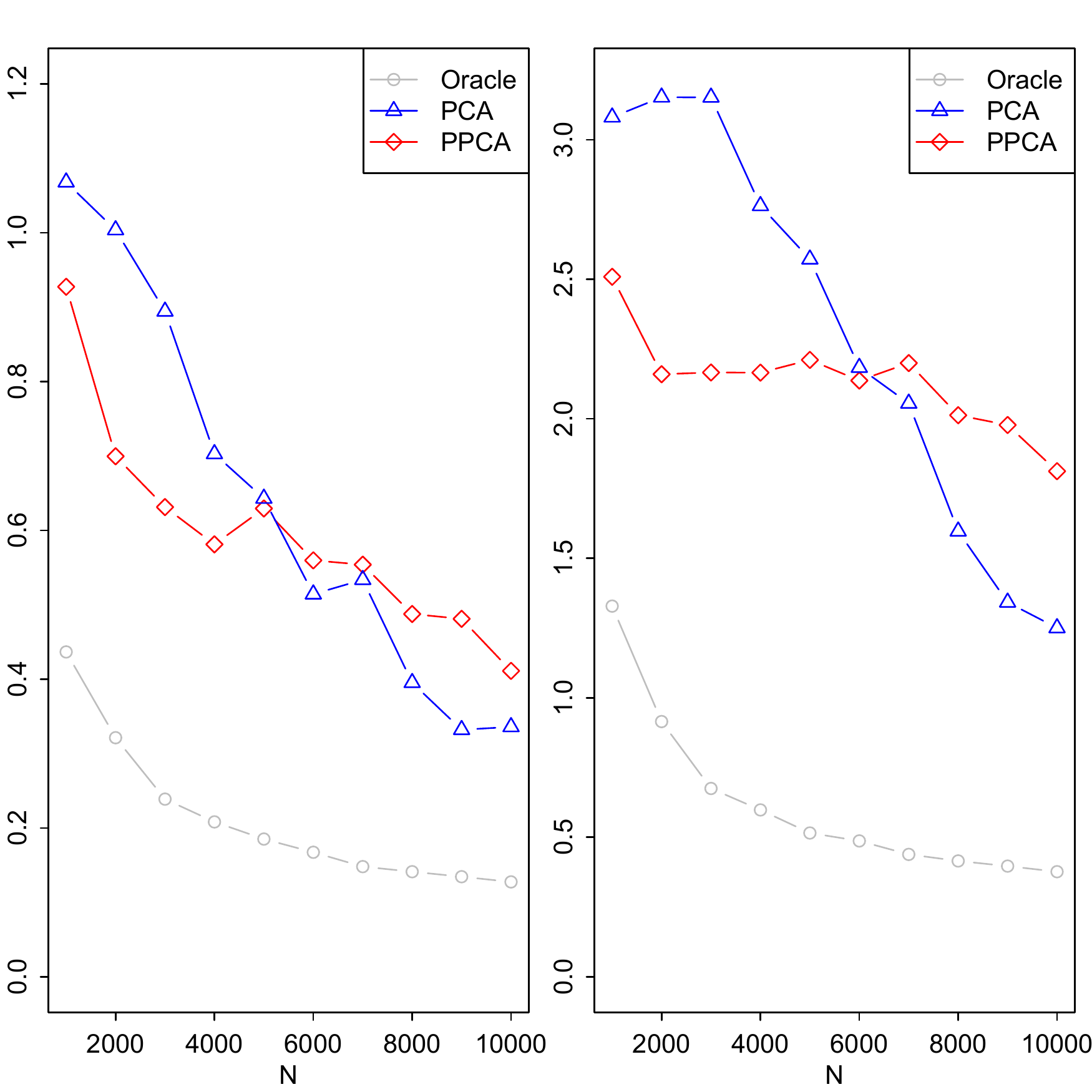} & \includegraphics[scale=.45]{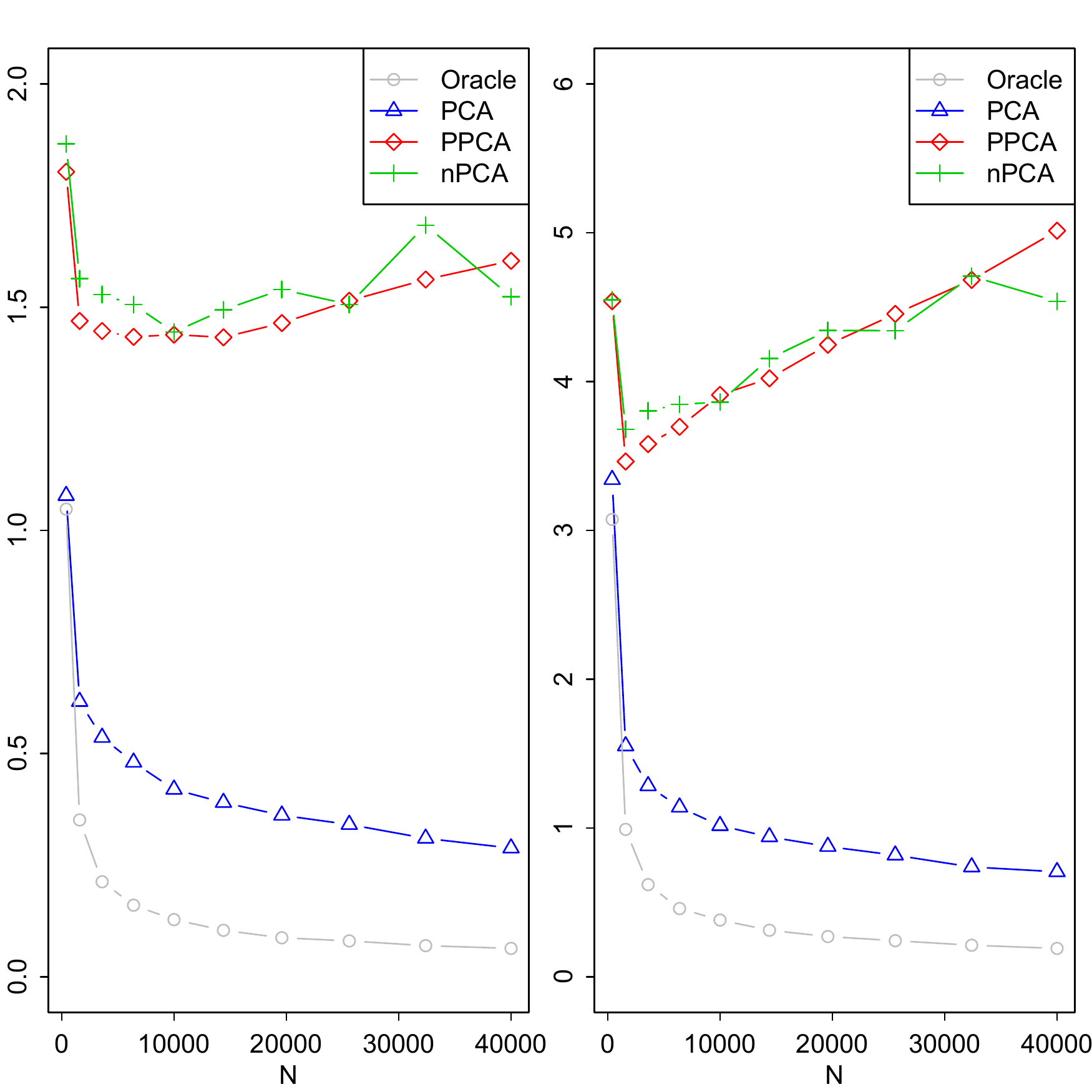}\\
		Case 3 & Case 4
	\end{tabular}
	\caption{Estimation of $\bOmega$.   Presented are the estimation errors in max-norm and in $L_1$-norm for 4 different settings.  The same captions in Figure~\ref{fig:1} apply.}
	\label{fig:2}
\end{figure}

In this subsection, we focus on estimation error of the precision matrix of $\bu_t$. We plug $\widehat\bSigma$, obtained from data after adjusting for heterogeneity, into CLIME to get an estimator $\widehat\bOmega$ of $\bOmega$. In Figure \ref{fig:2}, $\|\widehat\bOmega-\bOmega\|_{\max}$ and $\|\widehat\bOmega-\bOmega\|_1$ are depicted under the same four asymptotic settings as before. From the plots we see $\|\widehat\bOmega-\bOmega\|_{\max}$ and $\|\widehat\bOmega-\bOmega\|_{1}$ share similar behavior with $\|\widehat\bSigma-\bSigma\|_{\max}$ in all the four settings. In the first three cases, if we do not adjust data heterogeneity, $\|\widehat\bOmega-\bOmega\|_{\max}$ and $\|\widehat\bOmega-\bOmega\|_1$ will be too large to be fitted in the current plots.

\begin{figure}[ht]
	\centering
	\begin{tabular}{cc}
		\includegraphics[trim=0cm 0 0 0, scale=.45, clip]{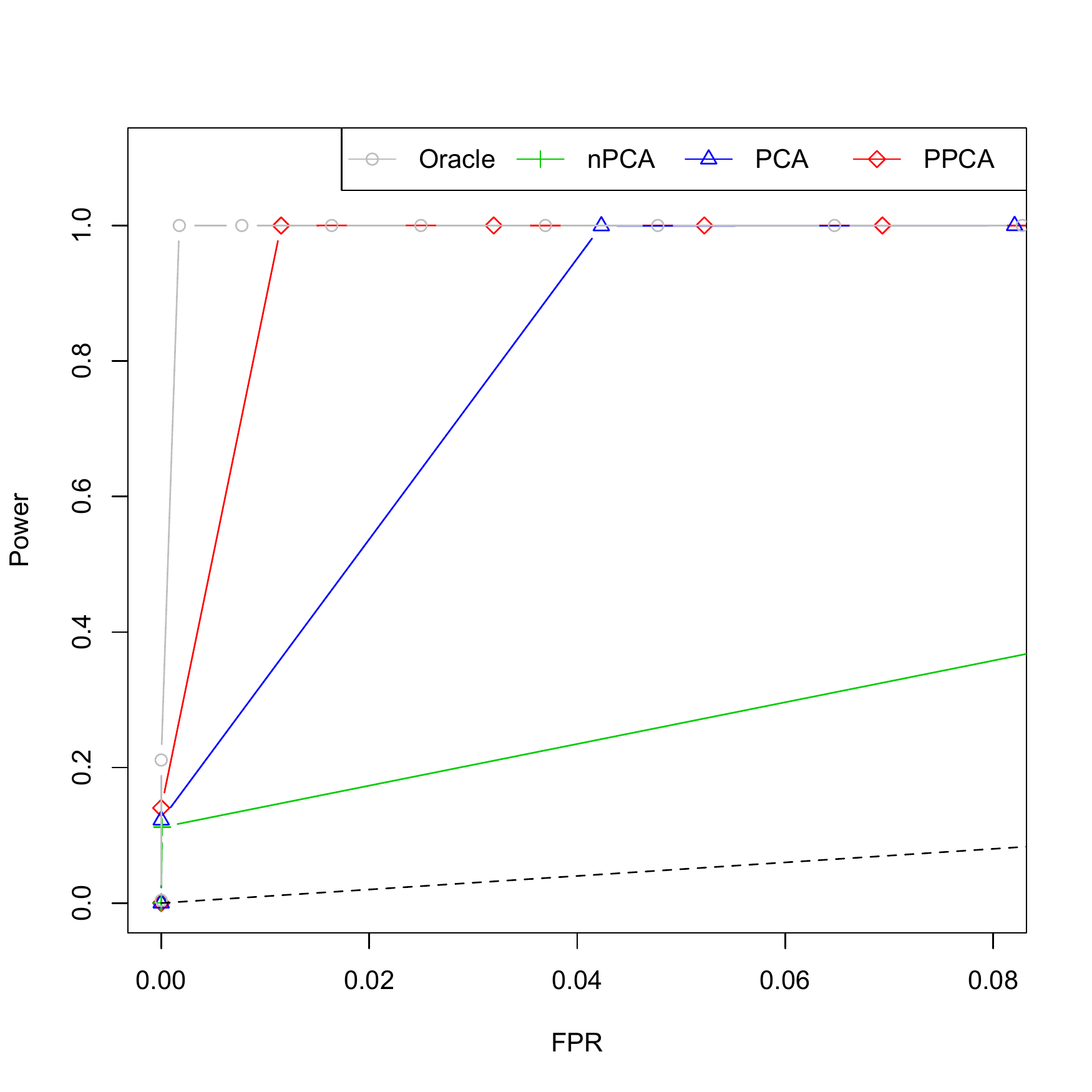} & \includegraphics[trim=0cm 0 0 0, scale=.45, clip]{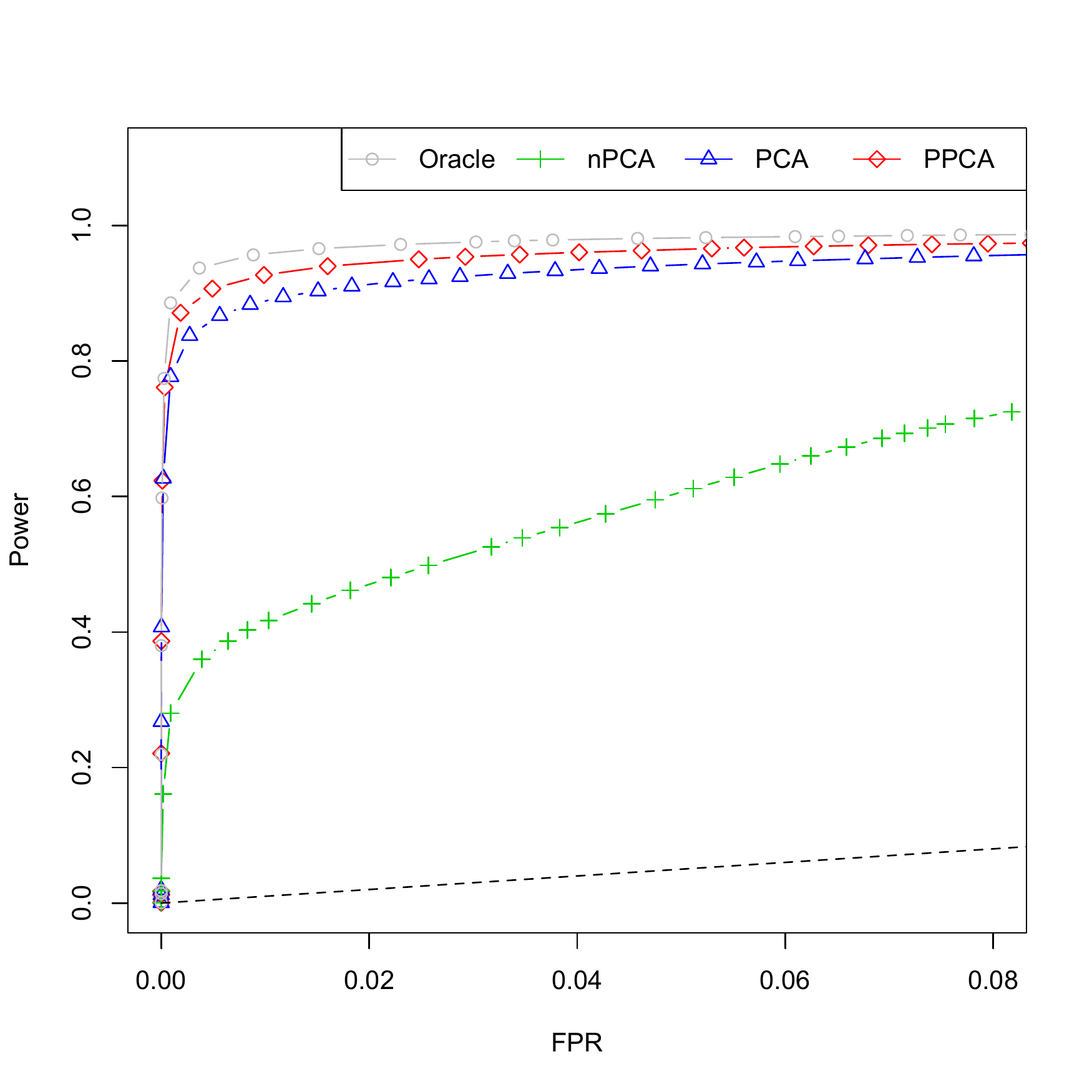}\\
		Case 1: $m=500$, $n_i=10$, $p=100$ & Case 2: $m=400$, $n_i=10$, $p=264$ \vspace{.2in} \\
		\includegraphics[scale=.45]{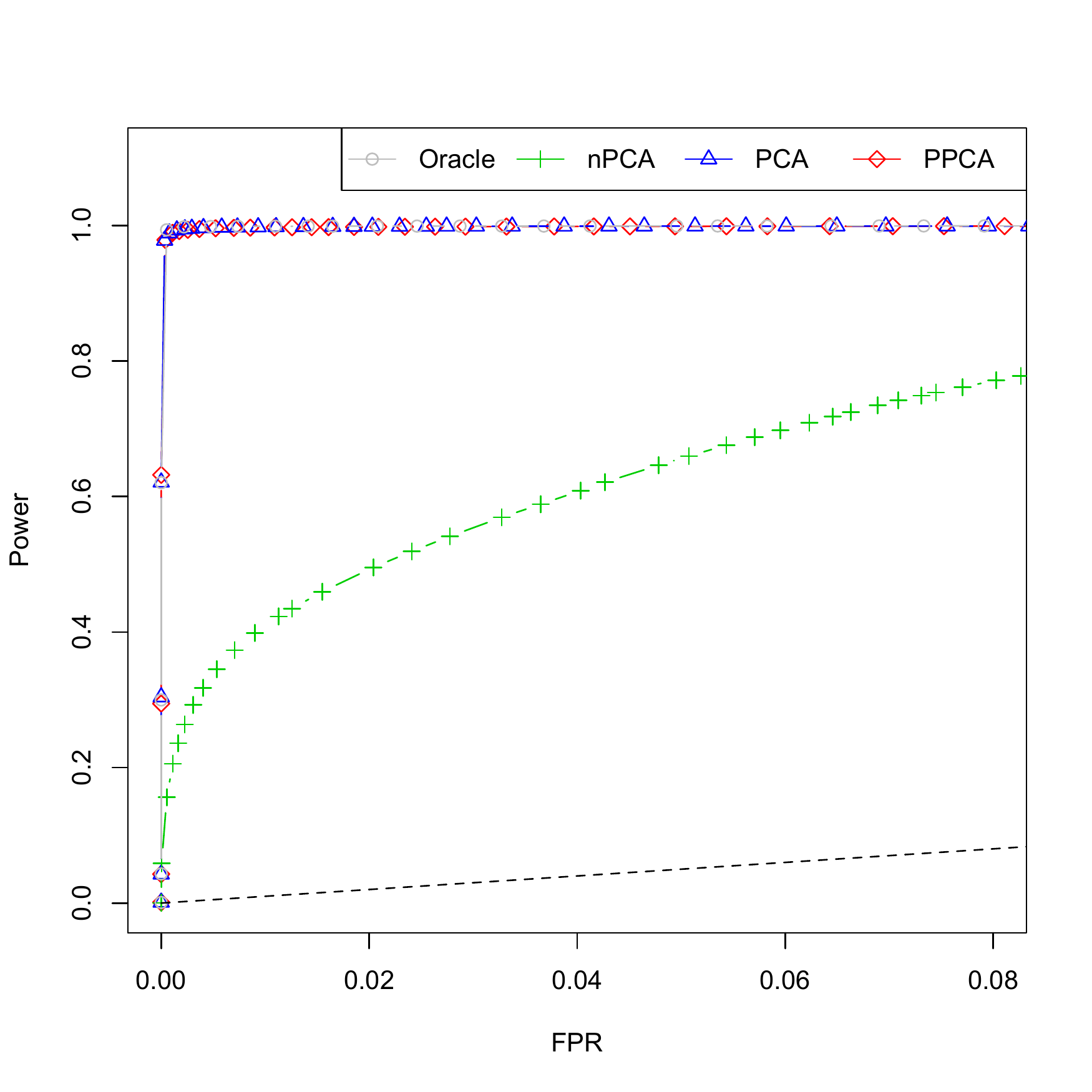} & \includegraphics[scale=.45]{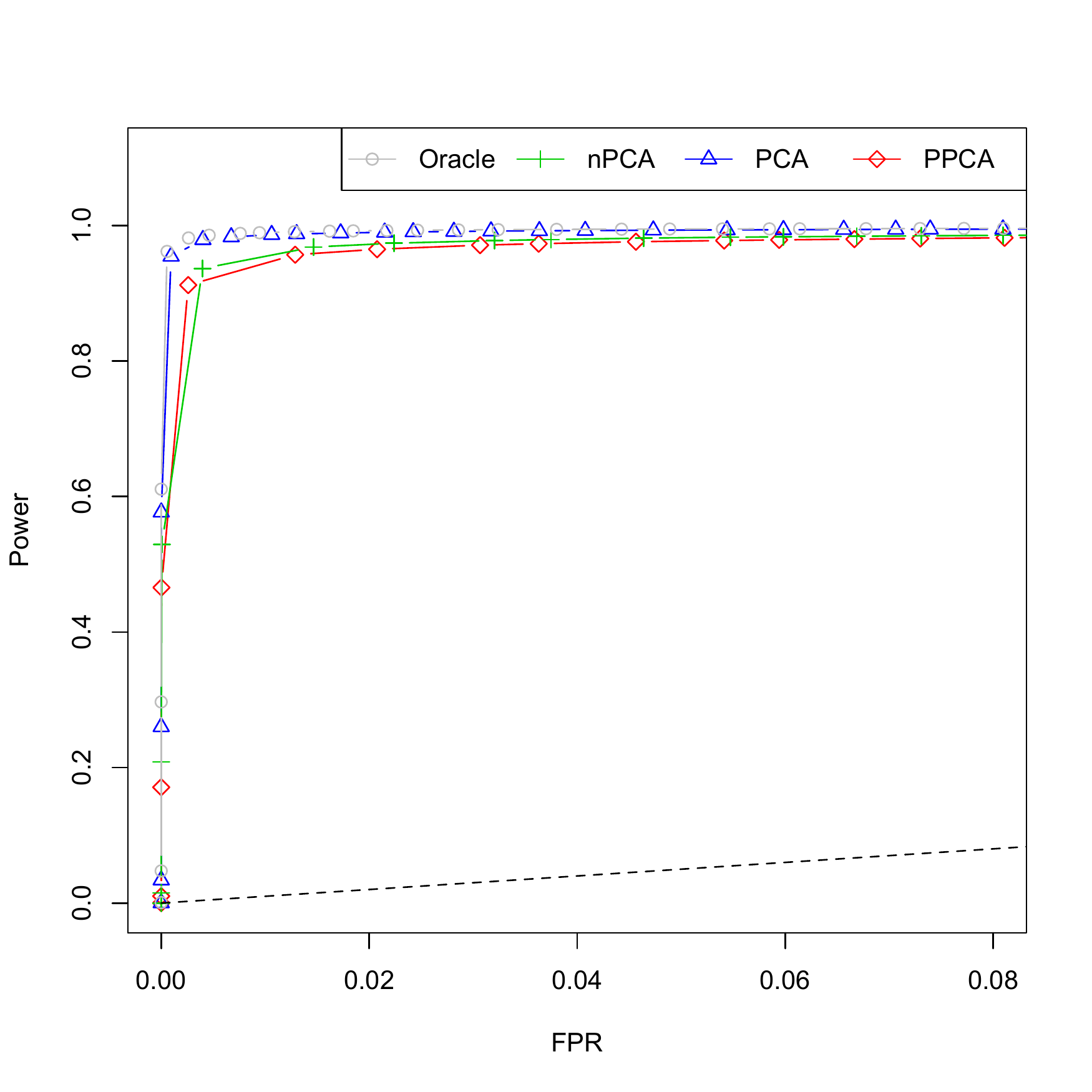}\\
		Case 3: $m=100$, $n_i=100$, $p=264$ & Case 4: $m=60$, $n_i=60$, $p=264$
	\end{tabular}
	\caption{ROC curves for sparsity recovery of $\bOmega$ for 4 different settings.  The captions in Figure~\ref{fig:1} apply.}
		\label{fig:3}
\end{figure}

We also present the ROC curves of our proposed methods in Figure \ref{fig:3}, which is of interest to readers concerned with sparsity pattern recovery. The black dashed line is the 45 degree line connecting $(0,0)$ and $(1,1)$, representing performance of the random guess. It is obvious from those plots that heterogeneity adjustment very much improves the sparsity recovery of the precision matrix $\bOmega$. When the sample size of each subject is small, genuine pervasive covariates increase the power of Projected-PCA method while on the other hand if the sample size is relatively large, PCA is sufficiently good in recovering graph structures. Also notice that in all cases, the naive method with no heterogeneity adjustment can still achieve a certain amount of power, but we can improve the performance dramatically by correcting the batch effects.

\subsection{Brain image network data}\label{sec6.3}

We report the estimated graphs for both the healthy group and the ADHD patient group with batch effects removed using three methods:
(1) PPCA using physical locations of ROI as covariates;
(2) PCA without using any covariates;
(3) no-PCA, which
ignores heterogeneity and naively pool the data from all subjects together. We took various sparsity levels of the networks from $1\%$ to $5\%$ (corresponding to the same set of $\lambda$'s for two groups) and selected the common edges, which are stable with respect to tuning, to be depicted.

\begin{figure}[ht]
	\centering
	\begin{tabular}{ccc}
		\includegraphics[trim=0cm 0 0 0, scale=.2, clip]{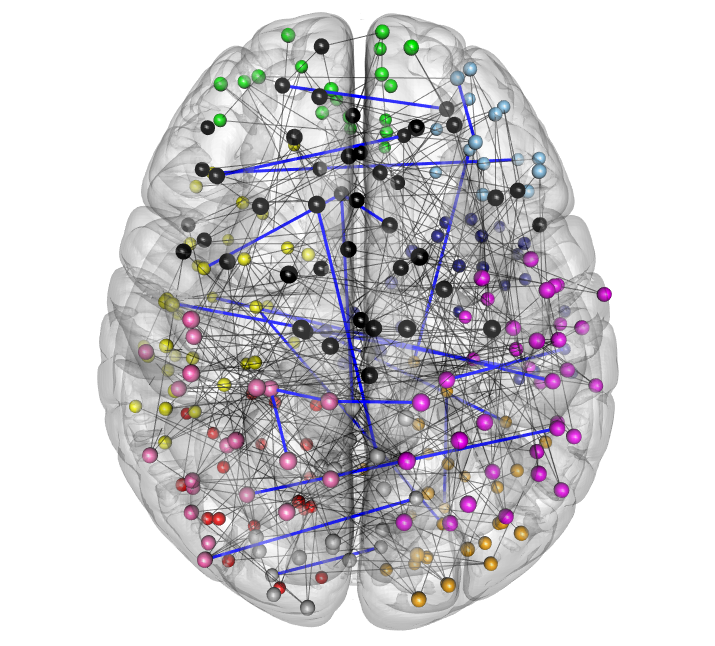} & \includegraphics[trim=0cm 0 0 0, scale=.2, clip]{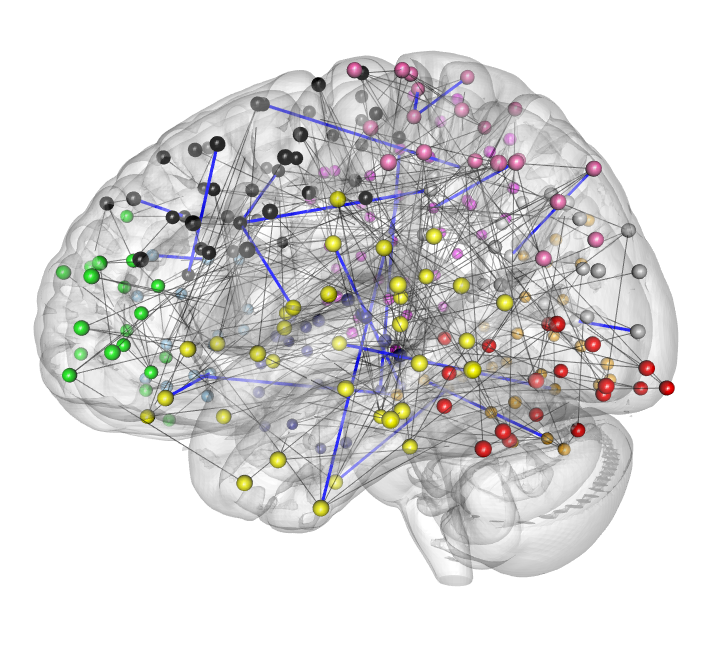} & \includegraphics[trim=0cm 0 0 0, scale=.2, clip]{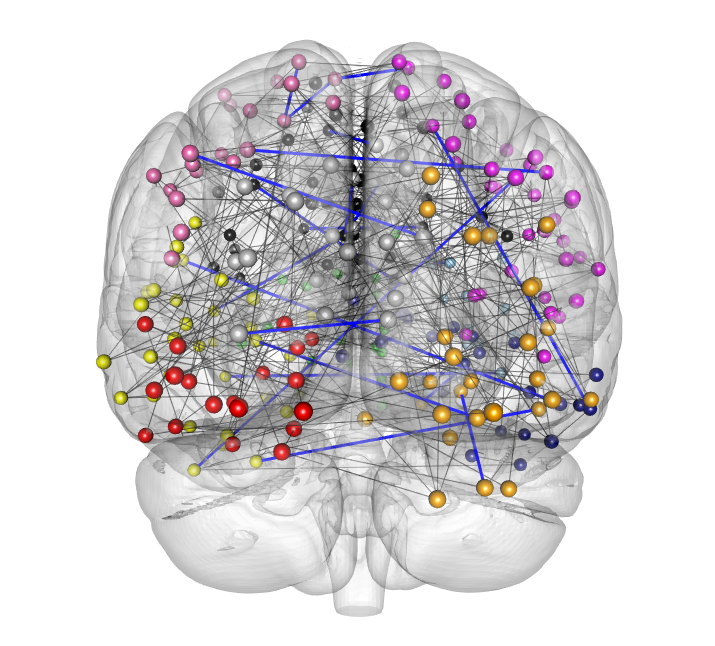}\\
		(a) Health, Transverse & (b) Health, Sagittal & (c) Health, Coronal \vspace{.2in} \\
		\includegraphics[trim=0cm 0 0 0, scale=.2, clip]{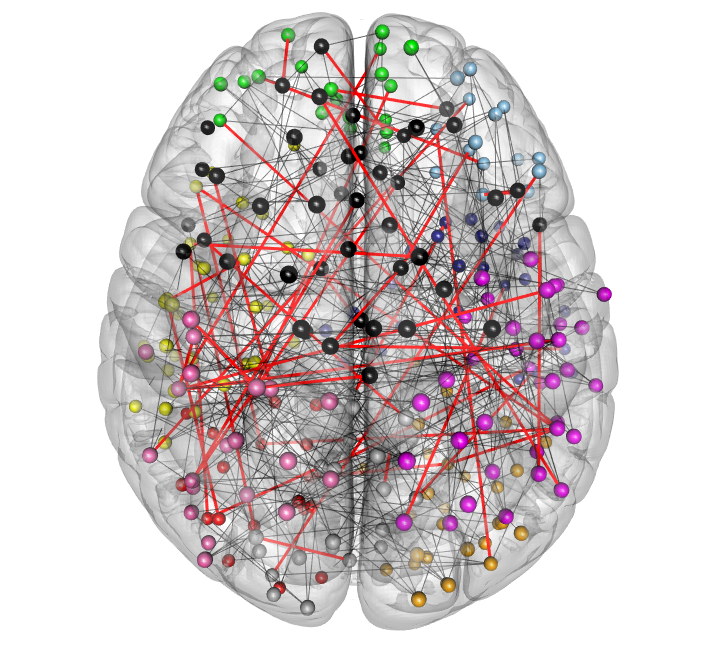} & \includegraphics[trim=0cm 0 0 0, scale=.2, clip]{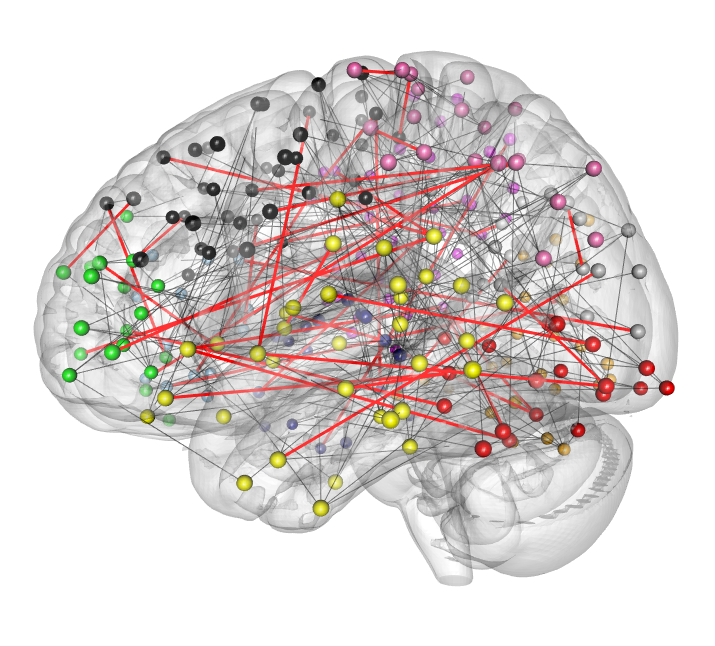} & \includegraphics[trim=0cm 0 0 0, scale=.2, clip]{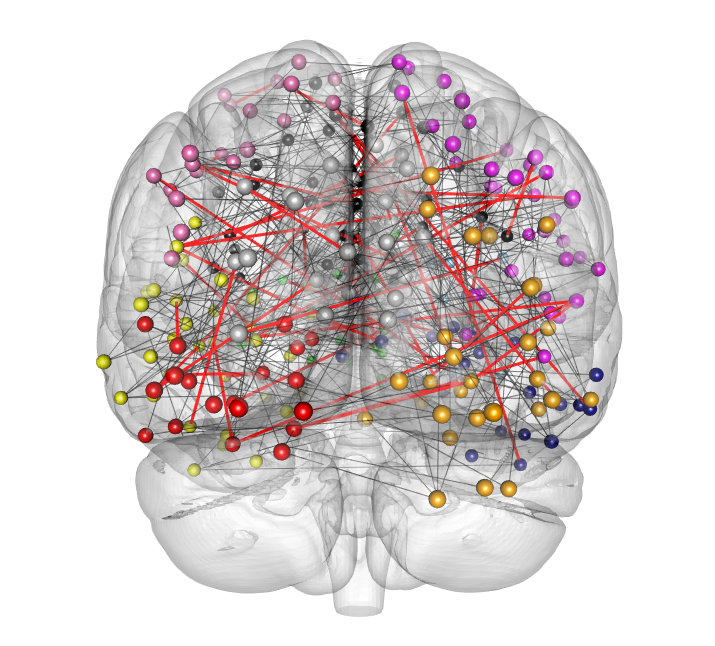}\\
		(a) ADHD, Transverse & (b) ADHD, Sagittal & (c) ADHD, Coronal \vspace{.2in} \\
	\end{tabular}
	\caption{Estimated brain functional connectivity networks using physical locations as covariates to correct heterogeneity. $10$ region clusters are labeled in $10$ colors. Black, blue and red edges represent respectively common edges, unshared edges in the healthy group and in the ADHD group. }
		\label{fig:4}
\end{figure}

The brain network produced by Method 1 is reported in Figures \ref{fig:4}. We omit the networks produced by Methods 2 and 3 since the inferred graphs actually do not differ too much, given that the number of subjects and total sample size are large. All methods give around 90\% identical edges for the two networks, while respectively generating 8.6\%, 8.0\% and 11.6\% unshared edges. Therefore, Methods 1 and 2 with adjustments provide more consistent and trustworthy graphs as batch effect brings more non-biological factors that exaggerate the difference.
Preferences for Methods 1 and 2 should be based on relationship of $p, n_i, m$ and whether the collected covariates are influential enough to explain the loadings. From Figure \ref{fig:4}, it is obvious that the brain is more connected for the ADHD subjects, but the connections are weaker. Actually, the average correlation reduction for the estimated correlation matrices (obtained from $\widehat\Omega^{-1}$) of the two groups is $0.001$; a paired t-test gives p-value $< 2.2\mathrm{e}{-16}$. This is consistent with the recent finding that kids with ADHD show weaker interactions among brain networks \citep{CaiCheSzeSupMen16}. 

\begin{table}
  \centering
  \caption{The degree of unshared edge vertices for each cluster}  \label{tab:Clusterdegree}
  \begin{tabular}{lccccccccccc}
    \hline
     & red  & orange & blue & green & yellow & navy & pink & black & magenta & gray\\
    \hline
    Health & 0 & 5 & 4 & 0 & 6 & 2 & 7 & 9 & 4 & 5\\
    ADHD & 10 & 3 & 8 & 7 & 14 & 5 & 8 & 15 & 10 & 10\\
    p-values ($\%$) & 2.8 & 5.8 & 90.8 & 6.7 & 85.6 & 85.2 & 20.4 & 53.1 & 78.9 & 89.7\\
    \hline
  \end{tabular}
  \end{table}

In addition, we investigate how those unshared edges between two groups of people are distributed across the 10 clusters. We only focus on networks from Method 1. We are interested in which cluster contributes to the difference of the distributions of vertices of the unshared edges the most. We summarized the total degree of unshared edge vertices within each cluster in Table \ref{tab:Clusterdegree}. For each column $j$, we consider the hypothesis testing for $p_{jH} = p_{jD}$, where $H$ and $D$ denotes the healthy and diseased group respectively, meaning that the unshared edges within the $j^{th}$ cluster are found due to the same Bernoulli distribution. A simple chi-square test for the null was carried out for each column and those p-values are reported also in Table \ref{tab:Clusterdegree}. 
The most noteworthy fact is that occipital lobe shows dependency change from right brain (orange) to left brain (red) for ADHD patients. 
The left frontal lobe (green) and the left parietal lobe (pink) have relatively large change in dependence structure compared with other parts of the brain. These are signs that ADHD is a complex disease that affects many regions of the brain. 
The general methodology we provide here could be valuable for further understanding the mechanism of the disease.




\section{Discussions} \label{sec6}

In this paper, we developed a generic method called ALPHA that can consistently estimate and remove data heterogeneity and lead to effective subsequent statistical analysis on the true signal. The entire analysis relies on the pervasive assumption that most of the dimensions are corrupted by the heterogeneous factors. Future work may relax such pervasive conditions to allow for weaker signal batch effect, thus delivering more flexibility to recover the homogeneous residual.


As we have seen, ALPHA is adaptive to factor structures and is flexible to include external information. For brain image data analysis, previous literature rarely took physical locations into considerations. With the new framework, we can take advantage of external characteristics of the voxels or genes relevant to the batch effect, and consistently estimate the pervasive heterogeneity term even with very limited samples. However, this advantage of Projected-PCA is accompanied by more assumptions and the practical issue of selecting proper basis functions and the number of them in sieve approximation. 
On the other hand, if no valuable covariates exist and the sample size is relatively large for each data source, we have shown conventional PCA is still an effective tool. Direct aggregation of less heterogeneous subgroups (say subjects with the same age and gender in the ADHD dataset) might also be helpful to increase the sample size.

Finally, note that after heterogeneity adjustment, the recovered residuals $\check\bU$ are not column-wisely independently distributed anymore. Statistical procedures that require assumptions of i.i.d. data cannot be directly applied on $\check\bU$. However, the ALPHA procedure gives theoretical guarantee for $\|\check\bU-\bU\|_{\max}$ and $\|\widehat\bSigma-\bSigma\|_{\max}$, which serve as foundations for establishing the statistical properties of the subsequent procedure. In this sense, our framework is compatible with any statistical procedure that only requires an accurate estimator as the input, for instance, the CLIME procedure. Methods robust to small perturbations on the truth are preferable.

\bibliographystyle{ims}
\bibliography{WeichenBib}

\appendix
\section*{APPENDIX}
We first give the algorithm for our ALPHA procedure.
We then outline the key ideas of the technical proofs of Theorems \ref{th3.1}---\ref{th3.3} and \ref{th2.3} in respectively the next four sections and leave additional proofs to the technical lemmas in Appendex F.

\section{Algorithm for ALPHA}

\begin{algorithm}
  \caption{Algorithm for adaptive low-rank principal heterogeneity adjustment}
  \label{algo1}
  \renewcommand{\algorithmicrequire}{\textbf{\underline{Input}:}}
  \renewcommand{\algorithmicensure}{\textbf{\underline{Output}:}}
  $\;$ \algorithmicrequire{$\;$Panel $\bX^i_{p\times n_i}$ and $d$-dimensional $\{\bW_j^i\}_{j=1}^p$ from $m$ data sources}

  $\;$ \algorithmicensure{$\;\check{\bU^i}$, the adjusted estimator for $\bU^i$ and $\widehat\bSigma$}

  \begin{algorithmic}[1]
	\Procedure{ALPHA}{}
	\For{each subject $i \le m$}
		\State $\widehat {K^i} \leftarrow$ non-projected eigenvalue-ratio method
		\State test $H_0: \bG^i(\bX^i) = 0$ by $S^i$
	\EndFor
	\State $\mathcal M_1, \mathcal M_2 \leftarrow$ control FDR by Benjamini-Hochberg
	\For{each subject $i \le m$}
		\If{ $i \in \mathcal M_1$ }
    			\State $\widehat{\bF^i}/\sqrt{n_i} \leftarrow$ eigenvectors of ${\bX^i}' \bX^i$ corresponding to top $\widehat{K^i}$ eigenvalues
			\State  $\widehat{\bLambda^i} \leftarrow \bX^i \widehat{\bF^i}/n_i$, $\widehat{\bU^i} \leftarrow \bX^i - \widehat{\bLambda^i}\widehat{\bF^i}$ and $\check{\bU^i} \leftarrow \widehat{\bU^i}$
		\Else
			\State $\widetilde {K^i} \leftarrow$ projected eigenvalue-ratio method
			\State $\bP^i \leftarrow \Phi(\bW^i)(\Phi(\bW^i)'\Phi(\bW^i))^{-1}\Phi(\bW^i)'$
			\State $\widetilde{\bF^i}/\sqrt{n_i} \leftarrow$ eigenvectors of ${\bX^i}' \bP^i \bX^i$ corresponding to top $\widetilde{K^i}$ eigenvalues
			\State $\widetilde{\bLambda^i} \leftarrow \bX^i \widetilde{\bF^i}/n_i$, $\widetilde{\bU^i} \leftarrow \bX^i - \widetilde{\bLambda^i}\widetilde{\bF^i}$ and $\check{\bU^i} \leftarrow \widetilde{\bU^i}$
		\EndIf
	\EndFor
	\State $\widehat\bSigma \leftarrow (\sum_i n_i - \sum_i {\widehat {K_i}})^{-1} \sum_{i=1}^m {\check{\bU^i}} {\check {\bU^i}}'$
	\State \Return $\{\check{\bU^i}\}_{i=1}^m$ and $\widehat\bSigma$
	\EndProcedure
  \end{algorithmic}
\end{algorithm}

\section{Proof of Theorem \ref{th3.1}}

\begin{proof}
By definition of $\check \bU$, $\check\bU = \bU(\bI - n^{-1}\bF\bF') + n^{-1} \bX (\check\bF\check\bF' - \bF\bF')$. We first look at the converge of $\check\bU - \bU$. Obviously $\bPi = n^{-1} \bX (\check\bF\check\bF' - \bF\bF') = I + II$ where
$$
I = \frac1n \bLambda\bF' (\check\bF\check\bF' - \bF\bF'), \;\; II = \frac1n \bU (\check\bF\check\bF' - \bF\bF')\,.
$$
Since $\bF' (\check\bF\check\bF' - \bF\bF') = \bF' (\check\bF - \bF\bH) \check\bF'  + n \bH (\check\bF - \bF\bH)' + n (\bH\bH' - \bI)\bF'$, we have
$$
\|I\|_{\max} = O_P(\|\bLambda\|_{\max} (\|\bF' (\check\bF - \bF\bH)\|_{\max} \|\check\bF/n\|_{\max} + \|\check\bF - \bF\bH\|_{\max} + \|\bH\bH' - \bI\|_{\max} \|\bF\|_{\max}  ) )\,.
$$
Similarly $\bU (\check\bF\check\bF' - \bF\bF') = \bU (\check\bF - \bF\bH) \check\bF' + \bU\bF \bH(\check\bF - \bF\bH)' + \bU\bF (\bH\bH' - \bI) \bF'$, so
$$
\|II\|_{\max} = O_P(\|\bU' (\check\bF - \bF\bH)\|_{\max} \|\check\bF/n\|_{\max} + \|\bU\bF/n\|_{\max} (\|\check\bF - \bF\bH\|_{\max} + \|\bH\bH' - \bI\|_{\max} \|\bF\|_{\max} ) )\,.
$$
According to Lemma \ref{tla.2} (i), $\|\bU\bF/n\|_{\max} = O_P(1)$ and noting both $\|\bF\|_{\max}$ and $\|\check\bF\|_{\max}$ are $O_P(\sqrt{n})$, we conclude the result for $\|\bPi\|_{\max}$ easily.

Now we consider $\check\bU\check\bU'$ in the following.
\begin{align*}
\check\bU\check\bU'& = \bU (\bI - n^{-1} \bF\bF') \bU' + n^{-1} \bU(\bI - n^{-1}\bF\bF')(\check\bF\check\bF' - \bF\bF')\bX'+ n^{-2}\bX(\check\bF\check\bF' - \bF\bF')^2\bX' \\
&=: \bU \bU' - \frac{1}{n} \bU \bF\bF' \bU' + III + IV\,.
\end{align*}
So $\bDelta = III + IV$ and it suffices to bound the two terms.
\begin{align*}
\|III\|_{\max} & = O_P(\|n^{-1} \bU(\bI - \bF\bF'/n)\check\bF\check\bF' \bF\|_{\max} \|\bLambda\|_{\max} + \|n^{-1} \bU(\bI - \bF\bF'/n)\check\bF\check\bF' \bU'\|_{\max}) \\
& =: O_P(\|\bJ_1\|_{\max}\|\bLambda\|_{\max} + \|\bJ_2\|_{\max})\,.
\end{align*}
Decompose $\bJ_1$ by $\bJ_1 = n^{-1}\bU(\check\bF - \bF\bH)\check\bF'\bF - n^{-2}\bU \bF\cdot \bF'(\check\bF-\bF\bH)\check\bF'\bF$. Therefore,
$$
\|\bJ_1\|_{\max} = O_P(\|\bU(\check\bF - \bF\bH)\|_{\max} + n^{-1} \|\bU \bF\|_{\max} \|\bF'(\check\bF-\bF\bH)\|_{\max})\,,
$$
since $\|\check\bF'\bF/n\|_{\max} \le \|\check\bF'\bF/n\|_{F} \le \|\check\bF'\|_F \|\bF\|_F/n = K$. Similar to $\bJ_1$, we decompose $\bJ_2$ only replacing $\check\bF'\bF$ with $\check\bF'\bU'$. According to Lemma \ref{tla.2} (i), $\|\check\bF'\bU'/n\|_{\max} = O_P( \|\bU\bF/n\|_{\max} + \|\bU(\check\bF - \bF\bH)\|_{\max} )= O_P(1 + \|\bU(\check\bF - \bF\bH)\|_{\max})$, hence $\|\bJ_2\|_{\max} = O_P(\|\bJ_1\|_{\max}(1 + \|\bU(\check\bF - \bF\bH)\|_{\max}))$. We then conclude that $\|III\|_{\max} = O_P((\|\bU(\check\bF - \bF\bH)\|_{\max} + n^{-1} \|\bU \bF\|_{\max} \|\bF'(\check\bF-\bF\bH)\|_{\max})(\|\bLambda\|_{\max} + \|\bU(\check\bF - \bF\bH)\|_{\max}))$.

Now let us take a look at $IV$. $\|IV\|_{\max} = \|\bD_1 + \bD_2 + \bD_2' + \bD_3\|_{max}$ where
\begin{align*}
\bD_1 & = n^{-2} \bLambda\bF' (\check\bF\check\bF' - \bF\bF')^2 \bF \bLambda' = \bLambda (n \bI - n^{-1}\bF'\check\bF\check\bF'\bF)\bLambda' \,, \\
\bD_2 & = n^{-2} \bU  (\check\bF\check\bF' - \bF\bF')^2 \bF \bLambda'  = - n^{-2} \bU \bF\bF'(\check\bF\check\bF' - \bF\bF')\bF \bLambda' \,\\
\bD_3 & = n^{-2} \bU  (\check\bF\check\bF' - \bF\bF')^2 \bU' \,.
\end{align*}
By assumption, $\|\bH\|_{\max} \le \|\bH\| = O_P(1)$. Simple decompositions of $\bD_1$ gives
$$
\|\bD_1\|_{\max} = O_P((\|\bF'(\check\bF - \bF\bH)\|_{\max} + n\|\bH\bH' - \bI\|_{\max})\|\bLambda\|_{\max}^2)\,.
$$
Since $\bD_2 = -n^{-2} \bU\bF\bF'(\check\bF - \bF\bH)\check\bF'\bF\bLambda' - n^{-1}\bU\bF\bH(\check\bF-\bF\bH)'\bF\bLambda' - \bU\bF(\bH\bH' - \bI)\bLambda'$, we have
$$
\|\bD_2\|_{\max} = O_P(\|\bU\bF/n\|_{\max} \|\bD_1\|_{\max}) = O_P(\|\bD_1\|_{\max})\,.
$$
It is also not hard to show  $\|\bD_3\|_{\max} = O_P(\|III\|_{\max} + \|\bD_1\|_{\max})$. Under both Theorems \ref{th3.2} and \ref{th3.3} (replacing $\check\bF$ by $\widehat\bF$ for regime 1 and $\widetilde\bF$ for regime 2), we can check the following relationship holds:
$$
n^{-1} \|\bU\bF\|_{\max} \|\bU(\check\bF - \bF\bH) \|_{\max} = O_P(\|\bLambda\|_{\max}^2).
$$
Therefore we have
\begin{align*}
\|\bDelta\|_{\max} = \|III + IV\|_{\max} = & O_P(\|\bU(\check\bF - \bF\bH)\|_{\max} \|\bLambda\|_{\max} + \|\bU(\check\bF - \bF\bH)\|_{\max}^2  \\
& + \|\bF'(\check\bF - \bF\bH)\|_{\max}\|\bLambda\|_{\max}^2 + n\|\bH\bH' - \bI\|_{\max}\|\bLambda\|_{\max}^2)\,.
\end{align*}
\end{proof}

\section{Proof of Theorem \ref{th3.2}}
\subsection{Convergence of factors $\hF$}
Let $\bK$ denote the $K \times K$ diagonal matrix consisting of the first $K$ largest eigenvalues of $(pn)^{-1}\bX'\bX$ in descending order.
By the definition of eigenvalues, we have
 $$
 \frac{1}{np}(\bX'\bX)\hF =\hF\bK\,.
 $$

Recall $\bH=(np)^{-1}\bLambda'\bLambda\bF'\hF\bK^{-1}$.
Substituting $\bX=\bLambda\bF' + \bU,$
we have,
\begin{equation}\label{eqa.1}
\hF-\bF\bH= \Big( \sum_{i=1}^{3}\bE_i \Big) \bK^{-1}\,,
\end{equation}
$$
\bE_1 = \frac{1}{np}\bF\bLambda'\bU\hF,\quad
\bE_2= \frac{1}{np} \bU' \bLambda\bF'\hF, \quad \bE_3=\frac{1}{np}\bU'\bU\hF\,.
$$

To bound  $\|\hF-\bF\bH\|_{\max}$, note that there is a constant $C>0$, so that
$$
\|\hF-\bF\bH\|_{\max} \leq C \|\bK^{-1}\|_2 \sum_{i=1}^{3} \|\bE_i\|_{\max}.
$$
Hence we need to bound $\|\bE_i\|_{\max}$ for $i=1,2,3$ since $\|\bK^{-1}\|_2 = O_P(1)$. The following lemma gives the stochastic bounds for each individual term.

\begin{lem}\label{laC.1}
(i) $\|\bE_1\|_{F}=O_P(\sqrt{n/p}) = \|\bE_2\|_{F}$, $\|\bE_3\|_{F}=O_P(1/\sqrt{n} + 1/\sqrt{p} + \sqrt{n}/p)$\,. \\
(ii) $\|\bE_1\|_{\max}=O_P(\sqrt{\log n/p}) = \|\bE_2\|_{\max}$, $\|\bE_3\|_{\max}=O_P(1/\sqrt{p} + \sqrt{\log n}/n)$\,.
\end{lem}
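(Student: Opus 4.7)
The plan is to bound each $\bE_i$ by combining the operator/Frobenius normalizations of $\bF, \hF$ with concentration for the relevant bilinear or quadratic forms in the sub-Gaussian columns of $\bU$. Throughout I will use $\|\bF\|_2 = \|\hF\|_2 = \sqrt{n}$ (from $\bF'\bF = \hF'\hF = n\bI$), $\|\bLambda\|_F^2 = O(pK)$ from pervasiveness in Assumption \ref{ass2.1.1}(i), and $\|\bSigma\|_2 \le C_0$. For $\bE_1$, submultiplicativity gives $\|\bE_1\|_F \le p^{-1}\|\bLambda'\bU\|_F$; a second-moment computation $E\|\bLambda'\bU\|_F^2 = n\,\tr(\bLambda'\bSigma\bLambda) = O(nKp)$ (using independence and common variance $\bSigma$ of the columns of $\bU$) delivers $\|\bE_1\|_F = O_P(\sqrt{n/p})$. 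The analysis for $\bE_2$ is entirely symmetric.

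For $\bE_3$, the plan is to split $\hF = \bF\bH + (\hF - \bF\bH)$. The main term $(np)^{-1}\bU'\bU\bF\bH$ is further decomposed using $E[\bU'\bU] = \tr(\bSigma)\bI$: the deterministic piece $(\tr(\bSigma)/(np))\bF\bH$ has Frobenius norm $O_P(1/\sqrt{n})$ since $\tr(\bSigma) = O(p)$, while the centered piece $(np)^{-1}(\bU'\bU - \tr(\bSigma)\bI)\bF\bH$ is a Hanson--Wright-type quadratic form in the sub-Gaussian columns of $\bU$ whose Frobenius norm, before scaling, is $O_P(\sqrt{np})$; this gives the $1/\sqrt{p}$ contribution. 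The remainder $(np)^{-1}\bU'\bU(\hF - \bF\bH)$ is controlled via the standard sub-Gaussian random matrix estimate $\|\bU'\bU\|_2 = O_P(n+p)$, yielding $(1/n + 1/p)\|\hF - \bF\bH\|_F$. To avoid circularity, I will first establish the crude preliminary estimate $\|\bE_3\|_F \le \|\bU\|_2^2\sqrt{nK}/(np) = O_P(\sqrt{n}/p + 1/\sqrt{n})$, plug it along with the bounds on $\bE_1,\bE_2$ into (\ref{eqa.1}) to pin down $\|\hF - \bF\bH\|_F = O_P(\sqrt{n/p} + 1/\sqrt{n})$, after which the remainder term in the $\bE_3$ decomposition is of smaller order and the sharper bound $O_P(1/\sqrt{n} + 1/\sqrt{p} + \sqrt{n}/p)$ emerges.

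For the max-norm bounds, every entry of $\bE_i$ is either bilinear or quadratic in the sub-Gaussian variables $\{u_{jt}\}$. The strategy is to fix an index $(t,k)$, apply a Bernstein or Hanson--Wright inequality conditionally on $\bF$ and $\bLambda$ wherever possible, and take a union bound over the $O(nK)$ index pairs; this is the source of the $\sqrt{\log n}$ factors. The same $\hF = \bF\bH + (\hF - \bF\bH)$ split feeds through, now measured in $\|\cdot\|_{\max}$, and the deterministic $\tr(\bSigma)\bF\bH/(np)$ piece contributes the $\sqrt{\log n}/n$ term via $\|\bF\bH\|_{\max} = O_P(\sqrt{\log n})$, which itself follows from Assumption \ref{ass2.2}(iv). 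The chief technical hurdle is the centered quadratic form $(\bU'\bU - \tr(\bSigma)\bI)\bF\bH$ in $\bE_3$: handling it cleanly requires Hanson--Wright concentration with decoupling to deal with the random weight $\bF$, and careful bookkeeping on the order in which conditioning is performed to prevent circular dependence on $\|\hF - \bF\bH\|$.
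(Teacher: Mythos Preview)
Your approach is essentially the same as the paper's: the paper also bootstraps a crude bound on $\bE_3$ (via $\|\bU'\bU\|_F$ rather than your $\|\bU\|_2^2$, but this is immaterial) to pin down $\|\hF-\bF\bH\|_F$, then refines via the split $\hF = \bF\bH + (\hF-\bF\bH)$; for the max norms it likewise uses entrywise sub-Gaussian/Hanson--Wright concentration plus union bounds, packaged as auxiliary lemmas on $\|\bLambda'\bU\|_{\max}$, $\|\bU'\bU\|_{\max}$, and $\|\bU'\bU\bF\|_{\max}$. One small slip: your centered piece $(\bU'\bU-\tr(\bSigma)\bI)\bF\bH$ has Frobenius norm $O_P(n\sqrt{p})$, not $O_P(\sqrt{np})$---this is what actually delivers the $1/\sqrt{p}$ contribution you correctly claim after scaling by $(np)^{-1}$.
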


\begin{proof}
(i) Obviously $\|\bE_1\|_{F} \le p^{-1} \|\bLambda'\bU\|_F = O_P(\sqrt{n/p})$ according to Lemma \ref{tla.PCA1}. $\|\bE_2\|_F$ attains the same rate. In addition, $\|\bE_3\|_F \le n^{-1/2}p^{-1} \|\bU'\bU\|_F = O_P(1+\sqrt{n/p})$ again according to Lemma \ref{tla.PCA1}. So combining the three terms, we have $\|\hF -\bF\bH\|_F = O_P(1+\sqrt{n/p})$. We now refine the bound for $\|\bE_3\|_F$. $\|\bE_3\|_F \le (np)^{-1} (\|\bU'\bU\bF\|_F\|\bH\|_F + \|\bU'\bU\|_F \|\hF - \bF\bH\|_F) = O_P(1/\sqrt{n} + 1/\sqrt{p} + \sqrt{n}/p)$. Then the refined rate of $\|\hF -\bF\bH\|_F$ is $O_P(\sqrt{n/p} + 1/\sqrt{n})$.

(ii) Since $ \|\bLambda'\bU\hF\|_F = O_P(n\sqrt{p})$ by Lemma \ref{tla.PCA1},
$$
\|\bE_1\|_{\max} = O_P((np)^{-1} \|\bF\|_{\max} \|\bLambda'\bU\hF\|_F) = O_P(\sqrt{\log n/p})\,.
$$
$\|\bE_2\|_{\max}$ is bounded by $p^{-1} \|\bU'\bLambda\|_{\max} = O_P(\sqrt{\log n/p})$ while $\|\bE_3\|_{\max}$ is bounded by
$$
O_P\Big((np)^{-1}(\|\bU'\bU\bF\|_{\max} + \sqrt{n}\|\bU'\bU\|_{\max}\|\hF - \bF\bH\|_{F})\Big)\,,
$$
which based on results of Lemma \ref{tla.PCA2} and (i) is $O_P(1/\sqrt{p} + \sqrt{\log n}/n)$.
\end{proof}

The final rate of convergence for $\|\hF-\bF\bH\|_{\max}$ and $\|\hF-\bF\bH\|_{F}$ are summarized as follows.

\begin{prop} \label{paC.1}
\begin{equation}
\|\hF-\bF\bH\|_{\max} =O_P\Big(\sqrt{\frac{\log n}{p}} + \frac{\sqrt{\log n}}{n} \Big) \;\; \text{and} \;\; \|\hF-\bF\bH\|_{F} =O_P\Big(\sqrt{\frac np} + \frac{1}{\sqrt{n}}\Big).
\end{equation}
\end{prop}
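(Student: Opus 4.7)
The strategy is to combine the master decomposition
\[
\hF-\bF\bH \;=\; \bigl(\bE_1+\bE_2+\bE_3\bigr)\bK^{-1}
\]
established in (\ref{eqa.1}) with the three term-wise bounds in Lemma \ref{laC.1}. Since $\|\bK^{-1}\|_2=O_P(1)$, both $\|\hF-\bF\bH\|_{\max}$ and $\|\hF-\bF\bH\|_F$ are controlled, up to the multiplicative constant $\|\bK^{-1}\|_2$, by the sums $\sum_{i=1}^{3}\|\bE_i\|_{\max}$ and $\sum_{i=1}^{3}\|\bE_i\|_F$ respectively. So the argument reduces to straightforward arithmetic on the rates already supplied by Lemma \ref{laC.1}, provided one handles a small bootstrapping subtlety in the $\bE_3$ analysis for the Frobenius bound.

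For the max-norm claim, I would simply plug in Lemma \ref{laC.1}(ii): $\|\bE_1\|_{\max}=\|\bE_2\|_{\max}=O_P(\sqrt{\log n/p})$ and $\|\bE_3\|_{\max}=O_P(1/\sqrt{p}+\sqrt{\log n}/n)$. Since $1/\sqrt{p}\le\sqrt{\log n/p}$, the combined upper bound collapses to $O_P(\sqrt{\log n/p}+\sqrt{\log n}/n)$, exactly the stated rate. Nothing else is needed here.

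The Frobenius claim is where care is required. From Lemma \ref{laC.1}(i), $\|\bE_1\|_F$ and $\|\bE_2\|_F$ are $O_P(\sqrt{n/p})$, which is harmless. The term $\|\bE_3\|_F$, however, depends on $\|\hF-\bF\bH\|_F$ itself through the substitution $\hF=\bF\bH+(\hF-\bF\bH)$ inside $(np)^{-1}\bU'\bU\hF$. I would therefore proceed by a two-stage bootstrap. \emph{Stage one:} use the crude bound $\|\bE_3\|_F\le n^{-1/2}p^{-1}\|\bU'\bU\|_F$ together with Lemma F (for $\|\bU'\bU\|_F$) to obtain the preliminary conclusion $\|\hF-\bF\bH\|_F=O_P(1+\sqrt{n/p})$. \emph{Stage two:} split $\bE_3=(np)^{-1}\bU'\bU\bF\bH+(np)^{-1}\bU'\bU(\hF-\bF\bH)$, bound the first piece by $(np)^{-1}\|\bU'\bU\bF\|_F\|\bH\|_F$, and bound the second piece by $(np)^{-1}\|\bU'\bU\|_F$ times the Stage-one rate for $\|\hF-\bF\bH\|_F$. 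Adding up yields $\|\bE_3\|_F=O_P(1/\sqrt{n}+1/\sqrt{p}+\sqrt{n}/p)$, and combining with the $\sqrt{n/p}$ bounds on $\bE_1,\bE_2$ produces the desired $O_P(\sqrt{n/p}+1/\sqrt{n})$ (the $1/\sqrt{p}$ and $\sqrt{n}/p$ terms are dominated).

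The main obstacle I anticipate is purely bookkeeping in the Stage-two refinement: one must verify that $(np)^{-1}\|\bU'\bU\|_F\cdot(1+\sqrt{n/p})$ is indeed of order $1/\sqrt{n}+\sqrt{n}/p$, which ultimately hinges on the sharp operator-level bound $\|\bU'\bU\|_F=O_P(\sqrt{n}(\sqrt{n}+\sqrt{p}))$ from the auxiliary lemmas. Once that inequality is in hand, no further probabilistic work is needed and the proposition follows from the decomposition and Lemma \ref{laC.1} directly.
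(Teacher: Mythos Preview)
Your proposal is correct and follows essentially the same route as the paper: the proposition is an immediate corollary of Lemma~\ref{laC.1}, and the two-stage bootstrap you describe for the Frobenius bound on $\bE_3$ is exactly the argument the paper carries out inside the proof of Lemma~\ref{laC.1}(i). One small slip: the auxiliary bound you quote should read $\|\bU'\bU\|_F=O_P\bigl(\sqrt{np}(\sqrt{n}+\sqrt{p})\bigr)$ (equivalently $\|\bU'\bU\|_F^2=O_P(np^2+pn^2)$, Lemma~\ref{tla.PCA1}(ii)), not $O_P\bigl(\sqrt{n}(\sqrt{n}+\sqrt{p})\bigr)$; with the corrected rate your Stage-two arithmetic goes through exactly as you intend.
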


\begin{proof} The results follow from Lemmas \ref{laC.1}.
\end{proof}

\subsection{Rates of $\|\bF'(\hF-\bF\bH)\|_{\max}$ and $\|\bH\bH' - \bI\|_{\max}$}
Note first that the two matrices under consideration is both $K$ by $K$, so we do not lose rates bounding them by their Frobenius norm.

Let us find out rate for $\|\bF'(\hF-\bF\bH)\|_{F}$. Basically we need to bound $\|\bF'\bE_i\|_F$ for $i = 1,2,3$. Firstly
$$
\|\bF'\bE_1\|_{F} = p^{-1}\|\bLambda'\bU\hF\|_F \le p^{-1}(\|\bLambda'\bU\bF\|_F \|\bH\|_F + \|\bLambda'\bU\|_F \|\hF-\bF\bH\|_F)\,.
$$
Since $\|\bLambda'\bU\bF\|_F = O_P(\sqrt{np})$ and $\|\bLambda'\bU\|_F = O_P(\sqrt{np})$ by Lemma \ref{tla.PCA1}, we have $\|\bF'\bE_1\|_{F} = O_P(\sqrt{n/p} + n/p)$. Secondly,
$$
\|\bF'\bE_2\|_{F} \le p^{-1}\|\bF'\bU'\bLambda\|_F = O_P(\sqrt{n/p})\,.
$$
Finally,
$$
\|\bF'\bE_3\|_{F} = O_P\Big(\frac{1}{np}\|\bU\bF\|_F^2 + \frac{1}{np} \|\bF'\bU'\bU\|_F\|\hF - \bF\bH\|_F\Big) = O_P(1+\sqrt{n}/p)\,.
$$
So combining three terms we have $\|\bF'(\hF-\bF\bH)\|_{\max} \le \|\bF'(\hF-\bF\bH)\|_{F} = O_P(1+\sqrt{n/p})$.

Now we bound $\|\bH\bH' - \bI\|_{F}$. Since $\bH'\bH = n^{-1} (\bF\bH - \hF)'\bF\bH + n^{-1} \hF'(\bF\bH - \hF) + \bI$, we have
$$
\|\bH'\bH - \bI \|_F = O_P(\frac{1}{n} \|\bF'(\hF-\bF\bH) \|_F + \frac1n \|\hF - \bF\bH\|_F^2) = O_P\Big(\frac1n + \frac1p\Big)\,.
$$
Therefore $\|\bH\bH' - \bI\|_{F}$ has the same rate since $\|\bH\bH' - \bI\|_{F} \le \|\bH\|_F \|\bH'\bH - \bI\|_{F} \|\bH^{-1}\|_F$. So $\|\bH\bH' - \bI\|_{\max} = O_P(1/n + 1/p)$.

\subsection{Rate of $\|\bU(\hF-\bF\bH)\|_{\max}$}

In order to study rate of $\|\bU(\hF-\bF\bH)\|_{\max}$, we essentially need to bound $\|\bU\bE_i\|_{\max}$ for $i=1,2,3$. We handle each term separately.
$$
\|\bU\bE_1\|_{\max} = O_P(\frac {1}{np} \|\bU\bF\|_{\max} \|\bLambda'\bU\hF\|_{F})  = O_P(\frac {1}{n} \|\bU\bF\|_{\max} \|\bF'\bE_1\|_F) = O_P\Big(\sqrt{\frac{\log p}{p}} + \frac{\sqrt{n \log p}}{p}\Big)\,.
$$
By Lemma \ref{tla.5}, $\|\bU\bU'\bLambda\|_{\max} = O_P(\sqrt{np \log p} + n\|\bSigma\|_1)$. Therefore,
$$
\|\bU\bE_2\|_{\max} = O_P(\frac {1}{p} \|\bU\bU'\bLambda\|_{\max})  = O_P\Big(\frac{n\|\bSigma\|_1}{p}  + \sqrt{\frac{n \log p}{p}}\Big)\,.
$$
From bounding $\|\bE_3\|_F$, the last term has rate
$$
\|\bU\bE_3\|_{\max} = \frac{1}{np} \|\bU\bU'\bU\hF\|_{\max} \le \frac{1}{\sqrt{n}p} \|\bU\|_{\max} \|\bU'\bU\hF\|_F
= O_P((1+n/p)\sqrt{\log p})\,.
$$
So combining three terms, we conclude
$\|\bU(\hF-\bF\bH)\|_{\max} = O_P((1+n/p)\sqrt{\log p} + n\|\bSigma\|_1/p)$.

\section{Proof of Theorem \ref{th3.3}}
\subsection{Convergence of factors $\widetilde\bF$}
Let $\bK$ denote the $K \times K$ diagonal matrix consisting of the first $K$ largest eigenvalues of $(pn)^{-1}\bX'\bP\bX$ in descending order.
By the definition of eigenvalues, we have
 $$
 \frac{1}{np}(\bX'\bP\bX)\widetilde\bF =\widetilde\bF\b\,.
 $$

Recall $\bH=(np)^{-1}\bB'\Phi(\bW)'\Phi(\bW)\bB\bF'\tF\bK^{-1}$.
Substituting $
\bX=\Phi(\bW)\bB\bF' + \bR(\bW)\bF' + \bGamma\bF' + \bU,
 $
we have,
\begin{equation}\label{eqa.1}
\tF-\bF\bH= \Big( \sum_{i=1}^{15}\bA_i \Big) \bK^{-1}
\end{equation}
where $\bA_i, i \le 3$ has nothing to do with $\bR(\bW)$ and $\bGamma$:
$$
\bA_1 = \frac{1}{np}\bF\bB'\Phi(\bW)'\bU\tF,\quad
\bA_2= \frac{1}{np} \bU' \Phi(\bW)\bB\bF'\tF, \quad \bA_3=\frac{1}{np}\bU'\bP\bU\tF\,;
$$
$\bA_i, 3 \le i \le 8$ takes care of terms involving $\bR(\bW)$:
\begin{eqnarray*}
\bA_4&=& \frac{1}{np} \bF\bB'\Phi(\bW)' \bR(\bW)\bF'\tF,\quad
\bA_5 =\frac{1}{np} \bF\bR(\bW)' \Phi(\bW)\bB\bF' \tF,
\cr
\bA_6 &=& \frac{1}{np} \bF\bR(\bW)' \bP \bR(\bW)\bF' \tF,\quad \bA_7 = \frac{1}{np} \bF\bR(\bW)' \bP \bU \tF,
\quad
\bA_8= \frac{1}{np} \bU' \bP \bR(\bW)\bF'\tF\,;
\end{eqnarray*}
the remaining are terms involving $\bGamma$:
\begin{eqnarray*}
\bA_9&=& \frac{1}{np} \bF\bB'\Phi(\bW)' \bGamma\bF'\tF,\quad
\bA_{10} =\frac{1}{np} \bF\bGamma' \Phi(\bW)\bB\bF' \tF, \quad
\bA_{11} = \frac{1}{np} \bF\bGamma' \bP \bGamma \bF' \tF,
\cr
\bA_{12} &=& \frac{1}{np} \bF\bGamma' \bP \bU \tF, \quad
\bA_{13} = \frac{1}{np} \bU' \bP \bGamma\bF'\tF, \quad
\bA_{14} = \frac{1}{np} \bF\bR' \bP\bGamma\bF'\tF,\quad
\bA_{15} = \frac{1}{np} \bF\bGamma'\bP\bR\bF'\tF.
\end{eqnarray*}

To bound  $\|\tF-\bF\bH\|_{\max}$, as in Theorem \ref{th3.2} we only need to bound $\|\bA_i\|_{\max}$ for $i=1,...,15$ since again we have $\|\bK^{-1}\|_2 = O_P(1)$. The following lemma gives the rate for each term.

\begin{lem}\label{la.2}
(i) $\|\bA_1\|_{\max}=O_P(\sqrt{\log n/p}) = \|\bA_2\|_{\max}$,\\
(ii) $\|\bA_3\|_{\max}=O_P(J\phi_{\max} \sqrt{\log(nJ)}/p)$, \\
(iii) $\|\bA_4\|_{\max}=O_P(J^{-\kappa/2}\sqrt{\log n} ) = \|\bA_5\|_{\max}$ and $\|\bA_9\|_{\max}=O_P(\sqrt{\nu_p\log n/p} ) = \|\bA_{10}\|_{\max}$, \\
(iv) $\|\bA_6\|_{\max}=O_P(J^{-\kappa}\sqrt{\log n})$ and $\|\bA_{11}\|_{\max} = O_P(J\nu_p\sqrt{\log n}/p)$, \\
(v) $\|\bA_7\|_{\max}=O_P(\phi_{\max}\sqrt{p^{-1} J^{1-\kappa} \log(nJ) \log n}) = \|\bA_8\|_{\max}$ \\and $\|\bA_{12}\|_{\max} = O_P(J\phi_{\max}\sqrt{\nu_p \log (nJ) \log n}/p) = \|\bA_{13}\|_{\max}$, \\
(vi) $\|\bA_{14}\|_{\max} = O_P(\sqrt{p^{-1} J^{1-\kappa} \nu_p\log n}) = \|\bA_{15}\|_{\max}$.
\end{lem}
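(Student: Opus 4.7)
\textbf{Proof plan for Lemma \ref{la.2}.}

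All fifteen bounds follow the same template: each $\bA_i$ is a product of the form $(np)^{-1}\bM^{(1)}\cdots\bM^{(r)}\tF$, and I would bound its max-norm by repeatedly applying $\|\bA\bB\|_{\max}\le \max_i \|\bA_{i\cdot}\|_2\max_j\|\bB_{\cdot j}\|_2$ or the cruder $\|\bA\bB\|_{\max}\le r\,\|\bA\|_{\max}\|\bB\|_{\max}$ (with $r$ the shared dimension), so as to peel off factors on which we have sharp probabilistic control. The recurring inputs are: sub-Gaussianity giving $\|\bF\|_{\max}=O_P(\sqrt{\log n})$ and $\|\bU\|_{\max}=O_P(\sqrt{\log(np)})$; the normalisation $\tF'\tF/n=\bI$ yielding $\|\tF\|_F=\sqrt{nK}$; $\|\Phi(\bW)\|_2=O_P(\sqrt{p})$ and $\|\bP\|_2=1$ with $\tr(\bP)=Jd$; the uniform sieve bound $\sup_j|R_{kl}(W_{jl})|=O(J^{-\kappa/2})$; boundedness of the sieve coefficients $\|\bB\|_{\max}<\infty$; and Assumption \ref{ass2.1.3} for $\bGamma$. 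A Frobenius bound $\|\tF-\bF\bH\|_F=O_P(\sqrt{n/p})$, obtained by a bootstrap from the rough versions of the present estimates (analogously to Lemma \ref{laC.1} in the PCA regime), is used whenever we decompose $\tF=\bF\bH+(\tF-\bF\bH)$ and wish to discard the remainder, invoking $\|\bH\|=O_P(1)$.

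The ``noise-only'' block $\bA_1,\bA_2,\bA_3$ rests on a concentration estimate for $\Phi(\bW)'\bU$: conditioning on $\bW$, each entry is a sub-Gaussian sum over $p$ independent terms with variance proxy $O(\phi_{\max}^2 p)$, so a union bound gives $\|\Phi(\bW)'\bU\|_{\max}=O_P(\phi_{\max}\sqrt{p\log(nJ)})$; pairing with $\|\bF\|_{\max}$ and $\|\bB\|_{\max}<\infty$ produces the claimed $\sqrt{\log n/p}$ for $\bA_1,\bA_2$. For $\bA_3$ I write $\bU'\bP\bU=(\Phi(\bW)'\bU)'(\Phi(\bW)'\Phi(\bW))^{-1}(\Phi(\bW)'\bU)$, bound the inner factor as above, and control the Gram inverse by Assumption \ref{ass2.3}(i). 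The $\bR$-cluster $\bA_4$--$\bA_8$ is handled by the sup-norm sieve bound: $\|\bR(\bW)'\Phi(\bW)\bB\|_{\max}=O_P(pJ^{-\kappa/2})$ and $\|\bR'\bP\bR\|_{\max}=O_P(pJ^{-\kappa})$; multiplying by $\|\bF\|_{\max}$ and the appropriate max-norm of $\tF$ and dividing by $np$ gives the rates for $\bA_4$--$\bA_6$, while $\bA_7,\bA_8$ replace one $\bR$-factor by $\bU$ projected through $\bP$, contributing the extra $\phi_{\max}\sqrt{\log(nJ)}$.

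The $\bGamma$-cluster $\bA_9$--$\bA_{15}$ is the main obstacle because $\{\bgamma_j\}_{j\le p}$ are only weakly dependent. Using independence of $\bGamma$ from $\bW$ (Assumption \ref{ass2.1.3}(ii)) together with $\sum_{j'}|E\gamma_{j'k}\gamma_{jk}|=O(\nu_p)$, a conditional variance calculation plus a sub-Gaussian tail bound yields $\|\Phi(\bW)'\bGamma\|_{\max}=O_P(\phi_{\max}\sqrt{p\nu_p\log(pJ)})$; this settles $\bA_9,\bA_{10}$ at rate $\sqrt{\nu_p\log n/p}$. The delicate term is $\bA_{11}=(np)^{-1}\bF\bGamma'\bP\bGamma\bF'\tF$: because $\tr(\bP)=Jd$ and $\bGamma$ is independent of $\bW$, $E[\bGamma'\bP\bGamma\mid\bW]$ has each entry bounded by $O(Jd\,\nu_p)$, and a conditional trace-type concentration argument tracking the weak-dependence constant $\nu_p$ (rather than the ambient dimension $p$) gives $\|\bGamma'\bP\bGamma\|_{\max}=O_P(J\nu_p\sqrt{\log n})$ and hence the stated $J\nu_p\sqrt{\log n}/p$ for $\bA_{11}$. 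The mixed terms $\bA_{12},\bA_{13}$ (one $\bGamma$, one $\bU$, meeting through $\bP$) and $\bA_{14},\bA_{15}$ (one $\bGamma$, one $\bR$, through $\bP$) combine the $\Phi(\bW)'\bGamma$ bound with the $\Phi(\bW)'\bU$ bound, respectively the sieve bound. The hardest single step will be $\bA_{11}$: it is the only place where weak cross-sectional dependence of $\bGamma$ must be carried through a random $Jd$-dimensional projector $\bP$ depending on $\bW$, and it is the step most likely to require a tailored concentration inequality rather than a direct application of Hanson--Wright.
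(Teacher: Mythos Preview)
Your overall template --- peel off factors using max/Frobenius/operator norm inequalities, feed in the sub-Gaussian concentration for $\Phi(\bW)'\bU$ and the sieve bound for $\bR(\bW)$ --- is exactly what the paper does for parts (i)--(ii) and the $\bR$-cluster, and those portions of your plan are fine.

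Where you diverge is the $\bGamma$-cluster, and there you are both overcomplicating and relying on an assumption you do not have. Assumption~\ref{ass2.1.3} gives only $E\gamma_{jk}=0$, $\max_{j,k}|\gamma_{jk}|=O_P(\sqrt{\log p})$, and the weak-dependence bound $\sum_{j'}|E\gamma_{j'k}\gamma_{jk}|=O(\nu_p)$; it does \emph{not} give sub-Gaussian tails for $\gamma_{jk}$, so the ``sub-Gaussian tail bound'' you invoke for $\|\Phi(\bW)'\bGamma\|_{\max}$ is not available. The paper's route avoids any concentration for $\bGamma$ altogether: every $\bGamma$-quantity that matters ($\bB'\Phi'\bGamma$, $\bGamma'\bP\bGamma$, $\bR'\bP\bGamma$) is a $K\times K$ matrix with fixed $K$, so $\|\cdot\|_{\max}\le\|\cdot\|_F$, and the Frobenius norm is controlled by a second-moment (Markov) calculation using only the weak-dependence bound. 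For example $E\|\Phi'\bGamma\|_F^2=O(pJ\nu_p)$ and $\|\bB'\Phi'\bGamma\|_{\max}\le\|\bB'\Phi'\bGamma\|_F=O_P(\sqrt{p\nu_p})$ directly.

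In particular, the term you flag as hardest, $\bA_{11}$, is in the paper a one-liner:
\[
\|\bGamma'\bP\bGamma\|_{\max}\le \|\Phi'\bGamma\|_F^2\,\|(\Phi'\Phi)^{-1}\|_2 = O_P\!\Big(\frac{pJ\nu_p}{p}\Big)=O_P(J\nu_p),
\]
with no Hanson--Wright or tailored inequality needed. Replacing your concentration steps for $\bA_9$--$\bA_{15}$ by these Frobenius/Markov bounds closes the gap and matches the paper's argument.
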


\begin{proof}(i) Because $\|\bF\|_{\max}=O_P(\sqrt{\log n})$, $\|\tF\|_F=O_P(\sqrt{n})$. By  Lemmas \ref{tla.1} and \ref{tla.2},  $\| \bU' \Phi(\bW)\bB\|_{F} = O_P(\sqrt{pn})$ and $\| \bU' \Phi(\bW)\bB\|_{\max} = O_P(\sqrt{p \log n})$. Hence
$$
\|\bA_1\|_{\max} \le \frac{\sqrt{K}}{np} \|\bF\|_{\max} \|\bB'\Phi(\bW)'\bU\|_{F} \|\tF\|_{F} = O_P(\sqrt{\log n/p}),
$$
$$
\|\bA_2\|_{\max} \le \frac{\sqrt{K}}{np} \|\bU' \Phi(\bW)\bB\|_{\max} \|\bF\|_{F} \|\tF\|_{F} = O_P(\sqrt{\log n/p}).
$$

(ii) We have $\bA_3=\frac{1}{np}\bU'\Phi(\bW)(\Phi(\bW)'\Phi(\bW))^{-1}\Phi(\bW)'\bU\tF$. By Lemma \ref{tla.1} and \ref{tla.2}, $\|\bU'\Phi(\bW)\|_{F}=O_P(\sqrt{npJ})$ and $\|\bU'\Phi(\bW)\|_{\max}=O_P(\phi_{\max}\sqrt{p \log(nJ)})$. By Assumption \ref{ass2.3}, $\|(\Phi(\bW)'\Phi(\bW))^{-1}\|_2=O_P(p^{-1})$. Note the fact that for matrix $\bA_{m \times n}$, $\bB_{n\times n}$, $\bC_{n\times r}$, $\|\bA\bB\bC\|_{\max} = \max_{i \le m, k \le r} |\ba_i' \bB \bc_k| \le \sqrt{n}  \|\bA\|_{\max} \|\bB\|_2 \|\bC\|_{F}$. So
\begin{align*}
 \|\bA_3\|_{\max} & \le \frac{\sqrt{Jd}}{np} \|\bU'\Phi(\bW)\|_{\max} \|(\Phi(\bW)'\Phi(\bW))^{-1}\|_2 \|\Phi(\bW)'\bU\|_F \|\tF\|_{F} \\
 &=O_P(J\phi_{\max}\sqrt{\log(nJ)}/p).
 \end{align*}

(iii)  Note that $\|\Phi(\bW)\bB\|_2\leq\|\bG(\bW)\|_2+\|\bR(\bW)\|_2=O_P(\sqrt{p})$, and $\|\bR(\bW)\|_{\max}=O_P(J^{-\kappa/2})$. Hence we have $\|\bB' \Phi(\bW)' \bR(\bW)\|_{\max} \le \|\bB' \Phi(\bW)'\|_1 \|\bR(\bW)\|_{\max} \le \sqrt{p} \|\bB' \Phi(\bW)'\|_2 \|\bR(\bW)\|_{\max} = O_p(pJ^{-\kappa/2})$. Thus
$$
\|\bA_4\|_{\max} \leq\frac{K^{3/2}}{n p}\|\bF\|_{\max} \|\bB' \Phi(\bW)' \bR(\bW)\|_{\max} \|\bF\tF\|_{F} =O_P(J^{-\kappa/2}\sqrt{\log n}).
$$
Similarly, $\|\bA_5\|_{\max}$ attains the same rate of convergence.

In addition, notice $\bA_9, \bA_{10}$ have similar representation as $\bA_4, \bA_5$. The only difference is to replace $\bR$ by $\bGamma$. It is not hard to see $\|\bB'\Phi'\bGamma\|_{\max} = O_P(\sqrt{p \nu_p})$. Therefore $\|\bA_9\|_{\max}=O_P(\sqrt{\nu_p\log n/p} ) = \|\bA_{10}\|_{\max}$.

(iv) Note that $\|\bP\|_2=\|(\Phi(\bW)'\Phi(\bW))^{-1/2}\Phi(\bW)'\Phi(\bW)(\Phi(\bW)'\Phi(\bW))^{-1/2}\|_2=1$ and $\|\bR(\bW)'\bP \bR(\bW)\|_{\max} \le p \|\bR(\bW)\|_{\max}^2 \|\bP\|_2 = O_p(pJ^{-\kappa})$. Hence
$$
\|\bA_6\|_{\max} \le \frac{K}{np} \|\bF\|_{\max} \|\bR(\bW)'\bP \bR(\bW)\|_{\max} \|\bF\tF\|_{F}=O_P(J^{-\kappa}\sqrt{\log n}).
$$

$\bA_{11}$ has similar representation as $\bA_6$. Since $\|\bGamma'\bP\bGamma\|_{\max} \le \|\Phi'\bGamma\|_F^2 \|(\Phi'\Phi)^{-1}\|_2 = O_P(J\nu_p)$, we have $\|\bA_{11}\|_{\max} = O_P(J\nu_p\sqrt{\log n}/p)$.

(v) According to Lemma \ref{tla.2}, $\|\bU'\Phi(\bW)\|_{\max} = O_P(\phi_{\max}\sqrt{p \log (nJ) })$. Thus
\begin{align*}
\|\bA_7\|_{\max} & \le \frac{K}{\sqrt{n}p}\|\bF\|_{\max} \|\tF\|_{F} \|\bR' \Phi (\Phi'\Phi)^{-1} \Phi '\bU\|_{\max} \\
& \le O_p(p^{-1}\sqrt{J\log n}) \|\bR'\Phi\|_{F} \|(\Phi'\Phi)^{-1}\|_2 \|\Phi'\bU\|_{\max} = O_p\Big(\phi_{\max}\sqrt{\frac{J \log(nJ) \log n}{pJ^\kappa}}\Big)\,,
\end{align*}
since $\|\bR'\Phi\|_{F} \le \|\bR\|_F \|\Phi\|_2 = O_P(pJ^{-\kappa/2})$.
The rate of convergence for $\bA_8$ can be bounded in the same way. So do $\bA_{12}$ and $\bA_{13}$. Given that $\|\bGamma'\Phi\|_F = O_P(pJ\nu_p)$, we have $\|\bA_{12}\|_{\max} = O_P(J\phi_{\max}\sqrt{\nu_p \log (nJ) \log n}/p) = \|\bA_{13}\|_{\max}$.

(vi) Obviously, $\|\bA_{14}\|_{\max} = O_P(p^{-1}\sqrt{\log n}\|\bR'\bP\bGamma\|_{\max})$ and $ \|\bR'\bP\bGamma\|_{\max} \le \|\bR'\Phi\|_F \|(\Phi'\Phi)^{-1}\| \|\Phi'\bGamma\|_{F}$. We conclude $\|\bA_{14}\|_{\max} = O_P(\sqrt{p^{-1} J^{1-\kappa} \nu_p\log n})$. Same bound holds for $\bA_{15}$.
\end{proof}

The final rate of convergence for $\|\tF-\bF\bH\|_{\max}$ and  $\|\tF-\bF\bH\|_{F}$ are summarized as follows.

\begin{prop} \label{pa.1} Choose $J = (p\min(n,p,\nu_p^{-1}))^{1/\kappa}$ and assume $J^2\phi_{\max}^2 \log (nJ)=O(p)$ and $\nu_p = O(1)$,
\begin{equation}
\|\tF-\bF\bH\|_{\max} =O_P\Big(\sqrt{\frac{\log n}{p}}\Big) \;\; \text{and} \;\; \|\tF-\bF\bH\|_{F} =O_P\Big(\sqrt{\frac{n}{p}}\Big).
\end{equation}
\end{prop}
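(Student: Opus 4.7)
\textbf{Proof proposal for Proposition \ref{pa.1}.}

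The strategy is to combine the decomposition \eqref{eqa.1} with the term-by-term bounds in Lemma \ref{la.2} and the fact that $\|\bK^{-1}\|_2 = O_P(1)$. Since there are only $K$ columns on the right-hand side of \eqref{eqa.1} and $K$ is fixed, for any matrix norm $\|\cdot\|$ that is submultiplicative with respect to right-multiplication by $\bK^{-1}$ (both $\|\cdot\|_{\max}$ and $\|\cdot\|_F$ qualify up to a factor of $\sqrt{K}$), the rate of $\|\tF - \bF\bH\|$ is controlled by $\sum_{i=1}^{15}\|\bA_i\|$. My job is therefore to verify that under the calibrated choice $J=(p\min\{n,p,\nu_p^{-1}\})^{1/\kappa}$, together with $\nu_p = O(1)$ and $J^2\phi_{\max}^2\log(nJ)=O(p)$, the two ``leading'' terms $\bA_1$ and $\bA_2$ achieve the stated rates and every other term is asymptotically dominated.

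For the max-norm claim, the plan is to substitute the chosen $J$ into each of the rates in Lemma \ref{la.2}(i)--(vi) and check domination. Parts (i) already give $\|\bA_1\|_{\max}, \|\bA_2\|_{\max}=O_P(\sqrt{\log n/p})$. For part (ii), $\|\bA_3\|_{\max}=O_P(J\phi_{\max}\sqrt{\log(nJ)}/p)$; the assumption $J^2\phi_{\max}^2\log(nJ)=O(p)$ forces $J\phi_{\max}\sqrt{\log(nJ)}/p=O(p^{-1/2})$, which is of smaller order than $\sqrt{\log n/p}$. Parts (iii)--(vi) involve factors such as $J^{-\kappa/2}$, $J^{-\kappa}$, $J\nu_p/p$, $\sqrt{\nu_p/p}$, $\sqrt{J^{1-\kappa}/p}$, etc.; plugging $J\ge (pn)^{1/\kappa}$ (hence $J^{-\kappa/2}\le (pn)^{-1/2}$) and $\nu_p = O(1)$ shows each of these is $O(p^{-1/2})$ up to log factors, hence $o(\sqrt{\log n/p})$. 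Careful case-splitting using the $\min$ inside $J$ handles whichever of $n$, $p$, $\nu_p^{-1}$ dominates. Thus $\|\tF - \bF\bH\|_{\max} = O_P(\sqrt{\log n/p})$.

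For the Frobenius-norm claim I will first establish a companion lemma giving $\|\bA_i\|_F$ rates. The key observation is that replacing $\|\bF\|_{\max} = O_P(\sqrt{\log n})$ and $\|\bU\|_{\max} = O_P(\sqrt{\log(np)})$ by $\|\bF\|_2 = O_P(\sqrt{n})$ and operator-norm controls on $\bU$ removes every logarithmic factor in the bounds of Lemma \ref{la.2}. For example, using $\|\bA\bB\bC\|_F \le \|\bA\|_2 \|\bB\|_F \|\bC\|_2$, one gets $\|\bA_1\|_F \le (np)^{-1}\|\bF\|_2\|\bB'\Phi'\bU\|_F\|\tF\|_2 = O_P((np)^{-1}\sqrt{n}\cdot\sqrt{np}\cdot\sqrt{n}) = O_P(\sqrt{n/p})$, and the analogue for $\bA_2$ is identical. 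All the remaining $\bA_i$ are smaller by a factor of at least $p^{-1/2}$ times a power of $J$ that is absorbed by $J^2\phi_{\max}^2\log(nJ)=O(p)$ and $\nu_p=O(1)$, exactly as in the max-norm case. Summing yields $\|\tF - \bF\bH\|_F = O_P(\sqrt{n/p})$.

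The only non-routine part of the argument is the book-keeping: there are fifteen terms of heterogeneous form (coming from the cross-products of the four pieces $\Phi\bB\bF'$, $\bR\bF'$, $\bGamma\bF'$, $\bU$ in the expansion of $\bX'\bP\bX$), and each must be checked to be $O_P(p^{-1/2})$ up to log factors after substituting the equi-balancing choice $J=(p\min\{n,p,\nu_p^{-1}\})^{1/\kappa}$. The main obstacle, I expect, is to present this calculation cleanly, and in particular to justify that the \emph{same} $J$ simultaneously balances the sieve bias terms (controlled by $J^{-\kappa/2}$) against the stochastic terms (controlled by powers of $J/p$ and $\sqrt{\nu_p/p}$). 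Once that calibration is verified, the proposition follows directly from Lemma \ref{la.2}, its Frobenius analogue, and $\|\bK^{-1}\|_2 = O_P(1)$.
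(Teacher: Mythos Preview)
Your proposal is correct and, for the max-norm, follows exactly the paper's route: invoke the decomposition \eqref{eqa.1}, the bound $\|\bK^{-1}\|_2=O_P(1)$, and the term-by-term rates of Lemma \ref{la.2}, then verify that under the calibrated $J$ and the side condition $J^2\phi_{\max}^2\log(nJ)=O(p)$ every $\bA_i$ is dominated by $\|\bA_1\|_{\max}=\|\bA_2\|_{\max}=O_P(\sqrt{\log n/p})$. One small slip: you write ``$J\ge (pn)^{1/\kappa}$,'' but in fact $J^\kappa = p\min(n,p,\nu_p^{-1}) \le pn$, so the inequality points the other way; what you actually need (and what holds) is $J^{-\kappa/2}=(p\min(n,p,\nu_p^{-1}))^{-1/2}=O(p^{-1/2})$ once $\min(n,p,\nu_p^{-1})$ is bounded below, which your subsequent case-splitting remark handles.

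For the Frobenius-norm claim your plan differs from the paper. The paper does not re-derive $\|\bA_i\|_F$ bounds at all; it simply cites \cite{FLW14}, where the rate $\|\tF-\bF\bH\|_F=O_P(\sqrt{n/p})$ was already established. Your proposed route---replacing $\|\bF\|_{\max}$ by $\|\bF\|_2=\sqrt{n}$ and using $\|\bA\bB\bC\|_F\le\|\bA\|_2\|\bB\|_F\|\bC\|_2$ to produce a Frobenius analogue of Lemma \ref{la.2}---is a valid self-contained alternative and reproduces the same rate. The trade-off is that you incur the full fifteen-term book-keeping a second time, whereas the paper avoids this by pointing to the earlier reference.
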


\begin{proof} The max norm result follows from Lemmas \ref{la.2} and (\ref{eqa.1}), while the Frobenius norm result has been shown in \cite{FLW14}.
\end{proof}

\subsection{Rates of $\|\bF'(\tF-\bF\bH)\|_{\max}$ and $\|\bH\bH' - \bI\|_{\max}$}

Note first that the two matrices under consideration is both $K$ by $K$, so we do not lose rates bounding them by their Frobenius norm.

It has been proved in \cite{FLW14} that $\|\bF'(\tF-\bF\bH)\|_{F} = O_P(\sqrt{n/p} + n/p + n\sqrt{\nu_p/p} + nJ^{-\kappa/2})$. By the choice of $J$, the last term vanishes. So
$$
\|\bF'(\tF-\bF\bH)\|_{\max} \le \|\bF'(\tF-\bF\bH)\|_{F} =  O_P(\sqrt{n/p} + n/p + n\sqrt{\nu_p/p}).
$$

\cite{FLW14} also showed that $\|\bH'\bH - \bI\|_{F} = O_P(1/p+1/\sqrt{pn} + J^{-\kappa/2} + \sqrt{\nu_p/p})$. Since $\|\bH\|$ and $\|\bH^{-1}\|$ are both $O_P(1)$, we easily show $\|\bH\bH' - \bI\|_{\max} \le \|\bH\bH' - \bI\|_{F} \le \|\bH\| \|\bH'\bH - \bI\|_{F} \|\bH^{-1}\| = O_P(1/p+1/\sqrt{pn} + \sqrt{\nu_p/p})$ since $J^{\kappa} \ge p/\nu_p$.

\subsection{Rate of $\|\bU(\tF-\bF\bH)\|_{\max}$}

By (\ref{eqa.1}), in order to bound $\|\bU(\tF-\bF\bH)\|_{\max}$, we essentially need to bound $\|\bU\bA_i\|_{\max}$ for $i = 1,\dots,15$. We do not bother going into the details of each term again as in Lemma \ref{la.2}. However, we point out the difference here. All $\bA_i$ are separated into two types: the ones starting with $\bF$ and the ones starting with $\bU$.

If a term $\bA_i$ starts with $\bF$, say $\bA_i = \bF \bQ$, in Lemma \ref{la.2}, we bound $\|\bA_i\|_{\max}$ using $\sqrt{K} \|\bF\|_{\max} \|\bQ\|_{F}$. Now we use bound $\|\bU\bA_i\|_{\max} \le \sqrt{K} \|\bU\bF\|_{\max} \|\bQ\|_{F}$ so that we obtain all related rates by just changing rate $\|\bF\|_{\max} = O_P(\sqrt{\log n})$ to $\|\bU\bF\|_{\max} = O_P(\sqrt{n \log p})$.

Terms starting with $\bU$ includes $\bA_i, i = 2, 3, 8, 13$. In Lemma \ref{la.2}, we bound $\|\bA_i\|_{\max}, i = 3, 8, 13$ using $\|\bU'\Phi\|_{\max}$ while we bound $\|\bA_2\|_{\max}$ using $\|\bU'\Phi\bB\|_{\max}$. Correspondingly now we need to control $\|\bU\bU'\Phi\|_{\max}$ and $\|\bU\bU'\Phi\bB\|_{\max}$ separately to update the rates. The derivation is relegated to Lemma \ref{tla.5}. We have $\|\bU\bU'\Phi(\bW)\|_{\max}=O_P(\phi_{\max}(\sqrt{np\log p} + n\|\bSigma\|_1))$ and $\|\bU\bU'\Phi(\bW)\bB\|_{\max}=O_P(\sqrt{np\log p} + nJ\phi_{\max} \|\bSigma\|_1)$.

So we replace the corresponding terms in Lemma \ref{la.2}. It is not hard to see the dominating term is $\|\bU\bA_2\|_{\max} = O_P(\sqrt{n \log p/p} + nJ\phi_{\max}\|\bSigma\|_1/p)$. Therefore, $\|\bU(\tF-\bF\bH)\|_{\max}$ has the same rate.

\section{Proof of Theorem \ref{th2.3}}
\begin{proof}
Denote the empirical covariance matrix as
$$
\bSigma_N = \frac{1}{N} \sum_{i=1}^m \bU^i{\bU^i}'.
$$
As in \cite{CLL11}, the upper bound on $\|\widehat\bOmega - \bOmega\|$ is obtained by proving
\begin{equation} \label{eqa.6}
\|(\widehat\bSigma - \bSigma_N)\bOmega\|_{\max} = O_p(\tau_{N, p}) \;\; \text{and} \;\; \|(\bSigma_N - \bSigma) \bOmega\|_{\max} = O_p(\tau_{N,p}).
\end{equation}
Once the two bounds are established, we proceed by observing
$$
\|\bI_p - \widehat\bSigma \bOmega\|_{\max} = \|(\widehat\bSigma - \bSigma) \bOmega\|_{\max} = O_p(\tau_{N,p}),
$$
and then it readily follows that if $\lambda \asymp \tau_{N,p}$,
\begin{align*}
\|\widehat\bOmega - \bOmega\|_{\max} & \le \|\bOmega (\bI_p - \widehat\bSigma\widehat\bOmega)\|_{\max} + \|(\bI_p - \widehat\bSigma \bOmega )' \widehat\bOmega\|_{\max} \\
& \le \|\bOmega\|_1 \|\bI_p - \widehat\bSigma\widehat\bOmega\|_{\max} + \|\bI_p - \widehat\bSigma \bOmega\|_{\max} \|\widehat\bOmega\|_1 \le \lambda \|\bOmega\|_1 + \tau \|\bOmega\|_1 = O_p(\tau_{N,p}),
\end{align*}
where the first term of the last inequality uses the constraint of (\ref{eq2.10}) while the optimality condition of (\ref{eq2.10}) is applied to bound $\|\widehat\bOmega\|_1$ by $\|\bOmega\|_1$. So it remains to find $\tau_{N,p}$ in (\ref{eqa.6}). Since $\bOmega \in \mathcal F(s, C_0)$, $\|\bOmega\|_1 \le C_0$, so we just need to bound $\|\widehat\bSigma - \bSigma_N\|_{\max}$ and $\|\bSigma_N - \bSigma\|_{\max}$. Obviously,
$$
\|\bSigma_N - \bSigma\|_{\max} = O_p\Big(\sqrt{\frac{\log p}{N}}\Big).
$$
By assumption $\|\widehat\bSigma - \bSigma_N\|_{\max} = O_P(a_{m, N, p})$. Thus $\tau = \sqrt{\log p/N} + a_{m, N, p}$. Similar proof as in \cite{CLL11} can also reach error bounds under $\|\cdot\|_1$ and $\|\cdot\|_2$, which we omit. The proof is now complete.
\end{proof}

\section{Technical lemmas}

\begin{lem}\label{tla.PCA1}
 (i)$\|\bLambda'\bU\|_F^2=O_P(np)$, \\
 (ii) $\|\bU'\bU\|_F^2 = O_P(np^2 + pn^2)$, \\
 (iii)$\|\bU'\bU\bF\|_F^2=O_P(np^2 + pn^2)$.
\end{lem}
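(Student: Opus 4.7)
\smallskip

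\noindent\textbf{Proof proposal for Lemma~\ref{tla.PCA1}.}
My plan is to bound the expectation of each squared Frobenius norm and then invoke Markov's inequality. Throughout I will exploit three ingredients that have already been put in place by the assumptions: (a) the columns $\bu_t$ of $\bU$ are i.i.d.\ sub-Gaussian with mean zero and covariance $\bSigma$ satisfying $\|\bSigma\|_2 \le C_0$, so in particular $\tr(\bSigma^2)\le p\|\bSigma\|_2^2 = O(p)$; (b) the pervasiveness condition $\lambda_{\max}(p^{-1}\bLambda'\bLambda)<c_{\max}$, which yields $\|\blambda_k\|^2 = O(p)$ for each column $\blambda_k$ of $\bLambda$; and (c) $\bF$ is independent of $\bU$ with $\bF'\bF=n\bI_K$.

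For part (i), I will expand $\|\bLambda'\bU\|_F^2 = \sum_{k=1}^K\sum_{t=1}^n (\blambda_k'\bu_t)^2$ and take expectations column-by-column: $E[(\blambda_k'\bu_t)^2] = \blambda_k'\bSigma\blambda_k \le C_0 \|\blambda_k\|^2 = O(p)$. Summing over the $nK$ pairs gives $E\|\bLambda'\bU\|_F^2=O(np)$ and the conclusion follows by Markov. For part (ii), I will separate the diagonal and off-diagonal entries of $\bU'\bU$. The diagonal contributions $E[(\bu_t'\bu_t)^2]=E\|\bu_t\|^4$ are $O(p^2)$ by sub-Gaussianity of $\bu_t$ (the Hanson--Wright inequality gives $\mathrm{Var}(\|\bu_t\|^2) = O(\|\bSigma\|_F^2) = O(p)$, and $(E\|\bu_t\|^2)^2=(\tr\bSigma)^2=O(p^2)$), producing $O(np^2)$ overall. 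The off-diagonal contributions obey $E[(\bu_s'\bu_t)^2]=\tr(\bSigma^2)=O(p)$ for $s\neq t$ by independence, giving $O(n^2p)$. Summing yields the claimed rate.

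Part (iii) is the step I expect to require the most care, since the naive bound $\|\bU'\bU\bF\|_F^2\le n\|\bU'\bU\|_F^2$ loses a factor of $n$. My strategy is to condition on $\bF$ and analyze the entries directly. Writing $\bxi_k := \bU\bar\bff_k = \sum_t f_{tk}\bu_t$, the $(s,k)$-th entry of $\bU'\bU\bF$ equals $\bu_s'\bxi_k$. I will isolate the $\bu_s$-contribution by splitting $\bxi_k = f_{sk}\bu_s + \bxi_k^{(-s)}$, where $\bxi_k^{(-s)}$ is independent of $\bu_s$; the cross term vanishes in expectation since $E[\|\bu_s\|^2\bu_s]=0$. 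The remaining two pieces give $E[(\bu_s'\bxi_k^{(-s)})^2\mid\bF]=\tr(\bSigma^2)\sum_{t\neq s}f_{tk}^2$ and $f_{sk}^2 E\|\bu_s\|^4 = O(p^2)f_{sk}^2$. Summing over $s,k$ and using $\sum_{s,k}f_{sk}^2=\|\bF\|_F^2=nK$ together with $\tr(\bSigma^2)=O(p)$ produces $E[\|\bU'\bU\bF\|_F^2\mid\bF] = O(n^2 p) + O(np^2)$, after which an unconditional Markov argument yields the stated bound.

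The genuine obstacle is the off-diagonal cancellation needed in (iii): if one naively bounds $\|\bU'\bU\bF\|_F^2$ by factoring the norms, the rate deteriorates by a factor of $n$ because $\bU'\bU$ has spectral norm of order $n+p$. The conditioning-and-peeling trick above is what recovers the correct $np^2+n^2p$ rate, and mirrors how one handles analogous quadratic forms in the factor-model literature (e.g., Bai and Ng).
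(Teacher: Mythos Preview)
Your proposal is correct and follows essentially the same approach as the paper: bound each expected squared Frobenius norm and invoke Markov, separating in (ii)--(iii) the diagonal ($t'=t$) from off-diagonal contributions and controlling them via $E\|\bu_t\|^4=O(p^2)$ and $\tr(\bSigma^2)=O(p)$, respectively. One minor point: in (iii) the cross term vanishes because $\bxi_k^{(-s)}$ is independent of $\bu_s$ with mean zero, not because $E[\|\bu_s\|^2\bu_s]=0$ (which need not hold for a general sub-Gaussian vector without a symmetry assumption).
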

\begin{proof}
We simply apply Markov inequality to get the rates.
$$
\mathbb E\|\bLambda'\bU\|_F^2 = \mathbb E[ \tr(\bLambda'\bU\bU'\bLambda) ]= n \cdot \tr(\bLambda'\bSigma\bLambda) \le n \|\bSigma\| \cdot \tr(\bLambda'\bLambda) = O(np)\,.
$$
\begin{align*}
\mathbb E\|\bU'\bU\|_F^2  & = \mathbb E\Big[\sum_{t=1}^n \sum_{t'=1}^n (\sum_{j = 1}^p u_{jt} u_{jt'})^2\Big] = \sum_{j_1, j_2 = 1}^p \Big( \sum_{t=1}^n \mathbb E[u_{j_1 t}^2 u_{j_2 t}^2] + \sum_{1\le t \ne t_1 \le n} \sigma_{j_1j_2}^2\Big) \\
& = O_P(np^2 + pn^2)\,,
\end{align*}
since $\sum_{j_1,j_2} \sigma_{j_1j_2}^2 = \tr(\bSigma^2) \le \|\bSigma\|\tr(\bSigma) = O(p)$.
\begin{align*}
\mathbb E\|\bU'\bU\bF\|_F^2 & = \mathbb E\Big[\sum_{t=1}^n \sum_{k=1}^K (\sum_{t' = 1}^n \sum_{j = 1}^p u_{jt} u_{jt'} f_{t'k})^2\Big] \\
& = \sum_{k=1}^K \sum_{j_1, j_2 = 1}^p \Big( \sum_{t=1}^n \mathbb E[u_{j_1 t}^2 u_{j_2 t}^2] f_{tk}^2 + \sum_{1 \le t \ne t_1 \le n} \sigma_{j_1j_2}^2 f_{t_1 k}^2 \Big)= O_P(np^2 + pn^2)\,.
\end{align*}
\end{proof}

\begin{lem}\label{tla.PCA2}
 (i)$\|\bLambda'\bU\|_{\max}=O_P(\sqrt{p\log n})$. \\
 (ii) $\|\bU'\bU\|_{\max} = O_P(p)$, \\
 (iii)$\|\bU'\bU\bF\|_{\max} =O_P(\sqrt{np\log n} + p\sqrt{\log n})$.
\end{lem}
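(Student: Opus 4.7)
\medskip

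\noindent\textbf{Proof proposal for Lemma \ref{tla.PCA2}.} The plan is to treat the three bounds as successively more delicate applications of sub-Gaussian concentration together with a union bound over the $O(n)$ or $O(n^2)$ entries of each matrix, relying critically on (a) $\bu_t$ being i.i.d.\ sub-Gaussian with $\|\bSigma\|_2\le C_0$ (Assumption \ref{ass2.2}), (b) the pervasiveness bound $\lambda_{\max}(p^{-1}\bLambda'\bLambda)<c_{\max}$ from Assumption \ref{ass2.1.1}, and (c) the independence of $\{\bu_t\}$ and $\bF$.

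For part (i), the $(k,t)$ entry of $\bLambda'\bU$ is $\blambda_k'\bu_t$, which is sub-Gaussian with variance proxy $\blambda_k'\bSigma\blambda_k\le C_0\|\blambda_k\|^2\le C_0 c_{\max} p$ by pervasiveness. A standard Gaussian-type tail bound combined with a union bound over $k\le K$ and $t\le n$ yields $\|\bLambda'\bU\|_{\max}=O_P(\sqrt{p\log n})$ provided $K$ is bounded (or more generally $\log K=O(\log n)$).

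For part (ii), I separate the diagonal and off-diagonal entries of $\bU'\bU$. On the diagonal, $\|\bu_t\|^2$ is a quadratic form in a sub-Gaussian vector with mean $\tr(\bSigma)=O(p)$; the Hanson--Wright inequality gives $|\,\|\bu_t\|^2-\tr(\bSigma)\,|=O_P(\sqrt{p\log n}+\log n)$ uniformly in $t\le n$, hence $\max_t\|\bu_t\|^2=O_P(p)$. For off-diagonal entries $\bu_t'\bu_{t'}$ with $t\ne t'$, I condition on $\bu_{t'}$: then $\bu_t'\bu_{t'}$ is sub-Gaussian with variance proxy $\bu_{t'}'\bSigma\bu_{t'}\le C_0\|\bu_{t'}\|^2=O_P(p)$, so a union bound over the $O(n^2)$ pairs gives $O_P(\sqrt{p\log n})$. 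Combining, $\|\bU'\bU\|_{\max}=O_P(p)$ with the diagonal as the dominant term.

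For part (iii), write the $(t,k)$ entry as
\begin{equation*}
\sum_{t'=1}^n (\bu_t'\bu_{t'}) f_{t'k} \;=\; \|\bu_t\|^2 f_{tk} + \sum_{t'\ne t} (\bu_t'\bu_{t'}) f_{t'k}.
\end{equation*}
The first piece is $O_P(p\sqrt{\log n})$ since $\|\bu_t\|^2=O_P(p)$ uniformly and $|f_{tk}|=O_P(\sqrt{\log n})$ uniformly by Assumption \ref{ass2.2}(iv). For the second piece I condition on $(\bu_t,\bF)$; by the independence of $\bU$ and $\bF$ and the i.i.d.\ structure of the columns of $\bU$, the terms $\{(\bu_t'\bu_{t'})f_{t'k}\}_{t'\ne t}$ are independent sub-Gaussians, each with conditional variance proxy $f_{t'k}^2\,\bu_t'\bSigma\bu_t\lesssim p f_{t'k}^2$ on the event $\|\bu_t\|^2=O(p)$. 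Their total conditional variance is $O(p\sum_{t'}f_{t'k}^2)=O(pn)$ using $\bF'\bF/n=\bI$, so a sub-Gaussian tail bound plus union bound over the $nK$ pairs $(t,k)$ yields $O_P(\sqrt{np\log n})$. Summing the two pieces gives $\|\bU'\bU\bF\|_{\max}=O_P(\sqrt{np\log n}+p\sqrt{\log n})$.

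The one step that needs the most care is the conditioning argument in part (iii): I must make sure the conditional sub-Gaussianity is preserved and that the high-probability event $\{\max_t\|\bu_t\|^2=O(p)\}$ from part (ii) can be intersected with the concentration event without inflating the bound. Independence of $\bU$ and $\bF$ from Assumption \ref{ass2.2}(ii) is exactly what makes this clean; without it, the inner product $(\bu_t'\bu_{t'})f_{t'k}$ could not be treated as independent across $t'$ after conditioning.
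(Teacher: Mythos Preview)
Your proposal is correct and follows essentially the same route as the paper: the same diagonal/off-diagonal split in (ii) and the same $t'=t$ versus $t'\ne t$ split in (iii), with sub-Gaussian concentration plus union bounds throughout. The only minor variation is that where you condition on the high-probability event $\{\max_t\|\bu_t\|^2=O(p)\}$ to control the random variance proxy, the paper instead bounds the full moment generating function $\mathbb E[\exp(\eta\|\bu_t\|^2)]$ directly via the Hsu--Kakade--Zhang inequality, which sidesteps the need to intersect events; both approaches are valid and yield the same rates.
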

\begin{proof}
(i) $\|\bLambda'\bU\|_{\max} = \max_{t,k} |\bu_t'\blambda_k|$ where $\blambda_k$ is the $k^{th}$ column of $\bLambda$. Since $\bu_t'\blambda_k$ is mean zero sub-Gaussian with variance proxy $\blambda_k'\bSigma\blambda_k \le \|\bSigma\| \|\blambda_k\|^2 = O(p)$, we have $\|\bLambda'\bU\|_{\max} = O_P(\sqrt{p\log n})$.

(ii) $\|\bU'\bU\|_{\max} = \max_{t,t'} |\bu_t'\bu_{t'}| \le \max_{t \ne t'} |\bu_t'\bu_{t'}| + \max_t |\bu_t'\bu_t|$. We need to bound each term separately. The second term is bounded by the upper tail bound of Hanson-Wright inequality for sub-Gaussian vector \citep{HsuKakZha12, RudVer13} i.e.
$$
\mathbb P(\|\bu_t\|^2 > \tr(\bSigma) + 2\sqrt{\tr(\bSigma) s} + 2\|\bSigma\|s) \le e^{-s}\,.
$$
Choose $s = \log n$ and apply union bound, we have $\max_t |\bu_t'\bu_t| = O_P(\tr(\bSigma) + 2\sqrt{\tr(\bSigma) s}) = O_P(p + \sqrt{p \log n}) = O_P(p)$. Then we deal with the first term. By Chernoff bound,
$$
\mathbb P(\max_{t \ne t'} |\bu_t'\bu_{t'}| > s) \le 2n^2 e^{-s\theta} \mathbb E[\exp(\theta \bu_t'\bu_{t'})]\,,
$$
where $\mathbb E[\exp(\theta \bu_t'\bu_{t'})] = E[\exp(\theta^2 \bu_t'\bSigma\bu_t/2)] \le E[\exp(C\theta^2 \|\bu_t\|^2)]$. \cite{HsuKakZha12} showed that
$$
\mathbb E[\exp(\eta \|\bu_t\|^2)] \le \exp\Big( \tr(\bSigma) \eta + \frac{\tr(\bSigma^2) \eta^2}{1-2\|\bSigma\|\eta} \Big)
$$
For $\eta < 1/(4\|\bSigma\|) \le \tr(\bSigma)/(4\tr(\bSigma^2))$, the right hand side is less than $\exp(3\tr(\bSigma)\eta/2) \le \exp(Cp\eta)$. Choose $\eta = C \theta^2$, we have
$$
\mathbb P(\max_{t \ne t'} |\bu_t'\bu_{t'}| > s) \le 2n^2 \exp(-s\theta + C\theta^2 p) \,.
$$
We minimize the right hand side and choose $\theta = s/(2Cp)$, it is easy to check $\eta < 1/(4\|\bSigma\|)$ and see that $\max_{t \ne t'} |\bu_t'\bu_{t'}| = O_P(\sqrt{p \log n})$. So we conclude that $\|\bU'\bU\|_{\max} = O_P(p)$.

(iii) Let $\bar \bff_k$ be the $k^{th}$ column of $\bF$. $\|\bU'\bU\bF\|_{\max} = \max_{t,k} |\bu_t'\bU \bar\bff_k| \le \max_{t,k} |\bu_t'\bU_{(-t)} \bar \bff_{k(-t)}| + \max_{t,k} |\bu_t'\bu_t f_{tk}|$ where $\bU_{(-t)}$, $\bar\bff_{k(-t)}$ are $\bU$ and $\bar\bff_k$ canceling the $t^{th}$ column and element respectively. From (ii) we know the second term is of order $O_P(p\max_{tk}|f_{tk}|) = O_P(p\sqrt{\log n})$. Define $\bxi = \bU_{(-t)} \bar\bff_{k(-t)} \sim \text{subGaussian}({\bf 0}, \bSigma\|\bar\bff_{k(-t)}\|^2)$, which is independent with $\bu_t$. Thus
$$
\mathbb P(\max_{t,k}|\bu_t'\bxi| > s) \le 2nK e^{-s\theta} \mathbb E[\exp(\theta\bu_t'\bxi)]\,,
$$
where $\mathbb E[\exp(\theta\bu_t'\bxi)] \le \mathbb E[\exp(\theta^2 \bu_t'\bSigma\bu_t \|\bar\bff_{k(-t)}\|^2/2)] \le \mathbb E[\exp(C\theta^2 n \|\bu_t\|^2)]$. Similar to (ii), we choose $\eta = C\theta^2 n$ here. It is not hard to see $\max_{t,k}|\bu_t'\bxi| = O_P(\sqrt{np\log n})$. Thus $\|\bU'\bU\bF\|_{\max} = O_P(\sqrt{np\log n} + p\sqrt{\log n})$.
\end{proof}

\begin{lem}\label{tla.1}
 (i)$\|\bF'\bU'\|_F^2=O_P(np)$. \\
 (ii) $\|\bU'\Phi(\bW)\|_F^2=O_P(npJ)$, $\| \bU' \Phi(\bW)\bB\|_F^2=O_P(np)$.\\
 (iii) $\|\Phi(\bW)'\bU\bF\|_F^2=O_P(npJ)$, $\|\bB'\Phi(\bW)'\bU\bF\|_F^2=O_P(np)$.
\end{lem}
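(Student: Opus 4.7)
\textbf{Proof proposal for Lemma \ref{tla.1}.}

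The plan is to apply Markov's inequality to each squared Frobenius norm after computing its expectation, exploiting the independence of $\bU$ from both $\bF$ and $\bW$ (Assumption \ref{ass2.2} (ii)), the bound $\|\bSigma\|_2 \le C_0$ (Assumption \ref{ass2.2} (iii)), and the normalization $n^{-1}\bF'\bF=\bI_K$ which gives $\|\bF\|_F^2 = nK$. Throughout, I will repeatedly use the key conditional identity
$$
\mathbb{E}\bigl[\bu_t' \bM \bu_{t'} \bigm| \bM\bigr] \;=\; \tr(\bM\bSigma)\,\mathbbm{1}(t=t'),
$$
valid whenever $\bM$ is independent of the $\bu_t$'s.

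For part (i), I write $\mathbb{E}\|\bF'\bU'\|_F^2 = \mathbb{E}\,\tr(\bF'\bU'\bU\bF)$. Conditioning on $\bF$ and using the identity above to evaluate $\mathbb{E}[\bU'\bU\mid\bF] = \tr(\bSigma)\bI_n$, this becomes $\tr(\bSigma)\,\mathbb{E}\tr(\bF'\bF) = nK\,\tr(\bSigma)$. Since $\tr(\bSigma)\le p\|\bSigma\|_2 = O(p)$, the expectation is $O(np)$, and Markov's inequality yields the claim.

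For part (ii), I write $\mathbb{E}\|\bU'\Phi(\bW)\|_F^2 = \mathbb{E}\,\tr(\Phi(\bW)'\bU\bU'\Phi(\bW))$. Conditioning on $\bW$ and using $\mathbb{E}[\bU\bU'\mid\bW] = n\bSigma$, this equals $n\,\mathbb{E}\,\tr(\Phi(\bW)'\bSigma\Phi(\bW)) \le n\|\bSigma\|_2\,\mathbb{E}\,\tr(\Phi(\bW)'\Phi(\bW))$. By Assumption \ref{ass2.3} (ii), $\mathbb{E}\phi_\nu(W_{jl})^2<\infty$ uniformly, so $\mathbb{E}\,\tr(\Phi(\bW)'\Phi(\bW)) = \sum_{\nu,j,l}\mathbb{E}\phi_\nu(W_{jl})^2 = O(pJd) = O(pJ)$, giving $O(npJ)$. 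For the $\bB$-weighted version, the same computation gives $\mathbb{E}\|\bU'\Phi(\bW)\bB\|_F^2 \le n\|\bSigma\|_2\,\mathbb{E}\|\Phi(\bW)\bB\|_F^2$. Here I use the decomposition $\Phi(\bW)\bB = \bG(\bW)-\bR(\bW)$ from (\ref{eq2.5}): by Assumption \ref{ass2.1.2} (ii) we have $\mathbb{E}\|\bG(\bW)\|_F^2 = O(p)$, and by Assumption \ref{ass2.4} (i), $\|\bR(\bW)\|_F^2 = O(pKJ^{-\kappa}) = O(p)$, so $\mathbb{E}\|\Phi(\bW)\bB\|_F^2 = O(p)$ and the overall bound is $O(np)$.

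For part (iii), the computation is essentially the same with $\bF$ inserted. Set $\bA = \Phi(\bW)\Phi(\bW)'$ and expand
$$
\tr(\bF'\bU'\bA\bU\bF) \;=\; \sum_{k=1}^K \sum_{t,t'=1}^{n} f_{tk}f_{t'k}\,\bu_t'\bA\bu_{t'}.
$$
Conditioning on $(\bF,\bW)$ and applying the identity collapses the double sum to $\tr(\bA\bSigma)\sum_{k,t}f_{tk}^2 = nK\,\tr(\bA\bSigma)$. Taking expectations and bounding $\tr(\bA\bSigma)\le\|\bSigma\|_2\,\tr(\Phi(\bW)'\Phi(\bW))$ as in (ii) gives $\mathbb{E}\|\Phi(\bW)'\bU\bF\|_F^2 = O(npJ)$. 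Replacing $\bA$ by $\Phi(\bW)\bB\bB'\Phi(\bW)'$ and invoking the bound $\mathbb{E}\|\Phi(\bW)\bB\|_F^2 = O(p)$ established above yields $\mathbb{E}\|\bB'\Phi(\bW)'\bU\bF\|_F^2 = O(np)$.

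No step is really an obstacle: everything reduces to the conditional trace identity together with the moment bounds already guaranteed by the stated assumptions. The only place one has to be a little careful is the $\bB$-weighted quantity, where $\|\bB\|$ alone is not tracked in the paper, and one must route the bound through $\Phi(\bW)\bB = \bG(\bW)-\bR(\bW)$ rather than through $\|\bB\|$ directly.
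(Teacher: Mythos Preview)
Your proposal is correct and follows essentially the same approach as the paper: compute the expected squared Frobenius norm via a trace identity (using independence of $\bU$ from $\bF,\bW$ and $\mathbb E[\bU\bU']=n\bSigma$), then apply Markov's inequality. The only cosmetic difference is that for $\tr(\Phi'\bSigma\Phi)$ the paper bounds via $Jd\,\|\Phi'\bSigma\Phi\|_2\le Jd\,C_0\|\Phi'\Phi\|_2=O(pJ)$ using Assumption~\ref{ass2.3}(i), whereas you use $\tr(\Phi'\bSigma\Phi)\le\|\bSigma\|_2\,\tr(\Phi'\Phi)$ together with Assumption~\ref{ass2.3}(ii); both routes are valid and yield the same rate.
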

\begin{proof}
This results can be found in the paper of Fan, Liao and Wang (2014). But the conditions they used are a little bit different from our conditions. In particular, we allow no time (sample) dependence and only require bounded $\|\bSigma\|_2$ instead of $\|\bSigma\|_1$. By Markov inequality, it is sufficient to show the expected value of each term attains the corresponding rate of convergence. \\
$$
\mathbb E\|\bF'\bU'\|_F^2 = \mathbb E[ \tr(\bF' \mathbb E[\bU'\bU] \bF) ]  = \mathbb E[ \tr(\bF' \tr(\bSigma)\bF) ] = n \cdot \tr(\bSigma) = O(np).
$$
\begin{align*}
\mathbb E\|\bU'\Phi(\bW)\|_F^2 & = \mathbb E[ \tr(\Phi' \mathbb E[\bU\bU'|\bW] \Phi) ] =  n \cdot \mathbb E[ \tr(\Phi' \bSigma \Phi) ] \le nJd \cdot \mathbb E[ \|\Phi' \bSigma \Phi\|_2 ] \\
& \le nJdC_0 \mathbb E[ \|\Phi' \Phi\|_2 ] = O(npJ).
\end{align*}
$$
\mathbb E \|\Phi(\bW)'\bU\bF\|_F^2 = \mathbb E[\tr(\Phi'\mathbb E[\bU\bF\bF'\bU'|\bW]\Phi)] = \mathbb E[\tr(\bF\bF') \tr(\Phi'\bSigma\Phi)] = O(npJ).
$$
$\mathbb E\|\bU'\Phi(\bW)\bB\|_F^2$ and $\|\bB'\Phi(\bW)'\bU\bF\|_F^2$ are both $O(np)$ following the same proof as above. Thus the proof is complete.
\end{proof}

\begin{lem}\label{tla.2}
 (i) $\|\bF'\bU'\|_{\max} = O_P(\sqrt{n\log p} )$ \\
 (ii) $\|\bU'\Phi(\bW)\|_{\max} = O_P(\phi_{\max}\sqrt{p \log (nJ) })$, $\| \bU' \Phi(\bW)\bB\|_{\max}=O_P(\sqrt{p \log n})$.\\
 (iii) $\|\Phi(\bW)'\bU\bF\|_{\max} = O_P(\phi_{\max}\sqrt{np\log J} )$, $\|\bB'\Phi(\bW)'\bU\bF\|_{\max}=O_P(\sqrt{np})$.
\end{lem}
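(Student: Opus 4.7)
The plan is to condition on the pair $(\bF,\bW)$, which by Assumption~\ref{ass2.2}(ii) is independent of the sub-Gaussian noise matrix $\bU$. After conditioning, every entry of each target matrix is a deterministic linear combination of the entries of $\bU$, hence a sub-Gaussian scalar whose variance proxy we can read off from $\bSigma$ and the weight vector. A standard Gaussian tail bound plus a union bound over the number of entries then produces the stated max-norm rates. The scale $n$ always enters through the normalization $n^{-1}\bF'\bF = \bI$ in Assumption~\ref{ass2.2}(i), and $p$ enters through $\|\bSigma\|_2 \le C_0$ times a weighted sum of squared covariate values.

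For part (i), the $(k,j)$ entry of $\bF'\bU'$ equals $\sum_{t=1}^n f_{tk} u_{jt}$; conditional on $\bF$ this is centred sub-Gaussian with variance proxy bounded by $\sigma_{jj}\sum_t f_{tk}^2 = n\sigma_{jj} \le C_0 n$. Taking a max over $Kp$ entries gives the $O_P(\sqrt{n\log p})$ bound. For part (ii), the $(t,(\nu,l))$ entry of $\bU'\Phi(\bW)$ is $\bphi_{\nu,l}'\bu_t$ with $\bphi_{\nu,l}=(\phi_\nu(W_{jl}))_{j\le p}$, whose conditional variance proxy is $\bphi_{\nu,l}'\bSigma\bphi_{\nu,l}\le C_0 p\phi_{\max}^2$; a union bound over $nJd$ entries yields $O_P(\phi_{\max}\sqrt{p\log(nJ)})$. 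For the $\bB$-weighted version, write $g_k^*(\bW_j) := \phi(\bW_j)'\bb_k = g_k(\bW_j) - \sum_l R_{kl}(W_{jl})$, so entry $(t,k)$ of $\bU'\Phi(\bW)\bB$ is $\sum_j u_{jt} g_k^*(\bW_j)$ with conditional variance proxy $\le C_0 \sum_j g_k^*(\bW_j)^2$. By Assumption~\ref{ass2.1.2}(ii) and Markov, $\sum_j g_k(\bW_j)^2 = O_P(p)$, and the sieve error $\sum_j R_{kl}^2 = O(pJ^{-\kappa})$ is negligible, so the variance proxy is $O_P(p)$ and the max over $nK$ entries is $O_P(\sqrt{p\log n})$.

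Part (iii) is handled in the same way after exchanging the order of summation. Let $\bxi_k := \bU \bar\bff_k = \sum_t \bu_t f_{tk}$; conditional on $\bF$ this vector is sub-Gaussian with covariance $(\sum_t f_{tk}^2)\bSigma = n\bSigma$. The $((\nu,l),k)$ entry of $\Phi(\bW)'\bU\bF$ equals $\bphi_{\nu,l}'\bxi_k$, with conditional variance proxy $\le n C_0 p \phi_{\max}^2$; a union bound over $JdK$ entries returns $O_P(\phi_{\max}\sqrt{np\log J})$. For the $\bB$-weighted version, the $(k,k')$ entry is $\sum_j g_k^*(\bW_j)\,\xi_{jk'}$ with conditional variance proxy $\le nC_0\sum_j g_k^*(\bW_j)^2 = O_P(np)$, and since $K$ is fixed the max over $K^2$ entries is $O_P(\sqrt{np})$ (no $\log$ factor is needed).

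The only nontrivial point in the execution is controlling $\sum_j g_k^*(\bW_j)^2$; this is the step that makes the $\bB$-weighted rates lose the $\phi_{\max}$ and the $\log$ factor relative to the unweighted rates. Once that observation is in place, the remaining work is bookkeeping on sub-Gaussian tail inequalities under conditioning, with no technical obstacle.
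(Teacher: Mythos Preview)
Your proposal is correct and follows essentially the same route as the paper: condition on $(\bF,\bW)$, reduce each entry to a sub-Gaussian linear form, bound the variance proxy via $\|\bSigma\|_2\le C_0$ together with $n^{-1}\bF'\bF=\bI$ and $\|\bphi_{\nu,l}\|^2\le p\phi_{\max}^2$, then apply a Chernoff bound plus a union bound over the appropriate number of entries. The only cosmetic difference is that for the $\bB$-weighted terms the paper first writes $\Phi(\bW)\bB=\bG(\bW)-\bR(\bW)$ and bounds $\|\bar g_k\|^2=O_P(p)$ via the pervasiveness Assumption~\ref{ass2.1.2}(i), whereas you work directly with $g_k^*$ and invoke Assumption~\ref{ass2.1.2}(ii) plus Markov; both arguments yield the same $O_P(p)$ control on the column norms and hence the same final rates.
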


\begin{proof}
 (i) It is not hard to see $\|\bF'\bU'\|_{\max} = \max_{k \le K, j \le p} |\sum_{t = 1}^n f_{tk} u_{jt} | = O_p(\sqrt{n \log p}).$ The detailed proof by Chernoff bound is given in the following. By union bound and Chernoff bound, we have
$$
\mathbb P\Big( \max_{k \le K, i \le p} \Big|\sum_{t = 1}^n f_{tk} u_{jt} \Big| > t\Big) \le 2pK e^{-t\theta} \cdot \mathbb E\Big[e^{\theta \sum_{t=1}^n f_{tk} u_{jt}}\Big].
$$
The expectation is calculated by fist conditioning on $\bF$,
$$
E\Big[e^{\theta \sum_{t=1}^n f_{tk} u_{jt}}\Big] = \mathbb E\Big[ \mathbb E\Big[e^{\theta \sum_{t=1}^n f_{tk} u_{jt}} | \bF\Big] \Big] \le  \mathbb E\Big[ e^{\theta^2 \sum_{t=1}^n f_{tk}^2 \sigma_{jj} /2} \Big] \le e^{\frac12 n C_0 \theta^2},
$$
where the second equality uses the sub-Gaussianity of $u_{jt}$ and the last inequality is from $n^{-1} \bF'\bF = \bI$ and $\|\bSigma\|_2 \le C_0$.
Therefore, choosing $\theta = \frac{t}{nC_0}$, we have
$$
\mathbb P\Big( \max_{k \le K, j \le p} \Big| \sum_{t = 1}^n f_{tk} u_{jt} \Big|> t\Big) \le 2pK e^{-t\theta} e^{\frac{C_0}{2} n\theta^2 } = 2pK e^{-\frac{t^2 }{2 C_0 n}}.
$$
Thus $\|\bF'\bU'\|_{\max} = O_p(\sqrt{n \log p})$.

(ii) $\|\bU'\Phi(\bW)\|_{\max} = \max_{\nu, l, t} |\sum_{j=1}^p u_{jt} \phi_{\nu }(W_{jl})| =  \max_{\nu, l, t} |\bar\phi_{\nu l}' \bu_{t}|$, where $\bar\phi_{\nu l} = (\phi_{\nu}(\bW_{1l}), \dots, \phi_{\nu}(\bW_{pl}))'$. Consider the tail probability condition on $\bW$:
$$
\mathbb P\Big(\max_{\nu \le J, l \le d, k \le n} |\bar\phi_{\nu l }' \bu_{k}| > t \Big| \bW \Big) \le 2Jdn \cdot e^{-t\theta} \mathbb E[e^{\theta \bar\phi_{\nu l }' \bu_{k}} | \bW] \le 2Jdn \cdot \exp\Big\{-t\theta + \frac 12 \theta^2 \bar\phi_{\nu l }' \bSigma \bar\phi_{\nu l }\Big\}.
$$
The right hand side can be further bounded by
$$
2Jdn \cdot \exp\Big(-t\theta + \frac 12 \theta^2 C_0 \|\bar\phi_{\nu l }\|^2\Big) \le 2Jdn \cdot \exp\Big(-t\theta +  \frac 12 p C_0 \theta^2 \phi_{\max}^2 \Big).
$$
Choose $\theta$ to minimize the upper bound and take expectation with respect to $\bW$, we obtain
$$
\mathbb P\Big(\max_{\nu \le J, l \le d, k \le n} |\bar\phi_{\nu l }' \bu_{k}| > t  \Big) \le 2Jdn \cdot \exp\Big\{ -\frac{t^2}{2pC_0 \phi_{\max}^2}\Big\}.
$$
Finally choose $t \asymp \phi_{\max} \sqrt{p \log(nJ)} $, the tail probability is arbitrarily small with a proper constant. So $\|\bU'\Phi(\bW)\|_{\max} = O_P(\phi_{\max}\sqrt{p\log (nJ)}  )$. The second part of the results follows similarly. Note $\| \bU' \Phi(\bW)\bB\|_{\max} \le \| \bU' \bG(\bW)\|_{\max} + \|\bU' \bR(\bW)\|_{\max}$ and the first term dominates. So the same derivation gives
$$
\mathbb P\Big( \| \bU' \bG(\bW)\|_{\max} > t\Big) \le 2Kn \cdot \exp \Big\{-\frac{t^2}{2 C_0\|\bar g_k\|^2}\Big\},
$$
where $\bar g_k = (g_k(\bW_1), \dots, g_k(\bW_p))$. $\|\bar g_k\|^2 = O_p(p)$ since it is
assumed eigenvalues of $p^{-1} \bG(\bW)'\bG(\bW)$ is bounded almost surely. Hence, $\| \bU' \Phi(\bW)\bB\|_{\max}=O_P(\sqrt{p\log n} )$.

(iii) $\|\Phi(\bW)'\bU\bF\|_{\max} = \max_{\nu \le J, l \le d, k \le K} |\sum_{j=1}^p \sum_{i = 1}^n \phi_{\nu}(W_{jl})u_{ji} f_{ik}|$. Using Chernoff bound again, we get
$$
\mathbb P\Big( \max_{\nu \le J, l \le d, k \le K} \Big|\sum_{j=1}^p \sum_{i = 1}^n \phi_{\nu}(W_{jl})u_{ji} f_{ik}\Big| > t\Big) \le 2JdK\cdot e^{-t\theta} \cdot \mathbb E\Big[e^{\theta \sum_{t=1}^n f_{tk} \bar\phi_{\nu l}' \bu_{t} }\Big] .
$$
Since $\sum_{t=1}^n f_{tk} \bar\phi_{\nu l}' \bu_{t} | \bF \sim \text{sub-Gaussian}(0, \sum_{t=1}^n f_{tk}^2 \bar\phi_{\nu l}' \bSigma \bar\phi_{\nu l}) = \text{sub-Gaussian}(0, n \bar\phi_{\nu l}' \bSigma \bar\phi_{\nu l})$, the right hand side is easy to bound by first conditioning on $\bF$.
$$
\mathbb E\Big[e^{\theta \sum_{t=1}^n f_{tk} \bar\phi_{\nu l}' \bu_{t} }\Big] \le \mathbb E\Big[ \exp \Big( \frac 12 n \theta^2 \bar\phi_{\nu l}' \bSigma \bar\phi_{\nu l} \Big) \Big] \le E\Big[ \exp \Big( \frac 12  npC_0 \phi_{\max}^2 \theta^2 \Big) \Big].
$$
Therefore, choosing $\theta = \frac{t}{npC_0\phi_{\max}^2}$, we have
$$
\mathbb P\Big( \|\Phi(\bW)'\bU\bF\|_{\max} > t \Big) \le 2JdK \cdot \exp\Big\{-t\theta +
\frac{1}{2} npC_0 \phi_{\max}^2\theta^2\Big\} = 2JdK \exp\Big\{-\frac{t^2}{2npC_0\phi_{\max}^2}\Big\}.
$$
So we conclude $ \|\Phi(\bW)'\bU\bF\|_{\max}  = O_p(\phi_{\max} \sqrt{np\log J})$. By similar derivation as in (ii), we also have $\|\bB'\Phi(\bW)'\bU\bF\|_{\max}$ and $\|\bG(\bW)'\bU\bF\|_{\max}$ are both of order $O_P(\sqrt{np})$.

\end{proof}

\begin{lem}\label{tla.5}
(i) $\|\bU\bU'\bLambda\|_{\max}=O_P(\sqrt{np\log p}+n\|\bSigma\|_1)$, \\
(ii) $\|\bU\bU'\Phi(\bW)\|_{\max}=O_P(\phi_{\max}(\sqrt{np\log p} + n\|\bSigma\|_1))$ and $\|\bU\bU'\Phi(\bW)\bB\|_{\max}=O_P(\sqrt{np\log p} + nJ\phi_{\max} \|\bSigma\|_1)$.
\end{lem}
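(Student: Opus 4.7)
I would prove all three inequalities via a unified template: express every entry of $\bU\bU'\bM$ as a sum of products of sub-Gaussian variables, split off the mean, and bound the centered remainder by Hanson-Wright, then close with a union bound. For each row index $j$ and each generic $p$-vector $\mathbf{v}$ that in the three cases is $\blambda_k$, the $(\nu,l)$-th column $\bar{\phi}_{\nu l}$ of $\Phi(\bW)$, or $\Phi(\bW)\mathbf{b}_k$, write
\begin{equation*}
(\bU\bU'\mathbf{v})_j=\sum_{t=1}^{n}u_{jt}(\mathbf{v}'\bu_{t}),\qquad \mathbb{E}(\bU\bU'\mathbf{v})_j=n(\bSigma\mathbf{v})_j,
\end{equation*}
so the problem reduces to controlling the mean and the centered sum uniformly over $j$ and over the index labeling $\mathbf{v}$.

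The mean is handled by the deterministic estimate $|n(\bSigma\mathbf{v})_j|\le n\|\bSigma\|_1\|\mathbf{v}\|_\infty$. I would plug in $\|\blambda_k\|_\infty\le \|\bLambda\|_{\max}$ (Assumption \ref{ass2.1.1}(ii); the additional $\sqrt{\log p}$ is absorbed, as is customary in the paper) for (i); $\|\bar{\phi}_{\nu l}\|_\infty\le\phi_{\max}$ for the first bound of (ii); and $\|\Phi(\bW)\mathbf{b}_k\|_\infty\le Jd\,\phi_{\max}\max_{\nu,k,l}|b_{\nu,kl}|=O(J\phi_{\max})$ (triangle inequality plus Assumption \ref{ass2.4}(ii)) for the second bound of (ii). These produce precisely the deterministic pieces $n\|\bSigma\|_1$, $n\phi_{\max}\|\bSigma\|_1$, $nJ\phi_{\max}\|\bSigma\|_1$.

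For the deviation, condition on $\bW$ (and on $\bLambda$ in (i)) and represent the centered sum as the quadratic form
\begin{equation*}
\sum_{t=1}^{n}\bigl[u_{jt}(\mathbf{v}'\bu_{t})-(\bSigma\mathbf{v})_j\bigr]=\vecc(\bU)'\bM_{j,\mathbf{v}}\vecc(\bU)-\mathbb{E}\bigl[\vecc(\bU)'\bM_{j,\mathbf{v}}\vecc(\bU)\bigr],
\end{equation*}
where $\bM_{j,\mathbf{v}}=\bI_n\otimes\tfrac12(\be_j\mathbf{v}'+\mathbf{v}\be_j')$, so that $\|\bM_{j,\mathbf{v}}\|_F^2\asymp n\|\mathbf{v}\|_2^2$ and $\|\bM_{j,\mathbf{v}}\|_2\lesssim \|\mathbf{v}\|_2$. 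Since $\vecc(\bU)$ is sub-Gaussian with bounded proxy (Assumption \ref{ass2.2} with $\|\bSigma\|_2\le C_0$), Hanson-Wright yields
\begin{equation*}
P\Bigl(\Bigl|\textstyle\sum_{t}\bigl[u_{jt}(\mathbf{v}'\bu_t)-(\bSigma\mathbf{v})_j\bigr]\Bigr|>\tau\Bigr)\le 2\exp\!\bigl(-c\min\{\tau^2/(n\|\mathbf{v}\|_2^2),\,\tau/\|\mathbf{v}\|_2\}\bigr).
\end{equation*}
I then substitute $\|\blambda_k\|_2^2=O_P(p)$ (from pervasiveness in Assumption \ref{ass2.1.1}(i)), $\|\bar{\phi}_{\nu l}\|_2^2\le p\phi_{\max}^2$, and $\|\Phi(\bW)\mathbf{b}_k\|_2^2=O_P(p)$ (because $\Phi(\bW)\bB=\bG(\bW)-\bR(\bW)$ with $\|\bg_k\|_2^2=O(p)$ by Assumption \ref{ass2.1.2}(i) and $\|\bR(\bW)\|_F^2=O(pJ^{-\kappa})$), take $\tau$ of order $\sqrt{np\log p}$, $\phi_{\max}\sqrt{np\log p}$, and $\sqrt{np\log p}$, and union-bound over the $O(pK)$, $O(pJd)$, $O(pK)$ entries. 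The sub-exponential arm of Hanson-Wright is dormant under $\log p=O(n)$, so the sub-Gaussian arm delivers the stated stochastic rates.

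The main subtlety -- and the only step that is not mechanical -- is the rate distinction within (ii): the columns of $\Phi(\bW)$ have Euclidean norm $O(\sqrt{p}\,\phi_{\max})$, producing the extra $\phi_{\max}$ factor in $\|\bU\bU'\Phi(\bW)\|_{\max}$, whereas $\Phi(\bW)\mathbf{b}_k$ coincides with $\bg_k$ up to a sieve remainder of Frobenius order $O(\sqrt{p}\,J^{-\kappa/2})$ and thus has Euclidean norm $O_P(\sqrt{p})$ free of $\phi_{\max}$, which is what allows the sharper $\sqrt{np\log p}$ stochastic rate in $\|\bU\bU'\Phi(\bW)\bB\|_{\max}$. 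Verifying this careful norm accounting is the only point that requires attention; the remainder is a direct application of Hanson-Wright together with the union bound.
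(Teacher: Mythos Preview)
Your proposal is correct and follows essentially the same strategy as the paper: split off the mean $n(\bSigma\mathbf v)_j$ (bounded via $\|\bSigma\|_1\|\mathbf v\|_\infty$) and control the centered bilinear form by quadratic-form concentration, with the key norm accounting $\|\blambda_k\|_2,\|\Phi(\bW)\bb_k\|_2=O_P(\sqrt p)$ versus $\|\bar\phi_{\nu l}\|_2\le\sqrt p\,\phi_{\max}$ exactly as you describe. The only cosmetic difference is that the paper, rather than invoking Hanson--Wright on $\vecc(\bU)$ (whose coordinates are not independent within a column), whitens $\bu_t=\bA\bv_t$, applies the polarization identity $\tilde\ba_j'\bv_t\bv_t'\tilde\bd_k=\tfrac14\big[((\tilde\ba_j+\tilde\bd_k)'\bv_t)^2-((\tilde\ba_j-\tilde\bd_k)'\bv_t)^2\big]$, and uses Bernstein for sums of independent sub-exponentials across $t$; since your $\bM_{j,\mathbf v}=\bI_n\otimes M_0$ is block diagonal with rank-two blocks, your quadratic form is exactly such a sum, so the two arguments coincide once unpacked.
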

\begin{proof}
(i) $\|\bU\bU'\bLambda\|_{\max} \le \max_{j,k} |\sum_{t = 1}^n u_{jt} \bu_t'\blambda_k - n \sum_{j' = 1}^p \sigma_{jj'} \lambda_{j'k}| + n \max_{j,k} \sum_{j' = 1}^p |\sigma_{jj'}| | \lambda_{j'k}|$. The second term is $O(n\|\bSigma\|_1)$. So it suffices to focus on the first term. Let $\bSigma = \bA\bA'$ and $\bu_t = \bA \bv_t$ so that $\Var(\bv_t) = \bI$. Write $\bA' = (\ba_1,\dots, \ba_p)$, so we have $u_{jt} = \ba_j'\bv_t$. Also denote $\bd_k = \bA' \blambda_k$. Thus $u_{jt} \bu_t'\blambda_k = \ba_j'\bv_t \bv_t' \bd_k$ and $\sum_{j' = 1}^p \sigma_{jj'} \lambda_{j'k} = \ba_j'\bd_k$.
\begin{equation} \label{eqF.1}
\mathbb P\Big( \max_{j,k} \Big|\sum_{t = 1}^n  (\ba_j'\bv_t \bv_t' \bd_k - \ba_j'\bd_k)\Big| > s \Big) \le pK\mathbb P\Big(\Big|\sum_{t = 1}^n  (\widetilde\ba_j'\bv_t \bv_t' \widetilde\bd_k - \widetilde\ba_j'\widetilde\bd_k)\Big| > \frac{s}{\max_{j,k}\|\ba_j\|\|\bd_k\|} \Big)\,,
\end{equation}
where $\widetilde\ba_j$ and $\widetilde\bd_k$ are two unit vectors of dimension $p$. We will bound the right hand side with arbitrary unit vectors $\widetilde\ba_j$ and $\widetilde\bd_k$.
\begin{align*}
& \mathbb P \Big(\Big|\sum_{t = 1}^n  \widetilde\ba_j'\bv_t \bv_t' \widetilde\bd_k - n\widetilde\ba_j'\widetilde\bd_k\Big| > s \Big) \\
& \le \mathbb P\Big(\Big|\sum_{t = 1}^n  ((\widetilde\ba_j + \widetilde\bd_k)'\bv_t)^2 - n \|\widetilde\ba_j + \widetilde\bd_k\|^2\Big| > 2s \Big) + \mathbb P\Big(\Big|\sum_{t = 1}^n  ((\widetilde\ba_j - \widetilde\bd_k)'\bv_t)^2 - n \|\widetilde\ba_j - \widetilde\bd_k\|^2\Big| > 2s \Big)\,.
\end{align*}
Note that $(\widetilde\ba_j + \widetilde\bd_k)'\bv_t \sim \text{subGaussian}(0, \|\widetilde\ba_j + \widetilde\bd_k\|^2)$ and $\|\widetilde\ba_j + \widetilde\bd_k\|^2 \le 4$. By Bernstein inequality, we have for constant $C > 0$,
$$
\mathbb P \Big(\Big|\sum_{t = 1}^n  (\widetilde\ba_j'\bv_t \bv_t' \widetilde\bd_k - \widetilde\ba_j'\widetilde\bd_k)\Big| > s \Big) \le 2\exp\Big( -C \min(s^2/n, s)\Big)\,.
$$
Choose $s = C \sqrt{n\log p} \max_{jk} \|\ba_j\|\|\bd_k\|$ in (\ref{eqF.1}), we can easily show that the exception probability is small as long as $C$ is large enough. Therefore, noting $\max_{jk} \|\ba_j\|\|\bd_k\| \le C\max_{k} \|\blambda_k\|$, $\max_{j,k} |\sum_{t = 1}^n u_{jt} \bu_t'\blambda_k  - n \sum_{j' = 1}^p \sigma_{jj'} \lambda_{j'k}| =  O_P( \sqrt{n\log p} \max_{k} \|\blambda_k\|) = O_P(\sqrt{np\log p})$. Finally $\|\bU\bU'\bLambda\|_{\max} = O_P(\sqrt{np\log p}+n\|\bSigma\|_1)$.

(ii) The rates of $\|\bU\bU'\Phi(\bW)\|_{\max}$ and $\|\bU\bU'\Phi(\bW)\bB\|_{\max}$ can be similarly derived as (i). Denote $\bPhi_{vl} = (\phi_v(W_{1l}), \dots, \phi_v(W_{pl}))'$, so
\begin{align*}
\|\bU\bU'\Phi(\bW)\|_{\max} &\le \max_{j,v,l} \Big|\sum_{t = 1}^n u_{jt} \bu_t'\bPhi_{vl} - n \sum_{j' = 1}^p \sigma_{jj'} \phi_v(W_{j'l})\Big| + n \max_{j,v,l} \sum_{j' = 1}^p |\sigma_{jj'}| |\phi_v(W_{j'l})| \\
& = O_P(\sqrt{n\log p} \max_{v,l} \|\bPhi_{vl}\| + n\phi_{\max} \|\bSigma\|_1) = O_P(\phi_{\max}(\sqrt{np\log p} + n\|\bSigma\|_1))\,.
\end{align*}
Denote the $k^{th}$ column of $\Phi(\bW)\bB$ by $(\bPhi\bB)_k$, we have
\begin{align*}
\|\bU\bU'\Phi(\bW)\bB\|_{\max}  &\le \max_{j,k} \Big|\sum_{t = 1}^n u_{jt} \bu_t'(\bPhi\bB)_k - n \sum_{j' = 1}^p \sigma_{jj'} (\bPhi\bB)_{j'k})\Big| + n \max_{j,k} \sum_{j' = 1}^p |\sigma_{jj'}| |(\bPhi\bB)_{j'k}| \\
& = O_P(\sqrt{n\log p} \max_{k} \|(\bPhi\bB)_k\| + nJ\phi_{\max} \|\bSigma\|_1) = O_P(\sqrt{np\log p} + nJ\phi_{\max} \|\bSigma\|_1)\,,
\end{align*}
where we use $\max_{k} \|(\bPhi\bB)_k\| \le \|\bPhi\bB\|_F = O_P(\sqrt{p})$.
\end{proof}

\end{document}